	\def\jan#1{\textcolor{red}{JAN: #1}}
    \def\daniel#1{\textcolor{blue}{Daniel: #1}}
    \def\jan#1{}
    \def\daniel#1{}
\declaretheorem[numberwithin=section,refname={Theorem,Theorems},Refname={Theorem,Theorems}]{theorem}
\declaretheorem[numberlike=theorem]{lemma}
\declaretheorem[numberlike=theorem]{corollary}
\declaretheorem[numberlike=theorem]{definition}
\declaretheorem[numberlike=theorem,Refname={Fact,Facts}]{fact}
\renewcommand{\th}{^{th}}
\newcommand{\tO}{\tilde{O}}
\newcommand{\hO}{\widehat{O}}
\newcommand{\R}{\mathbb{R}}
\renewcommand{\tilde}{\widetilde}
\renewcommand{\hat}{\widehat}
\newcommand{\poly}{\operatorname{poly}}
\newcommand{\polylog}{\operatorname{polylog}}
\newcommand{\nnz}{\operatorname{nnz}}%
\xdef\csname m\x\endcsname{\noexpand\mathbf{\x}}
\xdef\csname om\x\endcsname{\noexpand\overline{\noexpand\mathbf{\x}}}
\xdef\csname c\x\endcsname{\noexpand\mathcal{\x}}
\newcommand{\otau}{\noexpand{\overline{\tau}}}
\newcommand{\omu}{\noexpand{\overline{\mu}}}
\newcommand{\osigma}{\noexpand{\overline{\sigma}}}%
\newcommand{\os}{\noexpand{\overline{s}}}
\newcommand{\ot}{\noexpand{\overline{t}}}
\newcommand{\og}{\noexpand{\overline{g}}}
\newcommand{\ou}{\noexpand{\overline{u}}}
\newcommand{\ov}{\noexpand{\overline{v}}}
\newcommand{\ow}{\noexpand{\overline{w}}}
\newcommand{\ox}{\noexpand{\overline{x}}}
\newcommand{\oy}{\noexpand{\overline{y}}}
\newcommand{\oz}{\noexpand{\overline{z}}}
\newcommand{\oc}{\noexpand{\overline{c}}}
\newcommand{\init}{\mathrm{(init)}}%
\newcommand{\target}{\mathrm{(end)}}%
\newcommand{\final}{\mathrm{(final)}}%
\renewcommand{\paragraph}{%
	\@startsection{paragraph}{4}%
	{\z@}{1.25ex \@plus 1ex \@minus .2ex}{-1em}%
	{\normalfont\normalsize\bfseries}%
}
\title{Faster High Accuracy Multi-Commodity Flow\\from Single-Commodity Techniques}
\author[1]{Jan van den Brand}
\affil[1]{Georgia Institute of Technology}
\author[1]{Daniel Zhang}
\begin{document}
\pagenumbering{roman}
\maketitle
\begin{abstract}
    
Since the development of efficient linear program solvers in the 80s, all major improvements for solving multi-commodity flows to high accuracy came from improvements to general linear program solvers.
This differs from the single commodity problem (e.g.~maximum flow) where all recent improvements also rely on graph specific techniques such as graph decompositions or the Laplacian paradigm (see e.g.~\cite{CohenMSV17,KathuriaLS20,BrandLL+21,ChenKL+22}).

This phenomenon sparked research to understand why these graph techniques are unlikely to help for multi-commodity flow. 
\cite{KyngZ20} reduced solving multi-commodity Laplacians to general linear systems and \cite{DingKZ22} showed that general linear programs can be reduced to 2-commodity flow.
However, the reductions create sparse graph instances, so improvement to multi-commodity flows on denser graphs might exist. %

We show that one can indeed speed up multi-commodity flow algorithms on non-sparse graphs using graph techniques from single-commodity flow algorithms.
This is the first improvement to high accuracy multi-commodity flow algorithms that does not just stem from improvements to general linear program solvers. %
In particular, using graph data structures from recent min-cost flow algorithm by \cite{BrandLL+21} based on the celebrated expander decomposition framework, we show that 2-commodity flow on an $n$-vertex $m$-edge graph can be solved in $\tilde{O}(\sqrt{m}n^{\omega-1/2})$ time for current bounds on fast matrix multiplication $\omega \approx 2.373$, improving upon the previous fastest algorithms with $\tilde{O}(m^\omega)$ \cite{CohenLS19} and $\tilde{O}(\sqrt{m}n^2)$ \cite{KapoorV96} time complexity.
For general $k$ commodities, our algorithm runs in $\tilde{O}(k^{2.5}\sqrt{m}n^{\omega-1/2})$ time.

\end{abstract}
\newpage
\tableofcontents
\newpage
\pagenumbering{arabic}

\section{Introduction}
The multi-commodity flow problem arises when more than one commodity must be routed in a shared network.
These types of problems occur often in traffic-, distribution-, and communication-systems, as well as in network and VLSI design. %
Formally, we are given a (possibly directed) graph $G=(V,E)$ with edge capacities $u\in\R_{>0}^E$ and multiple source-sink pairs (one for each commodity) and the task is to route a flow for each source-sink pair, such that the sum of the flows do not exceed the edge capacities.
In the maximum through-put variant, the task is to maximize the sum of flows routed between the pairs, whereas in the minimum cost variant, each source-sink pair has a demand that must be satisfied and has costs assigned to the edges. The task here is to route the flow such that it minimizes the cost.

There exist various algorithms that solve these problems to $(1+\epsilon)$ accuracy 
\cite{%
LeightonMPSST95,%
Fleischer00,%
GargK07,%
Madry10,%
KelnerMP12,%
Sherman13,%
KelnerLOS14,%
Peng16}.
On undirected graphs, the problem can even be solved in nearly-linear time \cite{%
Sherman13,%
KelnerLOS14,%
Peng16,
ChenY23}.
However, all these algorithms are either low-accuracy (i.e.~their complexity has a $\poly(1/\epsilon)$ dependence), or they consider relaxations such as flows without capacities minimizing a mixed $\ell_{q,p}$-norm for constant $p,q$ \cite{ChenKL+22}.
To obtain high-accuracy solutions (with $\polylog(1/\epsilon)$ time complexity dependence), the multi-commodity flow problems can be phrased as a linear program (LP) and thus solved via efficient linear programming solvers. %
For instance, on sparse graphs, the current fastest algorithm for multi-commodity flow are the recent matrix multiplication time linear program solvers \cite{CohenLS19,LeeSZ19,Brand20,JiangSWZ21}, running in $\tilde{O}((km)^\omega)$ time\footnote{%
The matrix exponent $m^\omega$ is the number of operations required to multiply two $m\times m$ matrices. Currently $\omega\le2.373$ \cite{AlmanW21}. We write $\omega(\cdot,\cdot,\cdot)$ for the rectangular matrix multiplication exponent \cite{GallU18}. Here $n^{\omega(a,b,c)}$ is the complexity for multiplying an $n^a\times n^b$ matrix by $n^b \times n^c$ matrix.}
on $m$-edge graphs with $k$ commodities.
Before the advent of polynomial time LP solvers, 
Hu developed a polynomial time algorithm for multi-commodity flow on undirected graphs \cite{Hu63}. 
But since linear programs became efficient to solve, all improvements for solving multi-commodity flow to high accuracy stem from improvements to LP solvers
\cite{Karmarkar84,Renegar88,Vaidya87,Vaidya89,KapoorV96,LeeS14,LeeS15,CohenLS19,LeeSZ19,Brand20,JiangSWZ21}. 
In particular, no improvements were made via graph specific techniques.

This differs a lot from the developments that occurred for single-commodity flows (i.e.~the classical maximum flow and min-cost flow problems).
While recent developments rely on continuous optimization methods similar to LP solvers, they do use many powerful graph specific techniques \cite{DaitchS08,Madry13,CohenMSV17,Madry16,AxiotisMV21,LeeS14,LiuS20,KathuriaLS20,BrandLN+20,BrandLL+21,DongGGLPSY22,BrandGJLLPS22,GaoLP21,ChenKL+22}.
For example, while the central path method reduces solving general LPs to solving a sequence of linear systems,
applying this method to single-commodity flows results in a sequence of Laplacian systems.
Such a system can be solved in near-linear time using Laplacian solvers \cite{SpielmanT04}.
This powerful framework of combining continuous optimization methods with Laplacian solvers is often referred to as ``Laplacian paradigm''.
Additionally, even the central path method itself can be accelerated for the special case when the LP is a single-commodity flow instance \cite{Madry13,Madry16,CohenMSV17,LiuS20,AxiotisMV21,KathuriaLS20}.
Every recent improvement for single-commodity flows stems from combining continuous optimization techniques with graph specific tools.
Yet somehow, none of these tools translate from single-commodity to multi-commodity problems. Even when extending to just two commodities, these tools have not found any application in the high accuracy regime.
This has raised the following question:
\begin{center}
    \emph{Can single-commodity techniques be used to improve high-accuracy multi-commodity flow algorithms?}
\end{center}
This sparked research to better understand why so far these graph techniques could not help for the high accuracy multi-commodity flow problem. 
Kyng and Zhang showed that an equivalent of the Laplacian paradigm for multi-commodity flows is unlikely \cite{KyngZ20}. 
They showed that any linear system can be reduced to a 2-commodity Laplacian. 
So despite near-linear time solvers for Laplacian systems, such solvers are unlikely to exist for 2-commodity Laplacians unless we can solve all linear systems in near linear time. 
Thus it is unlikely that we can speed up the central path method in a similar way as the Laplacian paradigm did for single-commodity flow.
While this is only an argument against one specific algorithmic approach, there is also evidence that 2-commodity flows in general are hard.
Itai showed that any linear program can be reduced to exact 2-commodity flow \cite{Itai78}, and recently Ding, Kyng, Zhang \cite{DingKZ22} extended this result to the high-accuracy regime.
Thus any algorithm that solves 2-commodity flow to high accuracy can also be used to solve general linear programs to high accuracy.
The reduction by \cite{DingKZ22} produces a 2-commodity flow instance on a sparse graph
where the number of edges corresponds to the number of non-zeros in the LP. %
So, if one can get an algorithm using graph techniques that is competitive with general purpose linear program solvers on the 2-commodity flow problem on sparse graphs, 
then this algorithm would also be competitive on general sparse LPs, despite the LP not having any kind of graph structure.
With these insights from \cite{Itai78,DingKZ22,KyngZ20}, it seems that improvements to multi-commodity flow via graph techniques are impossible. However, since these reductions produce sparse graphs, single-commodity techniques might still lead to improvements on dense graphs.

In this work we show that this is indeed possible.
We obtain the first
improvement to the multi-commodity flow problem via graph specific single-commodity techniques since the development of efficient LP solvers in the 80s.
This is the first improvement that does not just stem from improvements to general linear program solvers.

Our algorithm combines algebraic methods from general purpose linear program solvers \cite{CohenLS19,Brand20,JiangSWZ21} with the dynamic expander decomposition framework from dynamic graph theory \cite{HuaKGW22,BernsteinBGNSS022,NanongkaiS17,WulffNilsen17,NanongkaiSW17,SaranurakW19}.
This graph technique was also used in all recent improvements to single-commodity flows (either directly or inside a Laplacian solver) \cite{DaitchS08,Madry13,CohenMSV17,Madry16,AxiotisMV21,LeeS14,LiuS20,KathuriaLS20,BrandLN+20,BrandLL+21,DongGGLPSY22,BrandGJLLPS22,GaoLP21,ChenKL+22}.

\subsection{High Accuracy Results}

As mentioned before, multi-commodity flows can be solved to high-accuracy using linear program solvers. 
For an $n$-node $m$-edge graph with $k$ commodities, the respective linear program has $km$ variables $kn+m$ equality-constraints. %
Using state-of-the-art linear program solvers, the following complexities can be achieved for solving multi-commodity flow with polynomially bounded edge capacities $U\le\poly(n)$ and polynomially bounded error $\epsilon^{-1} \le \poly(n)$: 
$\tilde O((km)^\omega)$ time \cite{CohenLS19},
$\tilde O(k^{2.5}\sqrt{m}n^2)$ time \cite{KapoorV96},
$\tilde O((kn+m)^{2.5})$ time \cite{LeeS14,LeeS15}\footnote{For single-commodity flows, Lee and Sidford's algorithm \cite{LeeS14} runs in $\sqrt{n}$ iterations. However, for multi-commodity flow the number of iterations of \cite{LeeS14} is $\sqrt{m+kn} > \sqrt{m}$, see e.g.~\cite[Section 7.4]{LeeS15}.}.

Using expander decomposition, we improve these complexities on graphs that are at least slightly dense. This is the first improvement to multi-commodity flow that stems from single-commodity techniques. 
All previous improvements \cite{Karmarkar84,Renegar88,Vaidya87,Vaidya89,KapoorV96,LeeS14,LeeS15,CohenLS19} stem from improvements to general LP solvers.

A conceptual insight of our work is that graph techniques can lead to faster multi-commodity flow algorithms despite increasing evidence to the contrary.
Our improvements are possible, because previous impossibility results hold only for sparse graphs.
We show that dense multi-commodity flow is more combinatorial in nature than previously thought.

\begin{restatable}{theorem}{throughput}\label{thm:intro:throughput}
For any $0\le\mu\le 1$, 
given a $k$-commodity instance on graph $G=(v,E)$ with integer edge capacities $u\in[0,U]^E$ and source sink pairs $(s_1,t_1),...,(s_k,t_k)\in V\times V$, we can solve maximum through-put commodity flow \emph{deterministically} up to additive error $\epsilon>0$ in time
$$
\tilde{O}(k^{2.5}\sqrt{m}(n^{\omega-1/2}+n^{\omega(1,1,\mu)-\mu/2}+n^{1+\mu}+n\log( U/\epsilon)) \log\frac{U}{\epsilon}).
$$
For current bounds on $\omega(\cdot,\cdot,\cdot)$ \cite{AlmanW21,GallU18}, and polynomially bounded $u,\epsilon^{-1}$, this is $$
\tilde{O}(k^{2.5}\sqrt{m}n^{\omega-1/2}).
$$
\end{restatable}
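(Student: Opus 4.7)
The plan is to combine a robust interior point method (IPM) with dynamic data structures based on expander decomposition, so that the IPM executes in $\tilde O(\sqrt{m})$ iterations and each iteration amortizes to $\tilde O(k^{2.5}(n^{\omega-1/2}+\ldots))$ arithmetic operations. First I would set up the maximum-throughput $k$-commodity LP with $km$ flow variables $f_1,\ldots,f_k\in\R^E$, the block-diagonal flow-conservation constraints $B f_i=d_i$ (one incidence matrix $B$ per commodity, giving $kn$ equalities), and the $m$-dimensional coupling capacity constraints $\sum_i f_i + s = u$, together with non-negativity barriers. The number of non-trivial barrier terms is $\Theta(km+m)$, but following the Kapoor--Vaidya style analysis the iteration count can be bounded by $\tilde O(\sqrt{m})$ with a $k^{O(1)}$ overhead absorbed into the per-iteration work, explaining the $\sqrt{m}$ (rather than $\sqrt{km}$) factor in the target complexity.

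Second, I would invoke a robust central path / weighted path-following scheme as in Cohen--Lee--Sidford and v.d.~Brand, so that the diagonal weight matrix $H$ of the Hessian changes slowly in an appropriate norm. Each Newton step then reduces to solving a linear system $M^\top H M\,x = v$, where $M$ is the (blocked) constraint matrix; and the robustness property means that between consecutive iterations $H$ only undergoes low-rank-ish updates on average, which is exactly the regime in which dynamic inverse maintenance gives a speedup.

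Third, I would exploit the block structure. Eliminating the $m$-dimensional capacity/slack block by Schur complement turns $M^\top H M$ into a $kn\times kn$ matrix consisting of $k$ graph-Laplacian-like diagonal blocks $B D_i B^\top$ plus a structured coupling term of rank at most $m$. To maintain (implicitly) the inverse of this matrix across $\sqrt{m}$ IPM iterations, I would plug the slowly-changing weights into the expander-decomposition based data structure of Brand--Lee--Liu et al.~(the same tool that drove their min-cost flow algorithm), which supports entry-wise updates to the diagonal weights and returns approximate solutions of $B D B^\top x = b$ using heavy-hitter style sketches. Spread across the $\sqrt{m}$ iterations, Sankowski-style amortization gives $\tilde O(n^\omega/\sqrt{n}) = \tilde O(n^{\omega-1/2})$ per step per commodity, with the extra $n^{\omega(1,1,\mu)-\mu/2}+n^{1+\mu}$ terms arising from the rectangular matrix multiplication used to batch-apply the updates and to read out coordinates that the IPM needs.

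The main obstacle I anticipate is the coupling constraint: unlike the single-commodity case, $M^\top H M$ is not a graph Laplacian, so we cannot call a Laplacian solver as a black box, and the $k$ incidence blocks interact through the shared capacities. The Kyng--Zhang reduction tells us this obstruction is genuine for sparse graphs, so the approach must use denseness: I would resolve the coupling by maintaining the $m$-dimensional Schur-complement block with a general dense inverse-maintenance data structure while delegating the $k$ block-diagonal graph pieces to the expander-based structure, and argue that the per-iteration changes have the correct rank/sparsity profile (controlled by the IPM's robustness guarantee and by the $k^{2.5}$ factor) to make both data structures' amortized bounds kick in simultaneously. Combining the two, summing over $\tilde O(\sqrt{m})$ iterations, and paying a final $\log(U/\epsilon)$ for rounding the approximate IPM solution to an integral/feasible flow yields the stated bound.
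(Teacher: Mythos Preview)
Your high-level ingredients are right (robust IPM, Schur complement to shrink the system to $kn\times kn$, Sankowski-style inverse maintenance, expander-based data structures from \cite{BrandLL+21}), but you have the roles of the last two components swapped, and that swap is fatal.

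\textbf{Where the linear system is actually solved.} After the Schur complement the $m$-dimensional block you eliminate is the \emph{diagonal} matrix $\mD_\Sigma$, which is trivial; what remains is the $kn\times kn$ matrix $\mE$ in \eqref{eq:ipm:defineE}. This $\mE$ is precisely a multi-commodity Laplacian in the sense of \cite{KyngZ20}, so by their reduction it is as hard as a general linear system: you cannot ``delegate the $k$ block-diagonal graph pieces to the expander-based structure'' and fix up the coupling. The coupling term has rank $m$, so a Woodbury correction on top of $k$ Laplacian solves already costs $\Omega(m)$ per iteration, i.e.\ $\Omega(m^{3/2})$ total, which does not beat the target. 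The paper instead observes that, because each row of the incidence matrix has $O(1)$ nonzeros, a single change to an entry of $\ox$ or $\os$ changes only $O(k^2)$ entries of $\mE$. That sparsity is what lets \emph{general} inverse maintenance (\Cref{lem:inverse_maintenance}) on the full $kn\times kn$ matrix amortize to the $n^{\omega-1/2}+n^{\omega(1,1,\mu)-\mu/2}+n^{1+\mu}$ terms. No graph technique is used here.

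\textbf{Where expander decomposition actually enters.} The step you do not mention is the bottleneck: after solving for $v_1,\dots,v_k\in\R^n$, the IPM needs $\delta_x,\delta_s\in\R^{(k+1)m}$, or rather needs to know which of their $(k+1)m$ entries are large relative to $x,s$. Computing them explicitly is $\Theta(km)$ per iteration, again $\Omega((km)^{3/2})$ total. The paper phrases this as a heavy-hitter problem on the \emph{multi-commodity incidence matrix} (\Cref{def:intro:mcincidence}) and proves the key norm bound (\Cref{lem:vector:simplifiednorm}/\Cref{lem:ipm:normbounds}) showing that the task decomposes into $O(k^2)$ heavy-hitter instances on ordinary incidence matrices with controlled total $\ell_2$ mass. \emph{That} is where the expander-decomposition data structure of \cite{BrandLN+20} is invoked. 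Without this reduction and norm bound, your plan has no mechanism for projecting back to $km$ dimensions in sublinear time, and the claimed complexity does not follow.

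Two smaller points: the iteration count is $\tilde O(\sqrt{km})$, not $\tilde O(\sqrt{m})$ (the $\sqrt{k}$ combines with the $k^2$ data-structure overhead to give the stated $k^{2.5}$); and the throughput theorem is obtained from the min-cost version by adding cost-$(-1)$ edges $t_i\to s_i$, not by a separate analysis.
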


\cite{Itai78,KyngZ20,DingKZ22} show that hardness of multi-commodity flow is already given by just two commodities, so let us for now focus on $k=2$.
Our new algorithm improves upon previous work whenever the graph is at least slightly dense. %
On sparse graphs, we match the $\tilde{O}(m^\omega)$-time\footnote{Like our results, these algorithms run in $\tilde{O}(m^\omega)$ time for current bounds on matrix multiplication. In particular, \cite{CohenLS19,LeeSZ19,Brand20} have the same $\tilde{O}(n^{\omega-1/2}+n^{\omega(1,1,\mu)-\mu/2}+n^{1+\mu})$ dependency as our \Cref{thm:intro:throughput}.} algorithm by \cite{CohenLS19,LeeSZ19,Brand20,JiangSWZ21} up to %
the extra $\sqrt{m}n\log( U/\epsilon)$ term, though this term does not matter for polynomially bounded $U$ and $1/\epsilon$.
On dense graphs, we improve upon the $\tilde{O}(\sqrt{m}n^2 \log U/\epsilon)$-time algorithm by \cite{KapoorV96}.

We remark that the reduction of Ding, Kyng, Zhang \cite{DingKZ22} creates $2$-commodity flow instances on sparse graphs. So our algorithm does not improve upon general purpose LP solvers, but using their reduction, our algorithm can solve LPs within the same $\tilde{O}(m^\omega \log \delta^{-1})$-time complexity\footnote{These LP solvers return an approximate solution with additive $\delta$-error.} as other state-of-the-art LP solvers \cite{CohenLS19,LeeSZ19,Brand20,JiangSWZ21} on sparse LPs.
In particular, if we were able to improve upon \cite{CohenLS19,LeeSZ19,Brand20,JiangSWZ21} for $2$-commodity flow on sparse graphs, then our algorithm would improve general LP solvers as well.
This gives strong evidence that improvements via single-commodity techniques might only be possible for at least slightly dense graphs.

Ding, Kyng and Zhang \cite{DingKZ22} posed the open problem whether their reduction from LP
to 2-commodity flow could be modified to maintain the ``shape'' of the LP; 
If a tall sparse LP is given ($n$ rows, $m$ columns, $n \le m \le n^2$)
can one construct a multi-commodity flow instance on $\tilde{O}(m)$ edges and $\tilde{O}(n)$ vertices? %
If such a reduction exists, then our multi-commodity flow algorithm would beat state-of-the-art LP solvers when $m\le n^{1.254}$.
So either there is exciting opportunity to improve general LP solvers using graph theoretic expander decomposition techniques, or our result can be interpreted as evidence that such a reduction is not possible.

Our result can also solve the minimum cost variant of $k$-commodity flow.
\begin{restatable}{theorem}{mincost}\label{thm:intro:mincost}
For any $0\le\mu\le 1$, given a $k$-commodity instance on graph $G=(V,E)$ with integer edge capacities $u\in[0,U]^E$, integer costs $c_1,...,c_k\in[-C,C]^E$ and integer demands $d_1,...,d_k\in[-U,U]^V$, 
we can solve minimum-cost commodity flow \emph{deterministically} up to additive error $\epsilon>0$ in time
$$
\tilde{O}(k^{2.5}\sqrt{m}(n^{\omega-1/2}+n^{\omega(1,1,\mu)-\mu/2}+n^{1+\mu}+n\log( CU/\epsilon)) \log\frac{CU}{\epsilon}).
$$
For current bounds on $\omega$ \cite{AlmanW21,GallU18} and polynomially bounded $\epsilon^{-1},C,U$, this is $$
\tilde{O}(k^{2.5}\sqrt{m}n^{\omega-1/2}).
$$
The returned flows $f_1,...,f_k\in\R^E_{\ge0}$ satisfy the demands approximately with (here $\mB\in\R^{E\times V}$ is the incidence matrix)
$$
\|\mB^\top f_i -d_i \|_1 \le \epsilon \text{ for }i=1,...,k.
$$
\end{restatable}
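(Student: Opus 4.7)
The plan is to apply essentially the same interior point framework used for Theorem~\ref{thm:intro:throughput} directly to the min-cost linear program
$$\min_{f,\, s \ge 0} \sum_{i=1}^k c_i^\top f_i \quad \text{subject to} \quad \mB^\top f_i = d_i \text{ for } i=1,\ldots,k, \quad \sum_{i=1}^k f_i + s = u.$$
This LP has $km+m$ variables and $kn+m$ equality constraints, and its constraint matrix has the same block structure as the one appearing in the max-throughput LP: $k$ incidence-matrix blocks coupled through the shared capacity row. Consequently, the Newton system that arises in each IPM iteration has the same sparsity pattern and, up to the diagonal reweighting, the same form as the one solved in Theorem~\ref{thm:intro:throughput}.

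First I would instantiate a robust path-following IPM with the cost objective $\sum_i c_i^\top f_i$. Standard analyses give $\tilde O(\sqrt{km})$ iterations to reach additive duality gap $\epsilon$, and each iteration requires an approximate solve of a system $\mA^\top \mD\t \mA \Delta y = r\t$ for a diagonal $\mD\t$ that changes slowly across iterations. The expander-decomposition data structure developed for Theorem~\ref{thm:intro:throughput} handles exactly these solves; crucially, it depends only on $\mA$ and on the sequence of diagonal updates, not on the objective vector. Hence the per-iteration cost — and thus the dominant term in the running time — is inherited unchanged.

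Two adaptations are needed. First, the initial duality gap scales with $CU$ rather than $U$, so the iteration count picks up $\log(CU/\epsilon)$ in place of $\log(U/\epsilon)$; this is accommodated by the standard big-$M$ initialization, which supplies an interior-feasible primal--dual starting point at cost depending polynomially on $n,C,U$. Second, the demand-satisfaction guarantee $\|\mB^\top f_i - d_i\|_1 \le \epsilon$ must be extracted from the final primal iterate. Since the IPM terminates with $\|\mA x - b\|_2$ polynomially small in $\epsilon$, a cheap per-commodity rounding (for instance, correcting the residual along a fixed spanning tree of each commodity's subgraph) converts this into the claimed $\ell_1$ bound without affecting asymptotic complexity.

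The main obstacle will be verifying that the robust IPM used for Theorem~\ref{thm:intro:throughput} is stated with enough flexibility to accommodate arbitrary cost vectors in $[-C,C]^{E}$ per commodity while still supporting the approximate-Newton-step invariants (feasibility preservation, per-step progress, and the slowly-changing diagonal guarantee) that the expander-decomposition data structure relies upon. Once this flexibility is made explicit, Theorem~\ref{thm:intro:mincost} follows by invoking the proof of Theorem~\ref{thm:intro:throughput} with the objective replaced by $c$ and precision parameter $CU/\epsilon$ in place of $U/\epsilon$.
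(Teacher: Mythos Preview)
Your proposal has the right high-level idea but reverses the logical dependency and glosses over the one genuinely nontrivial step. In the paper, the min-cost version is the \emph{base} result and the throughput version (\Cref{thm:intro:throughput}) is derived from it by adding negative-cost $t_i\to s_i$ edges; so appealing to the proof of \Cref{thm:intro:throughput} is circular.

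More substantively, two of your adaptations do not go through as written. First, the robust IPM maintains exact primal feasibility $\cA^\top x = d$ throughout (see \Cref{lem:robust_ipm}); it is not the case that the final iterate has small $\|\cA^\top x - d\|_2$ which you then round. The demand error in the theorem statement arises for a different reason: to obtain an initial centered point one must modify the instance, and it is the restriction of the solution on the modified instance back to $G$ that creates the residual $\|\mB^\top f_i - d_i\|_1$.

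Second, ``standard big-$M$ initialization'' is precisely where the difficulty lies. A generic big-$M$ scheme adds artificial variables to each equality constraint, which destroys the $k$-commodity incidence structure that the expander-based heavy-hitter data structure requires. The paper's construction (\Cref{lem:initial_point}) instead adds one new vertex with bidirectional edges to all of $V$; this keeps $\mA$ an incidence matrix. It then sets the cost to $c' = 1/x'$ so that $(x',s')$ is trivially centered, runs the IPM \emph{in reverse} (increasing $t$) until $t$ is large enough that switching from $c'$ to the intended cost $c''$ (original costs on $E$, large penalty on the new edges) perturbs $s$ only multiplicatively, and finally runs the IPM forward. The large penalty on the new edges is what forces $\sum_i \sum_{e\in E'\setminus E}(\ox_i)_e \le \epsilon$ at optimality, yielding the $\ell_1$ demand bound after truncation. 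None of this is captured by ``big-$M$ plus spanning-tree rounding.''
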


We remark that the small error w.r.t.~the demands also occurs when solving multi-commodity flow with high accuracy LP solvers such as \cite{CohenLS19,LeeSZ19,Brand20,JiangSWZ21}.

\subsection{Techniques} %
\label{sec:intro:techniques}

Here we summarize the techniques used by our multi-commodity flow algorithm and how they relate to previous general purpose LP solvers. 
A more detailed outline of the techniques is given in \Cref{sec:overview}.
Our algorithm is based on general purpose LP solvers which we accelerate for the multi-commodity flow problem using single-commodity flow techniques.
To highlight our ideas we start with a quick summary on how general purpose LP solvers were accelerated for single-commodity flow and then explain how a similar improvement can be made for multi-commodity flow.

In the line of work of \cite{BrandLSS20,BrandLL+21,BrandLN+20}, it was shown that a linear program with constraint matrix of dimension $m\times n$ (with $m\ge n$) can be solved in $\tilde{O}(mn+n^{2.5})$ time.\footnote{For simplicity we hide $\log U/\epsilon$ terms (for single and multi-commodity flow) and $\log\delta^{-1}$ terms (for general LP solvers) in this overview.\label{footnote}}
This algorithm uses data structures that work for general matrices. 
If the linear program is a single-commodity flow problem, then the additional graph structure of the involved matrices allows for faster data structures using graph techniques. 
This resulted in an $\tilde{O}(m+n^{1.5})$ time algorithm (i.e.~saved a factor $n$) for single-commodity flow \cite{BrandLL+21}.
A natural question is if a similar speed up is possible when extending from single to multi-commodity flow.
For this, we will focus on each term ($mn$ and $n^{2.5}$) separately and discuss how they can be improved for multi-commodity flows.

The two terms $mn$ and $n^{2.5}$ generally come from the following two factors:
(i) How many iterations does the algorithm take, and the time required to solve a linear system in each iteration (this is where the $\tilde{O}(n^{2.5})$ term came from),
(ii) How much time is required for a certain ``heavy hitter problem'' which we describe next. (This is where the $\tilde{O}(mn)$ term came from)

\paragraph{Heavy Hitter}
LP solvers based on the central path method must %
compute a matrix vector product of the form $\mA h$ in each iteration.
Here $\mA$ is a fixed matrix %
and $h$ is some new vector given in each iteration. 
It was shown in \cite{BrandLSS20,BrandLN+20} that it suffices to just detect large entries of this product, a task they refer to as ``heavy hitters''.

In \cite{BrandLSS20}, this heavy hitter task requires $\tilde{O}(mn)$ total time over all iterations and \cite{BrandLN+20} shows that for the special case where $\mA$ is an edge-vertex-incidence matrix, it can be solved in just $\tilde{O}(m)$ total time over all iterations.

For our application of $k$-commodity flow, this matrix $\mA$ will be of the following form: 
The rows of $\mA$ are given by Kronecker-products. For each edge $(u,v)\in E$ we have $k+1$ rows in $\mA$ given by $\mW^{(u,v)} \otimes (e_u^\top - e_v^\top)$ for some matrix $\mW^{(u,v)}\in\R^{(k+1)\times(k+1)}_{>0}$ and $e_u,e_v \in \R^{n}$ standard unit vectors. 
So matrix $\mA$ can be seen as an incidence matrix where instead of real edge weights, we now have $(k+1)^2$-dimensional edge weights $\mW^{(u,v)}$. 
This weight matrix $\mW^{(u,v)}$ has a very specific structure given by the following definition. 
We show that the heavy hitter task is efficiently solvable on matrices of the following form.

\begin{definition}\label{def:intro:mcincidence}
Given graph $G=(V,E)$ and vectors $h^{(u,v)} \in \R^{k+1}_{>0}$ for each $(u,v) \in E$.
We call the matrix $\mM\in\R^{(k+1)m\times(k+1)n}$ with rows given by 
$$\left(\mH^{(u,v)})^{1/2}\left(\mI - \frac{\mathbf{1}_{k+1} h^\top}{\|h\|_1}\right)\right) \otimes (e_u^\top - e_v^\top) \in \R^{k+1\times (k+1)n}$$
a $k$-commodity incidence matrix.\footnote{%
The notion of a multi-commodity incidence matrix was defined in a different way in \cite{KyngZ20}, however $\mM^\top\mM$ is a multi-commodity Laplacian as defined in \cite{KyngZ20}. %
} 

Here $\mH^{(u,v)}$ is diagonal matrix with entries of $h^{(u,v)}$ on the diagonal, $\mI$ is identity matrix, $\mathbf{1}_{k+1}$ is the $k+1$ dimensional all 1-vector, and $\otimes$ is the Kronecker-product.
\end{definition}
Unlike heavy hitters from previous work \cite{BrandLSS20,BrandLL+21,BrandLN+20} where the matrix was fixed (though scaling of rows was admissible), the matrix from \Cref{def:intro:mcincidence} is allowed to change over time because we allow to update entries of any $d^{(u,v)}$.

Additionally, note that the matrix from \Cref{def:intro:mcincidence} has $O(k)$ non-zero entries per row.
Solving the heavy hitter problem on general sparse matrices is one of the major open problems in linear program solving. %
In particular, if heavy hitter could be solved in $O(km)$ total time on a matrix with $m$ rows and $k$ non-zero entries per row for $1\le k \le n$, one would obtain a nearly linear time solver for sparse and sufficiently tall linear programs \cite{BrandLL+21}. So far there are only nearly linear time algorithms for dense linear programs \cite{BrandLSS20,BrandLL+21}.
Our heavy hitter for \Cref{def:intro:mcincidence} is the first result for a structured special case of sparse matrices with $>2$ non-zero entries per row. Though with the additional structure imposed on $\mW^{(u,v)} = \left((\mH^{(u,v)})^{1/2}\left(\mI - \frac{\mathbf{1}_{k+1} h^\top}{\|d\|_1}\right)\right)$ it does not improve general linear program solvers.
It is interesting to see though, that the heavy hitter problem for multi-commodity incidence matrices seems to be easier than the heavy hitter problem for general sparse matrices.
For comparison: solving multi-commodity Laplacian systems (such as $\mM^\top \mM x = b$ for $\mM$ from \Cref{def:intro:mcincidence}) are as hard as general linear systems \cite{KyngZ20}. And solving linear programs are as hard as sparse multi-commodity flows \cite{Itai78,DingKZ22}. Yet, heavy hitter for multi-commodity incidence matrices (\Cref{def:intro:mcincidence}) do not seem to be as hard as general sparse heavy hitters.
In particular, using the special structure of $\mW^{(u,v)}$, we can reduce the heavy hitter task on multi-commodity incidence matrices to the heavy hitter task on classical incidence matrices.

The hardness of solving linear systems in multi-commodity Laplacians such as $\mM^\top \mM$ brings us to the next section.

\paragraph{Solving Linear Systems} %
In addition of detecting the heavy hitters of some matrix vector product $\mA h$, modern linear program solvers must also solve a linear system in each iteration.

We already stated that tall LPs of size $m\times n$ ($m \gg n$) could be solved in $\tilde{O}(mn+n^{2.5})$ time.
Here the $n^{2.5}$ term comes from the LP solver internally running  $\tilde{O}(\sqrt{n})$ iterations and in each iteration solving a linear system (i.e~$O(n^2)$ time to multiply a vector with some $n\times n$ sized matrix inverse).
In the case of single-commodity flow, this linear system is given by a Laplacian matrix and by using Laplacian solvers the system the $O(n^2)$ cost can be reduced to just $\tilde O(n)$ time.

A natural idea is to use same techniques to speed up multi-commodity flow. 
However, there are two problems for multi commodity flow, that result in a slow down.
First, the number of iterations is larger: 
while \cite{LeeS14,BrandLL+21,BrandLN+20,LeeS14} consider tall LPs of size $m\times n$, multi-commodity LPs have dimension $(k+1)m\times kn+m$, so the number of iterations is $\sqrt{kn+m}$.
Especially for small $k$ there is no substantial speed up from using \cite{LeeS14} compared to, let's say, the classic log-barrier method with $\sqrt{km}$ iterations.
For simplicity we will use the simpler but slightly slower $\sqrt{km}$ iteration algorithm.

The second and bigger issue is that for multi-commodity flow, we cannot use a Laplacian solver to solve the linear system. \cite{KyngZ20} shows that for multi-commodity flow, solving this linear system 
is as hard as solving a general linear system, even for just $k=2$ commodities. 
So it is unlikely that there is a fast solver using graph techniques similar to Laplacian solvers.

With these two issues, one would achieve only $\tilde{O}(\sqrt{m}n^2)$ time for $2$-commodity flow ($\sqrt{m}$ iterations and $n^2$ for solving the linear system), as was already achieved in \cite{KapoorV96}.

Our algorithm also cannot use graph techniques to speed up solving this linear system. 
However, we show that because of the sparsity and structure of the linear program, we can solve the linear system in $\tilde O(k^2 n^{\omega-1/2})$ amortized time per iteration instead of $O(k^2n^2)$.
This uses projection maintenance techniques from \cite{CohenLS19,LeeSZ19,Brand20,Brand21,JiangSWZ21} which, when applied directly to the linear system, would solve it 
in $\tilde{O}((km)^{\omega-1/2})$ time per iteration.
The improvement from $\tilde{O}((km)^{\omega-1/2})$ to $\tilde{O}(k^2n^{\omega-1/2})$ comes from two observations:
(i) The sparsity of the LP implies that any change to the linear system from one iteration to the next is very sparse.
(ii) Most constraints of the linear program are of form $\sum_{i=1}^k f_i\le u$ (i.e.~the capacity constraints of $k$-commodity flow). 
These constraints induce a certain structure in the linear system that can be exploited to reduce the size onto a smaller $kn\times kn$ sized linear system for $k$-commodity flow. 
In particular, one can transform the linear system to be of shape $\mM^\top\mM x = b$ for matrix $\mM$ being a multi-commodity incidence matrix as in \Cref{def:intro:mcincidence}.
Property (ii) was previously used by \cite{KapoorV96} to accelerate Vaidya's and Karmarkar's LP solvers \cite{Vaidya87,Karmarkar84} for $k$-commodity flow, which is why they achieved an $\tilde{O}(k^{2.5}\sqrt{m}n^2)$ time algorithm. 
We now use this idea together with (i) to accelerate the projection maintenance/inverse maintenance techniques from \cite{CohenLS19,LeeSZ19,Brand20,Brand21,JiangSWZ21}.
The next difficulty for obtaining a faster algorithm comes from projecting the smaller $kn$ dimensional solution back onto $O(km)$ dimensional space without paying $O(km)$ time per iteration (and thus leading to an at best $\tilde{O}((km)^{1.5}) \le \tilde{O}(k^{1.5}n^3)$ time algorithm, since we have $\tilde{O}(\sqrt{km})$ iterations).
We show that projecting the solution back onto $O(km)$ dimensional space is solved by the ``heavy hitter'' problem on multi-commodity incidence matrices.

\subsection{Related work}\label{sec:related}
While we focus on solving multi-commodity flow to high accuracy, related work also studies low accuracy regimes \cite{%
LeightonMPSST95,%
Fleischer00,%
GargK07,%
Madry10,%
KelnerMP12,%
Sherman13,%
KelnerLOS14,%
Peng16}. 
On undirected graphs, the ``maximum concurrent flow'' (i.e.~maximizing $F$ s.t.~each commodity has at least $F$ units of flow) for $k$ commodities can be reduced to $2^{k-1}$ single commodity flows \cite{RothschildW66a,ChenY23}. 
Maximum concurrent flow can be solved by our algorithms by adding edges $t_i\to s_i$ of capacity $F$ and negative cost for each of the $k$ source/sink pairs, then binary searching for the maximum $F$.
The maximum concurrent flow problem motivated the study of multi-commodity flow \emph{without} capacities minimizing a mixed $\ell_{q,p}$-norm. 
For $q\to1,p\to\infty$ this is equivalent to maximum concurrent flow. 
Chen and Ye \cite{ChenY23} show that this problem is solvable to high-accuracy in almost-linear time for $1\le p\le 2 \le q$ with $p=\tilde{O}(1)$, $1/(q-1) = O(1)$ by reducing it to single-commodity flow.
Like our work, \cite{ChenY23} also provides new directions for efficient high-accuracy multi-commodity flow algorithms, circumventing lower bound arguments \cite{Itai78,DingKZ22}.

\paragraph{Linear programs and generalizations}
Previous efficient multi-commodity flow algorithms all rely on linear programming techniques such as interior point methods which are then combined with data structures to reduce the time per iteration. 
This technique of combining interior point methods with data structures is commonly used in various LP solvers \cite{Karmarkar84,Vaidya87,Vaidya89,LeeS15,CohenLS19,LeeSZ19,Brand20,JiangSWZ21,BrandLSS20,BrandLL+21} and also finds application in generalization such as semi-definite programs \cite{JiangKLPS20,HuangJST21,JiangNW22} and general convex optimization via cutting planes \cite{LeeSW15,JiangLSW20}.
An especially powerful data structure framework here is the inverse maintenance \cite{Sankowski05,LeeS15,CohenLS19,LeeSZ19,Brand20,BrandNS19,Brand21,JiangSWZ21}.

Converting high accuracy solutions for multi-commodity flow to exact solutions can be done with the same techniques as converting high accuracy solution of general LPs to exact solutions, since multi-commodity flow is just a special case.
\cite{DadushNV20} presents a scheme that takes $O(m)$ approximate solutions and converts them to an exact solution.
Another technique is to run the algorithm for small enough $\epsilon>0$ such that one can round to the nearest corner of the polytope representing the feasible solution space, see e.g.~\cite{Renegar88,LeeS14,KapoorV96} for a discussion on this.

\paragraph{Dynamic Expander Decomposition}
The improvements of our algorithm use data structures based on dynamic expander decompositions \cite{SaranurakW19,HuaKGW22}.
This is a fundamental tool in the area of dynamic graph algorithms, previously used to maintain properties of dynamic graphs such as connectivity \cite{WulffNilsen17,NanongkaiS17,NanongkaiSW17,ChuzhoyGLNPS20,JinS21,GoranciRST21}, distances \cite{Chuzhoy21,ChuzhoyS21,ChuzhoyK19,BernsteinGS21,BernsteinGS20}, approximate flows \cite{ChuzhoyGLNPS20,ChuzhoyK19,BernsteinGS21,GoranciRST21}, or sparsifiers \cite{BernsteinBGNSS022}.

Recently, dynamic expander decomposition has been used to develop data structure that accelerate single-commodity flow algorithms (i.e.~max flow, min-cost flow but also bipartite matching, transshipment etc.) \cite{BrandLL+21,GaoLP21,BrandGJLLPS22,ChenKL+22}. This line of work recently culminated in an almost-linear time algorithm for single-commodity flow \cite{ChenKL+22}.
Another important tool for single-commodity flow algorithms \cite{DaitchS08,Madry13,Madry16,CohenMSV17,AxiotisMV21,LeeS14,LiuS20,KathuriaLS20,BrandLL+21} are Laplacian system solvers which run in nearly-linear time and have an extensive history of research \cite{SpielmanT04,CohenKKPPRS18,KyngS16,KyngLPSS16,LeePS15,PengS14,CohenKMPPRX14,KelnerOSZ13,KoutisMP11,KoutisMP14}.

\subsection{Organization}
We start by giving a technical overview in \Cref{sec:overview}. There we present the main technical ideas, sketch the proof for our algorithm, and all tools and required data structures.
In \Cref{sec:IPM}, we analyze how the robust central path framework of \cite{CohenLS19,Brand20,JiangSWZ21} for general LPs behaves when solving k-commodity flows. In particular, we show that the size of required linear systems reduces and prove some additional guarantees that are required by our data structures.
These data structures are analyzed in \Cref{sec:vector}.
Finally, we combine everything to obtain our main results \Cref{thm:intro:mincost,thm:intro:throughput} in \Cref{sec:algorithm}.

\subsection{Preliminaries}

We use $\tilde{O}$ to hide $\polylog(m,k)$ factors and $\hO$ to hide sub-polynomial $m^{o(1)}$ factors.
We write $[n]$ for the interval $\{1,...,n\}$. For a set $I\subset[n]$ and $v\in\R^n$ we write $v_I$ for the sub-vector with entries whose index is in $I$.

Given two vectors $u,v\in\R^m$, all operations are defined elementwise, e.g.~$uv$ is an elementwise product so $(uv)_i = u_i\cdot v_i$ for all $i\in[m]$. Likewise $(u/v)_i = u_i/v_i$ for all $i\in[m]$. We also extend all scalar operations to be elementwise on vectors so, for example, $(\sqrt{v})_i = \sqrt{v_i}$ for all $i\in[m]$.
For an $\alpha\in\R$ we write $v+\alpha$ for adding $\alpha$ to each entry of $v$, so $(v+\alpha)_i = v_i+\alpha$ for all $i\in[m]$.

For vectors $x,s,d,g\in\R^m$ we write $\mX,\mS,\mD,\mG$ for the $m\times m$ diagonal matrices with the respective vectors on the diagonal, e.g., $\mX_{i,i} = x_i$ for all $i\in[m]$.

Given $\epsilon >0$, $\alpha,\beta\in\R$, we write $\alpha\approx_\epsilon\beta$ when $\exp(-\epsilon)\alpha\le\beta\le\exp(\epsilon)\alpha$. Note that for small $\epsilon$ we have that $\exp(\pm\epsilon)$ is roughly $(1\pm\epsilon)$ so the $\approx_\epsilon$ notation can be considered to reflect $(1\pm\epsilon)$ approximations.
The definition via the exponential function allows for the following transitive property: if $\alpha\approx_\epsilon\beta$ and $\beta\approx_\delta\gamma$ then $\alpha\approx_{\epsilon+\delta}\gamma$.
We also extend the notation to vectors, so $u\approx_\epsilon v$ means $u_i\approx_\epsilon v_i$ for all $i$.

\section{Technical Overview}
\label{sec:overview}
Our algorithm for solving multi-commodity flow comes from improving linear program solvers by developing data structures. One of these data structure problem is the ``heavy hitter problem''. Here one must preprocess a matrix $\mM$ and then support queries where for any given vector $v$, one must detect all large entries of the product $\mM v$. Solving the heavy hitter problem on sparse matrices is an open problem.
We make the first contribution for very structured sparse matrices as given in \Cref{def:intro:mcincidence} which we refer to as multi-commodity incidence matrices.
Since our data structure relies on the specific structure of the matrix, we must first prove that we indeed have a structure as in \Cref{def:intro:mcincidence}.
The first two subsection of this overview will recap how linear programs are solved via the robust central path method which will also prove the specific structure of the heavy hitter task.

In the first subsection \ref{sec:overview:2LP}, we define the LP structure of $k$-commodity flow.
We then give an overview for how to use the robust central path method to solve linear programs efficiently in subsection \ref{sec:overview:algebra}. There we also sketch why the $\tilde{O}((km)^\omega)$ complexity\footnote{For simplicity we hide $\log(U/\epsilon)$ factors in this overview.} from \cite{CohenLS19,LeeSZ19,Brand20,JiangSWZ21} can be reduced to $\tilde{O}(k^{2.5}(\sqrt{m}n^{\omega-1/2}+m^{3/2}))$ for $k$-commodity flow.
Finally, subsection \ref{sec:overview:singlecommodity} describes how to improve $\tilde{O}(k^{2.5}(\sqrt{m}n^{\omega-1/2}+m^{3/2}))$ to $\tilde{O}(k^{2.5}\sqrt{m}n^{\omega-1/2})$ time, by developing an efficient heavy hitter data structure on multi-commodity incidence matrices.
This last step uses techniques from single commodity flows as we show that heavy hitters on multi-commodity incidence matrices can be reduced to heavy hitters on classical incidence matrices, which in \cite{BrandLN+20} was solved using dynamic expander decomposition.

\subsection{$k$-Commodity LP}
\label{sec:overview:2LP}

Given a graph $G=(V,E)$, let $\mB\in\R^{E\times V}$ be the edge-vertex incidence matrix. 
Let $u\in\R^E_{>0}$ be edge capacities, $c_1,...,c_k\in\R^E$ be the costs and $d_1,...d_k \in \R^V$ be the demand vectors for the $k$ commodities. Then, we can write the min-cost variant of $k$-commodity flow as the LP
\begin{align*}
\min_{x_1,...,x_k} \sum_{i=1}^k c_i^\top x_i &\text{ subject to} \\
\mB^\top x_i = d_i &\text{ for all } 1\le i\le k,\\
\sum_{i=1}^k x_i \le u,&~~~x_1,...,x_k\ge0%
\end{align*}
where $x_1,...,x_k \in \R^E$ are the flows for the $k$ commodities.
This can be written as an LP in standard form by introducing a slack variable $x_{k+1}\ge0$ with $\sum_{i=1}^{k+1}x_i=u$ and writing $x=(x_1,...,x_{k+1})\in\R^{(k+1)E}$. The primal LP $\cP$ (and its dual $\cD$) are given by
\begin{align*}
\begin{array}{r}
    (\cP)~~\min_x c^\top x  \\
    \cB^\top x = d, \\
    x\ge0
\end{array}
&~~~~
\begin{array}{r}
    (\cD)~~\max_y d^\top y  \\
    \cB y +s = c, \\
    s\ge0
\end{array}\notag\\
\text{ where }
\cB := \begin{bmatrix}
    \mB &  & 0 & \mI \\
     & \ddots &  & \vdots \\
    0 &   & \mB & \mI \\
    0 & ... & 0 & \mI
\end{bmatrix}&
\text{ and }
\begin{array}{cc}
    d:=(d_1,...,d_k, u)\in\R^{kV+E}\\
    c:=(c_1,...,c_k,0)\in\R^{(k+1)E}
\end{array} %
\end{align*}
Using \cite{CohenLS19,Brand20,JiangSWZ21} to solve this LP would take $\tilde{O}((km)^\omega)$ time since $\cB$ is of size $(k+1)m \times (kn+m)$.
Kapoor and Vaidya \cite{KapoorV96} solve this LP in $\tilde{O}(k^{2.5}\sqrt{m}n^2)$ time
while Lee and Sidford solve it in $\tilde{O}((kn+m)^{2.5})$ time \cite{LeeS14,LeeS15}.
These are the fastest algorithms for solving $k$-commodity flow to high accuracy and none of them use the graph structure of $\mB$.
In particular, if we were to replace $\mB$ by any other $m\times n$ matrix $\mA$ with $\nnz(\mA)$ non-zero entries, \cite{CohenLS19},\cite{KapoorV96},\cite{LeeS14,LeeS15} would still run in $\tilde{O}((km)^\omega)$, $\tilde{O}(\sqrt{km}(k^2n^2+k\nnz(\mA)))$, or $\tilde{O}(\sqrt{kn+m}(k\nnz(\mA)+(kn+m)^2)$ time.

Let us call this type of LP a ``\emph{$k$-commodity LP}'' since it is an LP with $k$ commodities $x_1,...,x_k$ that do not necessarily represent a flow as $\mA$ does not have to be an incidence matrix.
\begin{definition}\label{def:overview:commodityLP}
Given $m\times n$ matrix $\mA$, $k$ demand and cost vectors $d_1,...,d_k\in\R^{n}$, $c_1,...,c_n\in\R^{m}$
we call the following primal and dual linear programs a $k$-commodity LP. 
\begin{align*}
\begin{array}{r}
    (\cP)~~\min_x c^\top x  \\
    \cA^\top x = d, \\
    x\ge0
\end{array}
~
\begin{array}{r}
    (\cD)~~\max_y d^\top y  \\
    \cA y +s = c, \\
    s\ge0
\end{array} %
\text{ where }
\cA := \begin{bmatrix}
    \mA &  & 0 & \mI \\
     & \ddots &  & \vdots \\
    0 &   & \mA & \mI \\
    0 & ... & 0 & \mI
\end{bmatrix}&
\text{ and }
\begin{array}{cc}
    d:=(d_1,...,d_k, u)\in\R^{kn+m}\\
    c:=(c_1,...,c_k,0)\in\R^{(k+1)m}
\end{array}
\end{align*}
\end{definition}

\subsection{Robust IPM and Algebraic Techniques}
\label{sec:overview:algebra}
In this subsection we show how techniques from \cite{KapoorV96} can be combined with the recent robust interior point framework and data structures from \cite{CohenLS19,Brand20,JiangSWZ21} to obtain a faster algorithm for $k$-commodity LPs.
This leads to an algorithm with complexity $\tilde{O}(k^{2.5}m^{1/2}(n^{\omega-1/2}+\nnz(\mA)))$ for sparse $\mA$ with only $O(1)$ non-zeros per row.
This first speed up relies only on algebraic techniques and works for any sparse $k$-commodity LP. 
In the later \Cref{sec:overview:singlecommodity}, we show how to accelerate this algorithm further via graph based single-commodity flow techniques, when $\mA$ is an incidence matrix. 
This then leads to our fast $k$-commodity flow algorithms \Cref{thm:intro:throughput,thm:intro:mincost}.

\paragraph{Robust Central Path}
Let us start with a quick recap of the robust central path method which has led to many improvements in LP solvers \cite{LeeSZ19,BrandLSS20,JiangSWZ21,Brand20} and single-commodity flow algorithms \cite{BrandLN+20,BrandLL+21,GaoLP21,BrandGJLLPS22}.
The robust central path method is a variant of the classical central path method for solving LPs. 
Here, one has two iterates $x$ (a feasible but not optimal solution to the primal LP $\cP$ in \Cref{def:overview:commodityLP}) and $s$ (the slack of the dual LP $\cD$). 
There are $\tilde{O}(\sqrt{km})$ iterations and in each iteration, both $x$ and $s$ are moved a bit towards the optimal solution.
In the robust central path method, the movement of $x$ and $s$ are given by $x\leftarrow x+\delta_x$, $s\leftarrow s+\delta_s$ where
\begin{align}
\delta_s =&~ \cA(\cA^\top \omX \omS^{-1} \cA)^{-1} \cA^\top \omS^{-1} \og \label{eq:overview:deltas}\\
\delta_x =&~ \omS^{-1} \og - \omX\omS^{-1}\delta_s. \notag
\end{align}
where $\omX,\omS$ are diagonal matrices with $\omX_{i,i}\approx x_i, \omS_{i,i}\approx s_i$ for all $i$, and $\og$ is some vector specified by the robust central path method. Here $\omX,\omS,\og$ may change from one iteration to the next.

\paragraph{Dimension Reduction}
Note that $(\cA^\top \omX \omS^{-1} \cA)$ in \eqref{eq:overview:deltas} is an $(m+kn)\times(m+kn)$ matrix (by definition of $\cA$ in \Cref{def:overview:commodityLP}), so one would expect at least $\poly(m)$ time per iteration to multiply a vector with it.
However, using the $k$-commodity structure of $\cA$, one can reduce the dimension of the linear system onto a smaller inverse of size $kn\times kn$ by taking the Schur-complement. This was first observed in \cite{KapoorV96}. 
We briefly sketch how taking the Schur-complement reduces the dimension. We have
\begin{align}
    \cA^\top \omX \omS^{-1} \cA
    = 
    \left[\begin{array}{ccc|c}
    \mA^\top \mD_1 \mA &  &   0   &   \mA^\top \mD_1\\
    & \ddots & &\vdots \\
    0   &  & \mA^\top\mD_k\mA  &   \mA^\top\mD_k\\
    \hline 
    \mD_1\mA  & \hdots&  \mD_k\mA   &    \mD_\Sigma
    \end{array}\right] \label{eq:overview:hessian}
\end{align}
where for ease of notation we defined $\mD_i = \omX_i\omS_i^{-1}$ where $\omX_i,\omS_i$ are the $m\times m$ submatrices with rows in $[(i-1)m+1, im]$ for $i=1,...,k+1$ , and $\mD_{\Sigma}=\sum_{i=1}^{k+1} \mD_i$.
By taking the Schur-complement (see \Cref{lem:structure:inverse}) the inverse of \eqref{eq:overview:hessian} is given by
\begin{align}
&~
\left[\begin{array}{ccc|c}
\mI&&&0\\
&\ddots&&\vdots\\
&&\mI&0\\
\hline
-\mD_\Sigma^{-1}\mD_1\mA&\cdots&-\mD_\Sigma^{-1}\mD_k\mA&\mI
\end{array}\right]
\left[\begin{array}{ccc|c}
    &&&0\\
    &\mE^{-1}& &\vdots\\
    &&&0\\
    \hline
    0&\cdots&0&\mI
\end{array}\right]
\left[\begin{array}{ccc|c}
    \mI&&&-\mA^\top\mD_1\mD_\Sigma^{-1}\\
    &\ddots&&\vdots\\
    &&\mI&-\mA^\top\mD_k\mD_\Sigma^{-1}\\
    \hline
    0&\cdots&0&\mD_\Sigma^{-1}
\end{array}\right] \label{eq:overview:inverse} \\
&~\text{where }
\mE :=
    \left[\begin{array}{ccc}
    \mA^\top \mD_1 \mA &  &   0  \\
     & \ddots & \\
    0   &  & \mA^\top\mD_k\mA 
    \end{array}\right]-
    \left[\begin{array}{c}
    \mA^\top \mD_1\\
    \vdots \\
    \mA^\top\mD_k\\
    \end{array}\right]
    \mD_\Sigma^{-1}
    \left[\begin{array}{ccc}
    \mD_1\mA  &  \cdots & \mD_k\mA
    \end{array}\right] \notag
\end{align}
Note that here only the smaller $kn\times kn$ matrix $\mE$ must be inverted\footnote{%
If $\mA$ is an incidence matrix, then this matrix $\mE$ was referred to as a ``multi-commodity Laplacian'' in \cite{KyngZ20}, and despite graph structure, solving a linear system in $\mE$ is as hard as solving a general linear system. 
Further, after reordering rows and columns, $\mE$ is exactly $\mM^\top\mM$ for $\mM$ as in \Cref{def:intro:mcincidence} where $h^{(u,v)}_i$ from \Cref{def:intro:mcincidence} is the diagonal entry of $\mD_i \in \R^{m\times m}$ corresponding to edge $(u,v)$.
} 
(since $\mD_\Sigma$ is a diagonal matrix and thus trivial to invert).
Further, multiplying the vector $\cA^\top\omS^{-1}\og$ with $(\cA^\top\omX\omS^{-1}\cA)^{-1}$ as in \eqref{eq:overview:deltas} reduces to computing the following expression (proven in \Cref{lem:structure:delta}):
\begin{align*}
    w =&~ \mD_\Sigma^{-1} \sum_{i=1}^{k+1} \omS_i^{-1} g_i &%
    \begin{pmatrix}v_1\\\vdots\\v_k\end{pmatrix} =&~
    \mE^{-1}
    \underbrace{\begin{pmatrix}
    \mA^\top (\omS_1^{-1} g_1 -\mD_1 w)\\
    \vdots\\
    \mA^\top (\omS_k^{-1} g_k -\mD_k w)
    \end{pmatrix}}_{=:u}.
\end{align*}
Assume for now that $\mA$ has sparse rows with $O(1)$ non-zero entries (as would be the case for $k$-commodity flow where $\mA$ is an incidence matrix).
If an entry of $\omX,\omS$, or $\og$ changes, then only $O(k^2)$ entries in $\mE$ and $u$ change,
because of sparsity of rows of $\mA$.
Using a dynamic linear system data structure (\Cref{lem:inverse_maintenance}), one can maintain the solution of this linear system $\mE^{-1}u$ efficiently under entry updates to $\mE$ and $u$.
In particular, an amortized complexity of $\tilde{O}(k^2(n^{1+\mu} + n^{\omega(1,1,\mu) - \mu/2} + n^{\omega-1/2}))$ per iteration of the central path method is possible for any trade-off parameter $0\le\mu\le1$.
For current bounds on $\omega$, this is just $\tilde{O}(k^2 n^{\omega-1/2})$ amortized time per iteration.
The only remaining problem is how to compute the product with the matrix on the left of $\mE^{-1}$ in \eqref{eq:overview:inverse} and the product with $\cA$ in definition of $\delta_s$ in \eqref{eq:overview:deltas}.

For this, we prove in \Cref{sec:IPM} that the robust central path method (i.e.~\eqref{eq:overview:deltas}) can be rewritten as follows.
Compute $\delta_s=(\delta_s^1,...,\delta_s^{k+1})$ where for $v_{k+1}:=0$
\begin{align}
\delta_s^{(i)} =&~ w + \mA v_i - \sum_{j=1}^k \mD_j\mD_\Sigma^{-1}\mA v_j  ~~~\text{ for } i =1,...,k+1\label{eq:overview:modifieddeltas}
\end{align}
So after computing the vectors $(v_1,...,v_k)=\mE^{-1}u$ in $\tilde{O}(k^2n^{\omega-1/2})$ amortized time, we are only left with multiplying the resulting vectors with computing \eqref{eq:overview:modifieddeltas}.
In general, this would take $\nnz(k^2\mA)$ time, but if $\mA$ is an incidence matrix, we can use data structures from single-commodity flow algorithms \cite{BrandLN+20,BrandLL+21} based on the expander decomposition framework.
This is outlined in the next subsection.

\paragraph{Heavy Hitters} 
A common technique when solving linear programs via the robust central path method, is to use ``heavy hitter'' data structures. 
(See e.g.~\cite{BrandLSS20,BrandLL+21,BrandLN+20,GaoLP21,BrandGJLLPS22}.)
Note that we only need $\omX,\omS$ in \eqref{eq:overview:deltas} with $\omX_{i,i} \approx x_i$ and $\omS_{i,i} \approx s_i$ for all $1\le i\le (k+1)m$.
Thus we do not actually need to compute vectors $x,s$ explicitly.
Further, if an entry $s_i$ did not change much from one iteration to the next, we can reuse the old value of $\omS_{i,i}$ in the next iteration. Only when $s_i$ changes sufficiently, do we need to update $\omS_{i,i}$.
This leads to the following data structure problem:
in each iteration find for which $1\le i\le(k+1)m$ the entries $|(\delta_s)_i| > \epsilon \omS_{i,i}$
for some parameter $\epsilon \in (0,1]$.
That is, find the ``heavy hitters'' of the vector $\omS^{-1}\delta_s$. (And likewise $\omX^{-1}\delta_x$ but by \eqref{eq:overview:deltas} this is almost the same problem as $\omS^{-1}\delta_s$.)
This allows for a speed-up of the algorithm, because not the entire vector $\delta_s$ must be computed.
We explain in the next \Cref{sec:overview:singlecommodity} how this data structure task can be solved efficiently using graph techniques when matrix $\mA$ is an incidence matrix.

\subsection{Single-Commodity Techniques}
\label{sec:overview:singlecommodity}

As outlined in the previous subsection \ref{sec:overview:algebra}, the remaining task for obtaining an efficient multi-commodity flow algorithm is to solve a ``heavy hitter'' data structure problem.
The task is to find the large entries of the vector $\omS^{-1}\delta_s$ (as defined in \eqref{eq:overview:modifieddeltas}).
The central path method guarantees that $\omX\omS \approx t\mI$ for some $t\in\R{>0}$, so finding entries larger than some $\epsilon$ in $\omS^{-1}\delta_x$ can be done by finding entries larger than
$\epsilon \sqrt{t}$ in $\mD^{1/2}\delta_s$ (reminder: $\mD=\omX~\omS^{-1}$) by
$$
\mD^{1/2}\delta_s = \omX^{1/2}\omS^{-1/2} \delta_s = (\omX\omS)^{1/2}~\omS^{-1}\delta_s \approx \sqrt{t}\cdot \omS^{-1}\delta_s.
$$
By definition of $\delta_s$ in \eqref{eq:overview:modifieddeltas}\footnote{For simplicity, we will from now on ignore the vector $w$ in \eqref{eq:overview:modifieddeltas}. Since the vector $w$ is explicitly given, it's easy to check if $\mD^{1/2}w$ has large entries. It's the sum of products with $\mA$ for which finding the large entries is the bottleneck.}, that means we try to find large entries of
\begin{align}
\mD^{1/2}_i\left( \mA v_i - \sum_{j=1}^k \mD_j\mD_\Sigma^{-1}\mA v_j\right)\text{ for each }i=1,...,k+1
\label{eq:overview:explicitsum}
\end{align}
(where $\mD_i$ is the $i\th$ $m\times m$ diagonal subblock of $\mD$.)

We remark that finding large entries of \eqref{eq:overview:explicitsum} is precisely the task we previously described in \Cref{sec:intro:techniques} (for a matrix as in \Cref{def:intro:mcincidence}).
This is because \eqref{eq:overview:explicitsum} can be phrased as finding the large entries of the following matrix vector product where the matrix is a multi-commodity incidence matrix:
\begin{align*}
    \mD^{1/2}\delta_s
    =
    \begin{bmatrix}
        \mD_1^{1/2} \delta_s^{(1)} \\
        \vdots \\
        \mD_{k+1}^{1/2} \delta_s^{(k+1)}
    \end{bmatrix}
    =
    \left(\begin{bmatrix}
        \mD_1^{1/2}\mA & & 0\\
        & \ddots & \\
        0 & & \mD_{k+1}^{1/2}\mA \\
    \end{bmatrix}
    -
    \begin{bmatrix}
        \mD_1^{1/2}\mD_1 \mD_\Sigma^{-1} \mA & \hdots & \mD_1^{1/2}\mD_{k+1}\mA \\
        \vdots & \ddots & \vdots\\
        \mD_{k+1}^{1/2}\mD_1 \mD_\Sigma^{-1} \mA & \hdots & \mD_{k+1}^{1/2}\mD_{k+1}\mA \\
    \end{bmatrix}
    \right)
    \begin{bmatrix}
        v_1 \\
        \vdots \\
        v_{k+1} \\
    \end{bmatrix}
\end{align*}
After reordering rows and columns, the matrix in above equation is of form \Cref{def:intro:mcincidence}\footnote{
With $h^{(u,v)}_i$ in \Cref{def:intro:mcincidence} being the diagonal entry of $\mD_i\in\R^{m\times m}$ that corresponds to edge $(u,v)\in E$.}, i.e.~it is a multi-commodity incidence matrix.

We must now solve the heavy hitter problem on this sparse matrix with $O(k)$ non-zero entries per row. %
An intuitive idea would be to %
split the heavy hitter problem on the multi-commodity incidence matrix into smaller tasks on classical incidence matrices,
e.g.~instead of finding $1\le i \le k+1$ and $1\le \ell \le m$ where
\begin{align}
|(\mD_i^{1/2}\delta_s^{(i)})_\ell|=
\left|\left(\mD_i^{1/2}\left(\mA v_i
    - \sum_{j=1}^{k+1}\mD_j\mD_\Sigma^{-1} \mA v_j \right)
    \right)_\ell\right| > \epsilon \sqrt{t}
    \label{eq:overview:deltas_condition}
\end{align}
we search for indices $1\le i \le k+1$ and $1\le \ell\le m$ where
\begin{align}
    \left|\left(\mD_i^{1/2} \mA v_i\right)_\ell\right| >&~ \frac{\epsilon}{k+1}\sqrt{t}
    ~\text{ or }\notag\\
    \left|\left(\mD_i^{1/2}\mD_j\mD_\Sigma^{-1} \mA v_j\right)_\ell\right| >&~ \frac{\epsilon}{k+1}\sqrt{t}
    ~\text{ for any }1\le j \le k+1\label{eq:overview:modified_deltas_condition}
\end{align}
The problem with this approach is that it is not clear how many indices will be returned this way.
Usually, the number of indices that satisfy \eqref{eq:overview:deltas_condition} is bounded by $O(\|\mD^{1/2}_j\delta_s^j\|_2/(\epsilon^2 t))$, and this norm is bounded by properties of the central path method.
However, no such bound is given on the norms of the vectors in \eqref{eq:overview:modified_deltas_condition},
so we cannot bound the number of returned indices.
In particular, it could be that for some $i,\ell$ there are two different $j$ where \eqref{eq:overview:modified_deltas_condition} is satisfied, but one entry is a large positive value and the other is a large negative value. So in \eqref{eq:overview:deltas_condition} the respective entry might still be small because of cancellation.
As returning an index takes at least $O(1)$ time,
we cannot bound the complexity for finding large entries in \eqref{eq:overview:modified_deltas_condition}.

To solve heavy hitter task on \eqref{eq:overview:deltas_condition}, we find that there is actually a decomposition of the multi-commodity incidence matrix into smaller classical incidence matrices, where we can guarantee that such cancellations happen rarely.

Let us write $\delta_s^{(i)}$ in a slightly different form:
\begin{align*}
\mD_i^{1/2}\delta_s^{(i)} 
= 
\mD_i^{1/2}\left(\mA v_i - \sum_{j=1}^{k+1} \mD_j \mD_\Sigma^{-1} \mA v_j\right)
=
\sum_{j=1}^{k+1} \mD_i^{1/2}\mD_j\mD_\Sigma^{-1}\mA (v_i-v_j)
\end{align*}
(Here we used that $\mD_\Sigma=\sum_{i=1}^{k+1}\mD_i$.)
Thus, we can replace the conditions in \eqref{eq:overview:modified_deltas_condition} by the following conditions
\begin{align}
    \left|\left(\mD_i^{1/2}\mD_j\mD_\Sigma^{-1}\mA (v_i- v_j) \right)_\ell\right| > \frac{\epsilon}{k+1} \sqrt{t} \text{ for any } 1\le j \le k+1
    \label{eq:overview:modified_deltas_condition_2}
\end{align}
Unlike the terms in \eqref{eq:overview:modified_deltas_condition}, we can bound the norms of above terms. 
\begin{lemma}[{Shortened version of \Cref{lem:vector:simplifiednorm}}]\label{lem:overview:normbounds}
    For any $v_1, ...,v_{k+1} \in \R^n$, 
    $d_1,...,d_{k+1}\in\R^{m}_{>0}$ 
    let $d_\Sigma=\sum_{i=1}^{k+1}d_i$. Write $\mD_i$ and $\mD_\Sigma$ for the diagonal matrices with $d_i$ and $d_\Sigma$ on the diagonal.
    Then
    \begin{align*}
\sum_{i=1}^{k+1}\sum_{j=1}^{k+1}
\left\lVert\mD_i^{1/2}\mD_j\mD_\Sigma^{-1}\mA(v_i-v_j)\right\rVert_2^2
\le 4\cdot
\sum_{i=1}^{k+1} \left\|\mD^{1/2}_i \sum_{j=1}^{k+1} \mD_j\mD_\Sigma\mA(v_i - v_j)\right\|_2^2
\end{align*}

\end{lemma}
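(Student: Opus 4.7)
The plan is to reduce the matrix inequality to a coordinate-wise scalar inequality and then prove that scalar inequality by centering around a weighted mean. Observe that both sides of the stated bound decompose as sums over the $m$ rows of $\mA$: for a fixed output coordinate $\ell\in[m]$, the $\ell$-th entry of $\mD_j\mD_\Sigma^{-1}\mA(v_i-v_j)$ depends only on the scalars $(d_i)_\ell,(d_j)_\ell,(d_\Sigma)_\ell$ together with $(\mA v_i)_\ell$ and $(\mA v_j)_\ell$. Thus it suffices to prove, for each $\ell$ separately and with the scalar abbreviations $\alpha_i=(d_i)_\ell$, $\alpha_\Sigma=\sum_i\alpha_i$, $\beta_i=(\mA v_i)_\ell$, the inequality
\[
\sum_{i=1}^{k+1}\sum_{j=1}^{k+1}\alpha_i\alpha_j^2(\beta_i-\beta_j)^2 \;\le\; 4\,\alpha_\Sigma^2\sum_{i=1}^{k+1}\alpha_i\bigl(\beta_i-\bar\beta\bigr)^2,
\]
where $\bar\beta:=\alpha_\Sigma^{-1}\sum_j\alpha_j\beta_j$ is the $\alpha$-weighted mean. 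The reason this centered form equals the per-coordinate RHS is the identity $\sum_j\alpha_j(\beta_i-\beta_j)=\alpha_\Sigma(\beta_i-\bar\beta)$, which after pulling out the $\alpha_\Sigma^{-1}$ from $\mD_\Sigma^{-1}$ exactly produces the expression above.

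Next I would change variables by setting $u_i:=\beta_i-\bar\beta$, which satisfies the key orthogonality relation $\sum_i\alpha_i u_i=0$. Since $\beta_i-\beta_j=u_i-u_j$, expanding the square gives
\[
\sum_{i,j}\alpha_i\alpha_j^2(u_i-u_j)^2
=\Bigl(\sum_j\alpha_j^2\Bigr)\Bigl(\sum_i\alpha_iu_i^2\Bigr)
-2\Bigl(\sum_i\alpha_iu_i\Bigr)\Bigl(\sum_j\alpha_j^2u_j\Bigr)
+\alpha_\Sigma\sum_j\alpha_j^2u_j^2.
\]
The orthogonality relation kills the middle term. For the two remaining terms I would use the two elementary bounds $\sum_j\alpha_j^2\le\alpha_\Sigma^2$ (valid because all $\alpha_j\ge 0$) and $\alpha_j\le\alpha_\Sigma$, which together give
\[
\sum_{i,j}\alpha_i\alpha_j^2(u_i-u_j)^2
\;\le\;\alpha_\Sigma^2\sum_i\alpha_iu_i^2 + \alpha_\Sigma^2\sum_j\alpha_ju_j^2
\;=\;2\,\alpha_\Sigma^2\sum_i\alpha_i u_i^2.
\]
This is twice the per-coordinate RHS, so summing over $\ell\in[m]$ yields the lemma with constant $2$, which is stronger than the claimed $4$.

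There is no real obstacle here; the only thing to get right is the reduction to the scalar inequality and the choice of the weighted-mean substitution $u_i=\beta_i-\bar\beta$. The first is forced by the structure of the matrices (diagonal $\mD_i$, $\mD_\Sigma$, and shared $\mA$), and the second is motivated by the observation that the ``sum'' on the RHS naturally collapses to $\alpha_\Sigma(\beta_i-\bar\beta)$, so centering the $\beta_i$ makes the two sides directly comparable. The mild slack between the sharp constant $2$ and the stated constant $4$ simply reflects that the lemma is used downstream only as a qualitative norm equivalence.
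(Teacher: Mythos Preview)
Your proof is correct and essentially parallels the paper's argument. Both proofs rest on the same key substitution: your centered variable $u_i=\beta_i-\bar\beta$ is precisely the per-coordinate version of the paper's $\eta_i=\sum_j\mD_j\mD_\Sigma^{-1}\mA(v_i-v_j)$, and both use the identity $\beta_i-\beta_j=u_i-u_j$ (equivalently $\mA(v_i-v_j)=\eta_i-\eta_j$) followed by the elementary bounds $\sum_j\alpha_j^2\le\alpha_\Sigma^2$ and $\alpha_j\le\alpha_\Sigma$. The one genuine difference is that the paper bounds $(u_i-u_j)^2\le 2u_i^2+2u_j^2$ crudely, whereas you expand the square exactly and exploit the orthogonality $\sum_i\alpha_iu_i=0$ to kill the cross term; this is why you obtain the constant $2$ rather than $4$. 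Your refinement is a nice touch but, as you note, immaterial for the downstream application.
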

This norm upper bound from \Cref{lem:overview:normbounds} is somewhat unexpected: note that tools such as triangle inequality would give an inequality in the opposite direction. A proof is given in \Cref{sec:vector} via \Cref{lem:vector:simplifiednorm}.

Since the number of large entries in \eqref{eq:overview:modified_deltas_condition_2} can be bounded w.r.t~the norms in \Cref{lem:overview:normbounds}, we can bound how many entries are large. At the same time, we have %
\begin{align*}
\sum_{i=1}^{k+1} \left\|\mD^{1/2}_i \sum_{j=1}^{k+1} \mD_j\mD_\Sigma\mA(v_i - v_j)\right\|_2^2 %
=&~
\sum_{i=1}^{k+1} \|\mD^{1/2}_i \delta_s^{(i)}\|_2^2
= 
\|\mD\delta_s\|_2^2
\end{align*}
which is bounded by guarantees of the central path method. So we never return too many large entries in \eqref{eq:overview:modified_deltas_condition_2}.
At last, we use the heavy hitter data structure for classical incidence matrices from \cite{BrandLN+20}, originally developed for the purpose of solving single-commodity flows.

\paragraph{Summary}
In summary, we can solve the heavy hitter problem on multi-commodity incidence matrices $\mM$ by splitting it into $O(k^2)$ instances on classical incidence matrices.
This is quite different form the inverse subroutine: solving multi-commodity Laplacian $\mM^\top\mM$ are unlikely be efficiently reducible to regular Laplacian systems \cite{KyngZ20} unless there is some breakthrough in sparse linear system solving. 
This demonstrates that while any LP can be reduced to multi-commodity flow \cite{Itai78,DingKZ22}, 
some aspects of solving multi-commodity flow on dense graphs can actually be reduced to the single commodity case and accelerated using graph techniques.

\section{Interior Point Method}
\label{sec:IPM}
The robust central path framework is a common tool in the development of efficient linear program solvers and single commodity flow algorithms \cite{CohenLS19,LeeSZ19,Brand20,JiangSWZ21,BrandLL+21,BrandLN+20,BrandLSS20}.
In this section we analyze the robust central path framework when applied to a $k$-commodity LP.
When used as blackbox, the robust central path framework would have to solve an $O(km)\times O(km)$ dimensional linear system in each iteration. 
In \Cref{sec:IPM:structure} we show that this problem can be reduced to an $O(kn)\times O(kn)$ dimensional linear system. That is, we prove the claims from \Cref{sec:overview:algebra} that the inverse is of shape as described in \eqref{eq:overview:inverse}.
We also prove that our steps $\delta_s$ to the slack of the dual are of form \eqref{eq:overview:modifieddeltas}. Especially the form of the steps is important because only for very structured sparse matrices can we maintain the primal $x$ and slack $s$ of the dual with small amortized time per iteration.
A data structure for maintaining $x$, $s$ is given in \Cref{sec:vector}.

\subsection{Structure of IPM for $k$-Commodity LP}
\label{sec:IPM:structure}
As a reminder, we defined a $k$-commodity LP (\Cref{def:overview:commodityLP}) as an LP of the form
\begin{align*}
\min_{x_1,...,x_{k+1}\ge 0} \sum_{i=1}^k c_i^\top x_i& \text{ subject to} \\
\mA^\top x_i = d_i& \text{ for } i=1,...,k\\
\sum_{i=1}^{k+1} x_i = u&
\end{align*}
where $\mA \in \R^{m\times n}$ for $m\ge n$ is full-rank (i.e.~rank $n$).\footnote{If we let $\mA$ be an edge vertex incidence matrix to model a $k$-commodity flow instance, then this full-rank assumption is not satisfied. 
We describe in \Cref{sec:algorithm:initial} how to fix this issue.}
\Cref{thm:commodity_ipm} shows that the central path method given by \Cref{alg:commodity_ipm} can be used to solve this linear program.

\begin{theorem}\label{thm:commodity_ipm}
    Consider a $k$-commodity LP for which we call $\textsc{Solve}(\mA \in \R^{m\times n}, x^\init,s^\init \in \R^{(k+1)m}_{>0},t^\init,t^\target\in\R_{>0})$ (\Cref{alg:commodity_ipm})
    and let $x^\final$, $s^\final$ be the output. %
    If input $x^\init,s^\init$ is feasible with $\Phi(\frac{x^\init s^\init}{t^\init})\le 16 \cdot (k+1)m$, then
    \begin{itemize}[nosep]
    \item the output is feasible with $|\frac{x^\final s^\final}{t^\target}-1| \le 1/16$,
    \item the number of iterations is $O(\sqrt{km} \lambda |\log(t^\init/t^\target)|)$,
    \item and in each step we have
    $\|\mS^{-1} \delta_s\|_2,\|\mX^{-1}\delta_x\|_2 \le \lambda/16$, and $\|xs/t - 1\|_\infty \le 1/16$.
    \end{itemize}
\end{theorem}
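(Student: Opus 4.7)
The plan is to adapt the standard robust central path analysis of \cite{CohenLS19,Brand20,JiangSWZ21}, specialized to the $(k+1)m$-dimensional primal / $(kn+m)$-dimensional dual LP of \Cref{def:overview:commodityLP}. Treating the problem generically as an LP over $N := (k+1)m$ non-negative variables, I would maintain throughout the algorithm the potential invariant $\Phi(xs/t) \le 16N$ for the soft-$\ell_\infty$ potential $\Phi$ used in that line of work. This invariant immediately yields the pointwise bound $\|xs/t - 1\|_\infty \le 1/16$ at every step, in particular at the output, which gives the stated centrality conclusion.

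First I would check that the step in \eqref{eq:overview:deltas} is well-defined, which reduces to showing that $\cA$ has full column rank $kn+m$ (since $\omX,\omS \succ 0$): the last block column of $\cA$ contributes $m$ independent columns through the stacked identities, and each of the remaining $k$ block columns contributes $n$ independent columns by the hypothesis that $\mA$ has rank $n$, their row-supports being disjoint across the diagonal blocks. Next I would verify that the iterates stay feasible: direct expansion gives $\cA^\top \delta_x = 0$, so $\cA^\top x = d$ is preserved, and an implicit dual update keeps $\cA y + s = c$; positivity of the next iterates then follows from $\|\mX^{-1}\delta_x\|_\infty \le \|\mX^{-1}\delta_x\|_2 \le \lambda/16 < 1$ together with $\omX_{ii}\approx x_i$ and the analogous approximation for $s$.

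The per-step norm bounds $\|\mS^{-1}\delta_s\|_2,\|\mX^{-1}\delta_x\|_2 \le \lambda/16$ are enforced by construction of $\og$ as a suitably clipped, scaled gradient of $\Phi$ in the robust IPM, so the step attains precisely this $\ell_2$ length. The remaining analytic content is the standard potential-decrease lemma: at this step size the centering parameter $t$ can be shrunk by a multiplicative factor $1 \pm \Theta(1/(\lambda\sqrt{N}))$ while keeping $\Phi(xs/t)\le 16N$, using the approximation guarantees on $\omX,\omS$ to show that the executed step is a sufficiently small perturbation of the exact Newton step to inherit its progress. Geometrically shrinking $t$ from $t^{\init}$ to $t^{\target}$ then takes $O(\lambda\sqrt{N}\log(t^{\init}/t^{\target})) = O(\sqrt{km}\,\lambda\,|\log(t^{\init}/t^{\target})|)$ iterations.

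The main obstacle is organizational rather than mathematical: the argument is essentially a black-box application of the robust-IPM framework from \cite{CohenLS19,Brand20,JiangSWZ21} with $N = (k+1)m$. The one $k$-commodity specific point requiring care is the full-rank verification of $\cA$, since the hypothesis is stated on $\mA$ and not on $\cA$ directly. The efficiency contributions of later sections---the Schur-complement reduction of the linear system to dimension $kn$ and the heavy-hitter data structures used to maintain $\omX,\omS$---concern the \emph{time per iteration}, not the correctness of the trajectory proved here, so they do not enter this proof.
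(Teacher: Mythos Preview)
Your proposal has a genuine gap: you are analyzing the wrong algorithm. \Cref{thm:commodity_ipm} is a statement about \Cref{alg:commodity_ipm}, whose steps $\delta_s,\delta_x$ are \emph{defined} via the Schur-complement quantities $w,v_1,\dots,v_k$ and the explicit formulas in \Cref{line:commodityipm:deltaxdeltas}. Your argument instead establishes the guarantees for the generic step \eqref{eq:overview:deltas} (equivalently \eqref{eq:ipm}), and then explicitly asserts that ``the Schur-complement reduction \dots concerns the time per iteration, not the correctness of the trajectory proved here, so [it does] not enter this proof.'' That is precisely backwards here: the equivalence between the Schur-complement formulas of \Cref{alg:commodity_ipm} and the generic projection step is the main content needed, and is what the paper proves as \Cref{lem:structure:delta} (using \Cref{lem:structure:hessian} and \Cref{lem:structure:inverse}). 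Without that equivalence, nothing you derive about \eqref{eq:overview:deltas} transfers to the iterates actually produced by \Cref{alg:commodity_ipm}.

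By contrast, the part you do sketch---the robust-IPM potential argument with $N=(k+1)m$, the full-rank check for $\cA$, feasibility preservation, and the step-size/iteration bound---is exactly \Cref{lem:robust_ipm}, which the paper takes as a black box from \cite{LeeV21}. So you have inverted the division of labor: you re-derive the cited lemma and omit the lemma the paper actually proves. To repair your argument, keep your full-rank observation (it is needed for $\mE$ and $\cA^\top\omX\omS^{-1}\cA$ to be invertible), then verify algebraically that the $\delta_s^i,\delta_x^i$ of \Cref{alg:commodity_ipm} coincide with \eqref{eq:ipm}; only after that does invoking the generic robust-IPM analysis finish the proof.
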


\begin{algorithm2e}[t!]
\caption{Robust central path algorithm, applied to a $k$-commodity LP. \label{alg:commodity_ipm}}
\SetKwProg{Proc}{procedure}{}{}
\SetKwProg{Params}{parameters}{}{}
\Params{}{
	$\lambda = 16 \log 40 \sqrt{(k+1)m}$,
	$h = 1/(128\lambda\sqrt{(k+1)m})$,
	$\Phi(v) := \sum_{i\in[(k+1)m]} (\exp(\lambda v_i) + \exp(-\lambda v_i))/2$.
}
\Proc{$\textsc{Solve}(\mA\in\R^{m\times n}, x,s \in \R^{(k+1)m}_{>0}, t^\init > 0, t^\target > 0)$}{
\While{$t \neq t^\target$}{
Pick $\ox\approx_{1/48} x$, $\os \approx_{1/48} s$, $\|\ov-xs/t\|_\infty \le \frac{1}{48\lambda}$\\
Let $t' = \max(t/(1+h), t^\target)$\\
Let $g = -\nabla \Phi(\ov) = (g_1, g_2, \ldots, g_{k+1})$ where $g_i \in \R^m$ for $i=1,2,\ldots,k+1$. \\
For sub-vectors $(\ox_1, \ox_2, ..., \ox_{k+1})=\ox,(\os_1, \os_2, ..., \os_{k+1})=\os\in\R^{(k+1)m}$, let $\mD_i = \omX_i\omS^{-1}_i$ for $i=1,...,k+1$ and let $\mD_\Sigma = \sum_{i=1}^{k+1} \mD_i$. Let
\begin{align}
    \mE := 
    \left[\begin{array}{ccc}
    \mA^\top \mD_1 \mA &  &   0  \\
     & \ddots & \\
    0   &  & \mA^\top\mD_k\mA 
    \end{array}\right]-
    \left[\begin{array}{c}
    \mA^\top \mD_1\\
    \vdots \\
    \mA^\top\mD_k\\
    \end{array}\right]
    \mD_\Sigma^{-1}
    \left[\begin{array}{ccc}
    \mD_1\mA  &  \cdots & \mD_k\mA
    \end{array}\right]
    \label{eq:ipm:defineE}
\end{align}\\
Compute $\delta_x = (\delta_x^1,...,\delta_x^{k+1}), \delta_s=(\delta_s^1,...,\delta_s^{k+1})$ where
\begin{align*}
    w =&~ \mD_\Sigma^{-1} \sum_{i=1}^{k+1} \omS_i^{-1} g_i &%
    \begin{pmatrix}
    v_1\\ 
    \vdots\\
    v_k
    \end{pmatrix} 
    =&~
    \mE^{-1}
    \begin{pmatrix}
    \mA^\top (\omS_1^{-1} g_1 -\mD_1w)\\
    \vdots \\
    \mA^\top (\omS_k^{-1} g_k -\mD_kw)
    \end{pmatrix}
\end{align*}
\begin{align*}
\delta_s^i =&~ \left(w+\mA v_i-\sum_{j=1}^k \dfrac{d_j}{d_\Sigma}\mA v_j\right) \frac{t'}{32\lambda\|g\|_2} \text{ for } i=1,...,k\\
\delta_s^{k+1} =&~ \left(w-\sum_{j=1}^k \dfrac{d_j}{d_\Sigma}\mA v_j\right) \frac{t'}{32\lambda\|g\|_2} \\
\delta_x^i =&~ \omS^{-1}_i \frac{t'}{32\lambda} \frac{g_i}{\|g\|_2} - \omX_i\omS_i^{-1} \delta_s^i \text{ for } i = 1,..,k+1.
\end{align*} \label{line:commodityipm:deltaxdeltas} \\
$x \leftarrow x + \delta_x$,
$s \leftarrow s + \delta_s$,
$t \leftarrow t'$
}
\Return $x, s$
}
\end{algorithm2e}

\noindent
A $k$-commodity LP can also be written as follows
\begin{align}
\begin{array}{cc}
     \min_{x\ge 0} c^\top x \text{ subject to} \\
\cA^\top x = d
\end{array}
\text{ where }
\cA := \left[\begin{array}{ccc|c}
       \mA & & 0 & \mI \\
        & \ddots & & \vdots \\
        0 & & \mA & \mI \\
        \hline
        0 & \cdots & 0 & \mI
    \end{array}\right],
c := \begin{bmatrix}
    c_1 \\
    \vdots \\
    c_k \\
    0
\end{bmatrix},
d := \begin{bmatrix}
    d_1 \\
    \vdots \\
    d_k \\
    u
\end{bmatrix}
\label{eq:ipm:standardform}
\end{align}
We prove \Cref{thm:commodity_ipm} by showing that \Cref{alg:commodity_ipm} performs the same steps $x\leftarrow x+\delta_x, s\leftarrow s+\delta_s$ as the classic robust central path method for LPs in standard form (\Cref{alg:robust_ipm}), when applied to \eqref{eq:ipm:standardform}.
The correctness of the classic robust central path method (\Cref{alg:robust_ipm}) is given by \Cref{lem:robust_ipm}.

\begin{lemma}[\cite{LeeV21}]\label{lem:robust_ipm}
    Consider a linear program $\min c^\top x$ subject to $\mA^\top x = b$, $x\ge0$ for $n\times d$ matrix $\mA$. 
    Assume we call $\textsc{Solve}(\mA, x^\init,s^\init,t^\init,t^\target)$ (\Cref{alg:robust_ipm})
    and let $x^\final$, $s^\final$ be the output. %
    If input $x^\init,s^\init$ is feasible with $\Phi(\frac{x^\init s^\init}{t^\init})\le 16 n$, then
    \begin{itemize}[nosep]
    \item the output is feasible with $|\frac{x^\final s^\final}{t^\target}-1| \le 1/16$,
    \item the number of iterations is $O(\sqrt{n} \lambda |\log(t^\init/t^\target)|)$,
    \item and in each step we have
    $\|\mS^{-1} \delta_s\|_2,\|\mX^{-1}\delta_x\|_2 \le \lambda/16$, and
    $\|xs/t - 1\|_\infty \le 1/16$
    \end{itemize}
\end{lemma}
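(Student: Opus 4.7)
The plan is to follow the standard robust central path potential argument, since \Cref{alg:robust_ipm} is an approximate Newton step for the soft-max potential $\Phi$ applied to the centrality error $v := xs/t - 1$. The core induction hypothesis is that $\Phi(xs/t - 1) \le 16n$ is maintained throughout. By monotonicity of $\cosh$, this immediately yields the $\ell_\infty$ bound: with $\lambda = 16\log(40)\sqrt{n}$, we have $\cosh(\lambda/16) = \cosh(\log(40)\sqrt{n}) \gg 32n$ for any $n\ge 1$, so any coordinate of $xs/t - 1$ exceeding $1/16$ in absolute value would single-handedly push $\Phi$ past $16n$, contradicting the invariant.

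Before the potential argument, I would dispatch the easy parts. Feasibility is preserved because $\delta_x$ lies in $\ker \mA^\top$ while $\delta_s$ lies in $\operatorname{range}(\mA)$ by construction of the projection $\mA(\mA^\top \omX\omS^{-1}\mA)^{-1}\mA^\top$ used to define the step. The step-size bound $\|\omS^{-1}\delta_s\|_2,\|\omX^{-1}\delta_x\|_2 \le \lambda/16$ is immediate from the explicit normalization $t'/(32\lambda\|g\|_2)$ in the step formula together with the fact that the relevant projection in the $(\omX\omS^{-1})^{1/2}$-norm has operator norm $1$. The iteration count is also immediate: $t$ shrinks by a factor $(1+h)$ per step, so reaching $t^\target$ takes $O(h^{-1}|\log(t^\init/t^\target)|) = O(\sqrt{n}\lambda|\log(t^\init/t^\target)|)$ iterations.

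The substance is the potential-decrease step. I would Taylor expand $\Phi(v^{\new}) - \Phi(v^{\mathrm{old}})$ where $v^{\new}$ reflects the simultaneous updates $x \leftarrow x + \delta_x$, $s \leftarrow s + \delta_s$, $t \leftarrow t'$. The first-order contribution is $\langle \nabla\Phi(v), \delta_x/x + \delta_s/s - h(1+v)\rangle$. Substituting the explicit step and using $g = -\nabla\Phi(\ov)$ with $\ov\approx_{1/(48\lambda)} v$, this first-order term equals $-\Omega(\|g\|_2/\sqrt{n}) + O(h\cdot n)$: the first piece comes from the Newton-direction cancellation (after accounting for the projection into the feasibility subspace), and the second from the $t$-update. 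The key quantitative statement is the well-known fact $\|g\|_2 \ge \Omega(\sqrt{\Phi(v) - 2n})$, so the first-order term is strictly negative whenever $\Phi(v)$ is appreciably above $2n$. The second-order contribution is controlled via $|\exp(\lambda w) - 1 - \lambda w| \le \lambda^2 w^2 \exp(\lambda|w|)$; plugging in $|w| \le O(1/\lambda)$ (from the step-size bound) and summing yields a second-order error of at most $O(1)$ per step. Combining gives a drift inequality ensuring $\Phi$ never leaves $[0,16n]$ given the initial condition, and as a consequence $|x^\final s^\final/t^\target - 1| \le 1/16$ at termination.

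The main obstacle is robustness to the three approximations $\ox\approx_{1/48} x$, $\os\approx_{1/48} s$, $\|\ov - xs/t\|_\infty \le 1/(48\lambda)$: these perturb both $g$ and the projection used to define $\delta_s$, and one must check that the induced errors fit inside the $O(1)$ slack in the second-order term. This is precisely why the constants $1/48$, $1/128$, and $16\log 40$ appearing in the parameter block of \Cref{alg:commodity_ipm} are what they are. Since this calculation is identical to the one carried out in \cite{LeeV21} (and subsequently \cite{CohenLS19,Brand20,JiangSWZ21}), I would cite those sources rather than reproduce the arithmetic, and simply verify that no dimension-dependent constant hidden in the argument grows with $d$ rather than $n$ — which is clear because $\Phi$ and all step norms are sums over the $n$ primal coordinates only.
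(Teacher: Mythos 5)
The paper does not prove this lemma at all --- it is imported verbatim from \cite{LeeV21} and used as a black box, so there is no in-paper argument to compare against. Your sketch is a correct outline of the standard robust-central-path potential analysis from that reference (feasibility via $\delta_x\in\ker\mA^\top$, $\delta_s\in\operatorname{range}(\mA)$; step-norm bound from the normalization and the unit-norm projection in the $(\omX\omS^{-1})^{1/2}$-geometry; iteration count from the multiplicative $t$-schedule; invariance of $\Phi\le 16n$ via first-order decrease against second-order and approximation errors), and your decision to defer the constant-chasing to \cite{LeeV21} mirrors exactly what the paper does. Two small points: you correctly read the potential as being applied to the centrality error $xs/t-1$ (the paper's notation $\Phi(xs/t)$ only makes sense under that reading, since $\cosh(\lambda)\gg 16n$), and the gradient lower bound you invoke is more naturally stated as $\|\nabla\Phi(v)\|_2\ge\lambda\sqrt{\Phi(v)-n}$ (or $\ge\frac{\lambda}{\sqrt n}(\Phi(v)-n)$ via the $\ell_1$ norm), but these are cosmetic and do not affect the validity of the outline.
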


\begin{algorithm2e}[t]
\caption{Robust central path algorithm. \label{alg:robust_ipm}}
\SetKwProg{Proc}{procedure}{}{}
\SetKwProg{Params}{parameters}{}{}
\Params{}{
	$\lambda = 16 \log 40 \sqrt{n}$,
	$h = 1/(128\lambda\sqrt{n})$,
	$\Phi(v) := \sum_{i\in[n]} (\exp(\lambda v_i) + \exp(-\lambda v_i))/2$.
}
\Proc{$\textsc{Solve}(\cA \in \R^{n\times d}, x \in \R^n_{>0}, s \in \R^n_{>0}, t^\init > 0, t^\target > 0)$}{
\While{$t \neq t^\target$}{
Pick $\ox\approx_{1/48} x$, $\os \approx_{1/48} s$, $\|\ov-xs/t\|_\infty \le \frac{1}{48\lambda}$\\
Let $t' = \max(t/(1+h), t^\target)$, $g = -\nabla \Phi(\ov)$ \\
Compute $\delta_x, \delta_s$ where
\begin{align}
\delta_s =&~ \cA(\cA^\top\omX\omS^{-1}\cA)^{-1}\cA^\top\omS^{-1}\frac{t'}{32\lambda} \frac{g}{\|g\|_2}  \label{eq:ipm}\\
\delta_x =&~ \omS^{-1}\frac{t'}{32\lambda} \frac{g}{\|g\|_2} - \omX\omS^{-1}\delta_s \notag
\end{align}
$x \leftarrow x + \delta_x$,
$s \leftarrow s + \delta_s$,
$t \leftarrow t'$
}
\Return $x, s$
}
\end{algorithm2e}

To show that \Cref{alg:commodity_ipm} performs a similar computation as \eqref{eq:ipm} in \Cref{alg:robust_ipm}, we start by observing the following \Cref{lem:structure:hessian}.

\begin{fact}\label{lem:structure:hessian}
For given $\ox=(\ox_1, ..., \ox_{k+1}),\os=(\os_1, ..., \os_{k+1})\in\R^{(k+1)m}$, let $\mD_i = \omX_i\omS^{-1}_i$ for $i=1,...,k+1$.
Further, let $\mD_\Sigma = \sum_{i=1}^{k+1} \mD_i$. Then
    \begin{align*}
    \cA^\top \omX \omS^{-1} \cA
    = 
    \left[\begin{array}{ccc|c}
    \mA^\top \mD_1 \mA   &      &  0  &   \mA^\top \mD_1\\
     & \ddots &  &  \vdots \\
    0  &   &   \mA^\top\mD_k\mA  &   \mA^\top\mD_k\\
    \hline 
    \mD_1\mA  &  \cdots & \mD_k\mA   &    \mD_\Sigma
    \end{array}\right]
    \end{align*}
\end{fact}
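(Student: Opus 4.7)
The statement is a direct block-matrix calculation, so my plan is simply to verify it by computing the product block by block and verifying each of the four block-types (top-left $k\times k$ diagonal blocks, top-right column of blocks, bottom-left row of blocks, and the $(k{+}1,k{+}1)$ corner) matches the claim.

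First I would view $\omX\omS^{-1}$ as the block-diagonal matrix $\mathrm{diag}(\mD_1,\dots,\mD_{k+1})$, where each $\mD_i=\omX_i\omS_i^{-1}$ is the $m\times m$ diagonal submatrix corresponding to the $i$-th block of rows of $\cA$. Then computing $\omX\omS^{-1}\cA$ just left-multiplies the $i$-th block row of $\cA$ by $\mD_i$, giving a matrix whose $i$-th block row (for $i\le k$) is $[0,\dots,\mD_i\mA,\dots,0,\mD_i]$ and whose last block row is $[0,\dots,0,\mD_{k+1}]$.

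Next I would compute $\cA^\top(\omX\omS^{-1}\cA)$ by taking the block inner products. For the top-left $k\times k$ block region: the $(i,j)$ block equals $\sum_\ell \cA_{\ell i}^\top\mD_\ell\cA_{\ell j}$, and since the only nonzero entry of the $i$-th block column of $\cA$ among the first $k$ block rows sits at row $i$, this sum vanishes unless $i=j$, in which case it equals $\mA^\top\mD_i\mA$. For the top-right column of blocks, $\cA_{\ell,k+1}=\mI$ for every $\ell$, so the $(i,k+1)$ block for $i\le k$ is $\mA^\top\mD_i\mI=\mA^\top\mD_i$, and the bottom-left row of blocks is its transpose $\mD_j\mA$. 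Finally, the $(k{+}1,k{+}1)$ block is $\sum_{\ell=1}^{k+1}\mI\cdot\mD_\ell\cdot\mI=\mD_\Sigma$.

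Assembling these four cases gives exactly the stated matrix, which completes the proof. There is really no obstacle here: the entire content is that the first $k$ block columns of $\cA$ each ``select'' a single block row (producing a block-diagonal pattern $\mA^\top\mD_i\mA$ with zeros off-diagonal), while the last block column of $\cA$ is a stacked identity that couples to every block row (producing the $\mA^\top\mD_i$ border and the $\mD_\Sigma$ corner). The only thing to be careful about is bookkeeping of block indices; once that is fixed, the identity reduces to reading off the four cases above.
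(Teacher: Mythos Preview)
Your proposal is correct; the paper in fact states this as a \emph{Fact} without any proof, treating the block-matrix computation as self-evident. Your block-by-block verification is exactly the natural way to justify it and there is nothing to add.
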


We can rewrite the inverse of this matrix using the following lemma.
This is just the Schur complement of a block matrix. For completeness, a proof is given in \Cref{sec:appendix:schur}.

\begin{restatable}{lemma}{schurcomplement}
\label{lem:structure:psd_inverse}
Suppose we are given a positive definite block matrix of form
$$\begin{pmatrix}A&B\\C&D\end{pmatrix}\in\mathbb{R}^{(m+n)\times(m+n)}.$$
Then, $D\in\mathbb{R}^{n\times n}$ and $E=A-BD^{-1}C$ are positive definite and hence invertible. 
Furthermore,
\begin{align}
\begin{pmatrix}A&B\\C&D\end{pmatrix}^{-1}
&=
\begin{pmatrix}I&0\\-D^{-1}C&I\end{pmatrix}
\begin{pmatrix}E^{-1}&0\\0&I\end{pmatrix}
\begin{pmatrix}I&-BD^{-1}\\0&D^{-1}\end{pmatrix}
\label{eq:structure:psd_inverse}
\end{align}
\end{restatable}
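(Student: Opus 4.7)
The plan is to prove this via the standard block LDU factorization, so the heavy lifting reduces to verifying one algebraic identity together with two short positive-definiteness arguments. No graph structure or interior-point machinery is needed — the statement is purely linear algebra. I will write $M = \begin{pmatrix}A&B\\C&D\end{pmatrix}$ throughout. Since $M$ is positive definite we have $M$ symmetric, so $C=B^\top$, and I will use this freely.

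First I would establish positive definiteness of $D$. For any nonzero $y\in\R^n$, the vector $v=(0,y)^\top\in\R^{m+n}$ is nonzero and satisfies $v^\top M v = y^\top D y$, which is positive because $M$ is positive definite. Hence $D$ is positive definite and in particular invertible, so $D^{-1}$ and the Schur complement $E:=A-BD^{-1}C$ are well-defined.

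Next I would show that $E$ is positive definite. For any nonzero $x\in\R^m$, set $v=(x,-D^{-1}Cx)^\top$. This vector is nonzero (its top block is $x\neq 0$), and a direct block multiplication gives
\begin{align*}
v^\top M v &= x^\top A x - x^\top B D^{-1} C x - x^\top C^\top D^{-1} C x + x^\top C^\top D^{-1} D D^{-1} C x \\
&= x^\top (A - B D^{-1} C) x = x^\top E x,
\end{align*}
using $C = B^\top$ to cancel the middle terms. Since $v^\top M v>0$, we get $x^\top E x>0$, proving $E$ is positive definite and therefore invertible.

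Finally, for the inverse formula, I would verify the factorization
\begin{align*}
\begin{pmatrix}A&B\\C&D\end{pmatrix}
=
\begin{pmatrix}I & BD^{-1}\\0 & I\end{pmatrix}
\begin{pmatrix}E & 0\\0 & D\end{pmatrix}
\begin{pmatrix}I & 0\\D^{-1}C & I\end{pmatrix}
\end{align*}
by expanding the right-hand side block-by-block: the $(1,1)$ entry becomes $E + BD^{-1}C = A$, the $(1,2)$ entry becomes $B$, the $(2,1)$ entry becomes $C$, and the $(2,2)$ entry becomes $D$. Each of the three factors is invertible (the outer two are unit triangular with inverses obtained by negating the off-diagonal block, and the middle one is block-diagonal with invertible blocks by the preceding steps), so I can invert the product in reverse order to obtain
\begin{align*}
M^{-1}
=
\begin{pmatrix}I & 0\\-D^{-1}C & I\end{pmatrix}
\begin{pmatrix}E^{-1} & 0\\0 & D^{-1}\end{pmatrix}
\begin{pmatrix}I & -BD^{-1}\\0 & I\end{pmatrix}.
\end{align*}
The form stated in the lemma then follows by writing the middle diagonal as $\begin{pmatrix}E^{-1}&0\\0&I\end{pmatrix}\begin{pmatrix}I&0\\0&D^{-1}\end{pmatrix}$ and absorbing the second factor into the rightmost matrix, which turns its $(2,2)$ block from $I$ into $D^{-1}$ and leaves the $-BD^{-1}$ block untouched. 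There is no real obstacle here — the only mild care needed is to invoke $C=B^\top$ when checking that $E$ inherits positive definiteness; everything else is mechanical block multiplication.
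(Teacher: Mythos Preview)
Your proposal is correct and follows essentially the same approach as the paper: both prove $D\succ0$ via the principal-submatrix argument, prove $E\succ0$ via a congruence (the paper conjugates $M$ by a unit-triangular matrix, you apply that congruence to a single vector $v=(x,-D^{-1}Cx)$), and verify the inverse formula by the block LDU factorization (the paper multiplies the claimed inverse into $M$ directly, you factor $M$ first and invert; these are the same computation read in opposite directions). One small remark: in your $E\succ0$ step the last two terms cancel without invoking $C=B^\top$, so symmetry is only needed to ensure $D^{-1}$ is symmetric --- but this does not affect correctness.
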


The inverse of $\cA^\top\omX\omS^{-1}\cA$ can now be written as follows, which requires only the inversion of an $O(n)\times O(n)$ matrix.

\begin{lemma}\label{lem:structure:inverse}
For given $\ox=(\ox_1, ..., \ox_{k+1}),\os=(\os_1, ..., \os_{k+1})\in\R^{(k+1)m}$, let $\mD_i = \omX_i\omS^{-1}_i$ for $i=1,...,k+1$.
Further, let $\mD_\Sigma = \sum_{i=1}^{k+1} \mD_i$ and define $\mE$ as in \eqref{eq:ipm:defineE}.
Then

\begin{align*}
(\cA^\top \omX \omS^{-1} \cA)^{-1}
= 
\left[\begin{array}{ccc|c}
\mI&&&0\\
&\ddots&&\vdots\\
&&\mI&0\\
\hline
-\mD_\Sigma^{-1}\mD_1\mA&\cdots&-\mD_\Sigma^{-1}\mD_k\mA&\mI
\end{array}\right]
\left[\begin{array}{ccc|c}
    &&&0\\
    &\mE^{-1}& &\vdots\\
    &&&0\\
    \hline
    0&\cdots&0&\mI
\end{array}\right]
\left[\begin{array}{ccc|c}
    \mI&&&-\mA^\top\mD_1\mD_\Sigma^{-1}\\
    &\ddots&&\vdots\\
    &&\mI&-\mA^\top\mD_k\mD_\Sigma^{-1}\\
    \hline
    0&\cdots&0&\mD_\Sigma^{-1}
\end{array}\right]
\end{align*}
\end{lemma}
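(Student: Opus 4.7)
The plan is to recognize this identity as a direct instance of the Schur-complement block-inversion formula (\Cref{lem:structure:psd_inverse}) applied to the block decomposition already given by \Cref{lem:structure:hessian}. Concretely, write
\[
\cA^\top \omX \omS^{-1} \cA = \begin{pmatrix} \mA_{\mathrm{blk}} & \mB_{\mathrm{blk}} \\ \mC_{\mathrm{blk}} & \mD_{\mathrm{blk}} \end{pmatrix},
\]
where (reading off \Cref{lem:structure:hessian}) $\mA_{\mathrm{blk}}$ is the $kn\times kn$ block-diagonal matrix $\operatorname{diag}(\mA^\top\mD_1\mA,\ldots,\mA^\top\mD_k\mA)$, $\mB_{\mathrm{blk}}$ is the $kn\times m$ column of blocks $\mA^\top\mD_i$, $\mC_{\mathrm{blk}} = \mB_{\mathrm{blk}}^\top$ is the row of blocks $\mD_i\mA$, and $\mD_{\mathrm{blk}} = \mD_\Sigma$.

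Next I would instantiate the Schur-complement identity with these blocks. The lower-left factor $-\mD_{\mathrm{blk}}^{-1}\mC_{\mathrm{blk}}$ gives exactly the row $[-\mD_\Sigma^{-1}\mD_1\mA,\ldots,-\mD_\Sigma^{-1}\mD_k\mA]$ that appears in the first matrix on the right-hand side of the claim, while the upper-right factor $-\mB_{\mathrm{blk}}\mD_{\mathrm{blk}}^{-1}$ gives the column of blocks $-\mA^\top\mD_i\mD_\Sigma^{-1}$ appearing in the third matrix. Finally, the Schur complement $\mA_{\mathrm{blk}} - \mB_{\mathrm{blk}}\mD_{\mathrm{blk}}^{-1}\mC_{\mathrm{blk}}$ is literally the definition of $\mE$ in \eqref{eq:ipm:defineE}, since the block-diagonal term and the rank-$m$ correction match term for term. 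With these three identifications, the product of the three factors is exactly what \Cref{lem:structure:psd_inverse} outputs, and the claim follows.

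The only real thing to check is the hypothesis of \Cref{lem:structure:psd_inverse}, namely that the matrix being inverted is positive definite so that $\mD_\Sigma$ and $\mE$ are themselves invertible. Since $x,s>0$ along the central path, $\omX\omS^{-1}$ is a positive diagonal matrix, so it suffices to argue that $\cA$ has full column rank. This is a short direct check: if $\cA (z_1,\ldots,z_k,z_u)^\top = 0$, then the last block row forces $z_u = 0$, and each of the remaining rows then gives $\mA z_i = 0$; the standing full-rank assumption on $\mA$ (stated just before \Cref{thm:commodity_ipm}) then yields $z_i = 0$ for all $i$. So $\cA^\top \omX \omS^{-1} \cA$ is positive definite, \Cref{lem:structure:psd_inverse} applies, and no further obstacle remains—the proof is essentially a bookkeeping verification that the four blocks of the Schur-complement formula are exactly the four blocks listed in the statement.
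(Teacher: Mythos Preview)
Your proposal is correct and follows essentially the same approach as the paper: identify the block structure via \Cref{lem:structure:hessian}, check positive definiteness of $\cA^\top\omX\omS^{-1}\cA$, and then apply the Schur-complement inversion formula \Cref{lem:structure:psd_inverse}, verifying that the resulting Schur complement is exactly $\mE$ from \eqref{eq:ipm:defineE}. Your explicit verification that $\cA$ has full column rank is slightly more detailed than the paper, which simply asserts it.
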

\begin{proof}
Since $\omX\omS^{-1}>0$ and $\cA$ has linearly independent columns, $(\cA^\top \omX \omS^{-1} \cA)$ is positive definite.
By \Cref{lem:structure:hessian}, we have
    \begin{align*}
    \cA^\top \omX \omS^{-1} \cA
    = 
    \left[\begin{array}{ccc|c}
    \mA^\top \mD_1 \mA   &      &  0  &   \mA^\top \mD_1\\
     & \ddots &  &  \vdots \\
    0  &   &   \mA^\top\mD_k\mA  &   \mA^\top\mD_k\\
    \hline 
    \mD_1\mA  &  \cdots & \mD_k\mA   &    \mD_\Sigma
    \end{array}\right]=:
    \begin{bmatrix}A&B\\C&D\end{bmatrix}
    \end{align*}
We then apply \Cref{lem:structure:psd_inverse}. Since
\begin{align*}
E&=A-BD^{-1}C\\
&=
    \left[\begin{array}{ccc}
    \mA^\top \mD_1 \mA &  &   0  \\
     & \ddots & \\
    0   &  & \mA^\top\mD_k\mA 
    \end{array}\right]-
    \left[\begin{array}{c}
    \mA^\top \mD_1\\
    \vdots \\
    \mA^\top\mD_k\\
    \end{array}\right]
    \mD_\Sigma^{-1}
    \left[\begin{array}{ccc}
    \mD_1\mA  &  \cdots & \mD_k\mA
    \end{array}\right]%
\end{align*}
which is exactly the definition of $\mE$ in \eqref{eq:ipm:defineE},
substituting everything into \eqref{eq:structure:psd_inverse}
we have the desired result.
\end{proof}

The following \Cref{lem:structure:delta} tells us that \Cref{alg:commodity_ipm} and \Cref{alg:robust_ipm} perform indeed the same steps to $x$ and $s$.

\begin{lemma}\label{lem:structure:delta}
For given $\ox=(\ox_1, ..., \ox_{k+1}),\os=(\os_1, ..., \os_{k+1})\in\R^{(k+1)m}$, let $\mD_i = \omX_i\omS^{-1}_i$ for $i=1,...,k+1$.
Further, let $\mD_\Sigma = \sum_{i=1}^{k+1} \mD_i$ and define $\mE$ as in \eqref{eq:ipm:defineE}.

Let $w,v_i,\delta_s^i,\delta_x^i$ (for $i=1,...,k+1$) be defined as in \Cref{alg:commodity_ipm}.
Then
\begin{align*}
\delta_s =&~ \cA(\cA^\top\omX\omS^{-1}\cA)^{-1}\cA^\top\omS^{-1}\frac{t'}{32\lambda} \frac{g}{\|g\|_2}  \\
\delta_x =&~ \omS^{-1}\frac{t'}{32\lambda} \frac{g}{\|g\|_2} - \omX\omS^{-1}\delta_s \notag
\end{align*}
In particular, \Cref{alg:commodity_ipm} takes the same steps $\delta_x,\delta_s$ as \Cref{alg:robust_ipm}.
\end{lemma}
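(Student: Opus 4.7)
The plan is to unfold the factored form of $(\cA^\top \omX \omS^{-1} \cA)^{-1}$ provided by Lemma~\ref{lem:structure:inverse} and verify, block by block, that applying this factored inverse to $\cA^\top \omS^{-1} \tilde g$ (where $\tilde g := \tfrac{t'}{32\lambda\|g\|_2} g$) produces exactly the intermediate quantities $w$ and $(v_1,\dots,v_k)$ defined in Algorithm~\ref{alg:commodity_ipm}, and that one final multiplication by $\cA$ yields the sub-vector expressions for $\delta_s^i$ on Line~\ref{line:commodityipm:deltaxdeltas}. The identity for $\delta_x$ in Algorithm~\ref{alg:robust_ipm} is stated block-wise in Algorithm~\ref{alg:commodity_ipm}, so it follows automatically once the $\delta_s$ identity is established.

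First I would absorb the scalar $\tfrac{t'}{32\lambda\|g\|_2}$ into $\tilde g = (\tilde g_1,\dots,\tilde g_{k+1})$ so that all remaining manipulations are linear. Using the block structure of $\cA$ from \eqref{eq:ipm:standardform}, one reads off
\[
\cA^\top \omS^{-1} \tilde g \;=\; \bigl(\mA^\top \omS_1^{-1} \tilde g_1,\; \ldots,\; \mA^\top \omS_k^{-1} \tilde g_k,\; \textstyle\sum_{i=1}^{k+1} \omS_i^{-1} \tilde g_i\bigr).
\]
Applying the rightmost block factor $R$ in Lemma~\ref{lem:structure:inverse} to this vector, the last block entry becomes $\tilde w := \mD_\Sigma^{-1} \sum_{i=1}^{k+1} \omS_i^{-1} \tilde g_i$, and the $i$-th block entry for $i\le k$ becomes $\mA^\top(\omS_i^{-1} \tilde g_i - \mD_i \tilde w)$. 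Note that $\tilde w$ is exactly $\tfrac{t'}{32\lambda\|g\|_2}$ times the vector $w$ from Algorithm~\ref{alg:commodity_ipm}.

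Next I would apply the middle factor $M$, which inverts $\mE$ on the first $k$ blocks and acts as the identity on the last block. The resulting first $k$ blocks are, by definition, $\tfrac{t'}{32\lambda\|g\|_2} (v_1,\dots,v_k)$ from Algorithm~\ref{alg:commodity_ipm}; call these $\tilde v_1,\dots,\tilde v_k$. The last block remains $\tilde w$. Applying the left factor $L$ then leaves the first $k$ blocks unchanged and turns the last block into $\tilde y := \tilde w - \sum_{j=1}^k \mD_\Sigma^{-1} \mD_j \mA \tilde v_j$, which (since each $\mD_\Sigma^{-1}\mD_j$ is diagonal with entries $d_j/d_\Sigma$) equals $\tilde w - \sum_{j=1}^k \tfrac{d_j}{d_\Sigma}\mA \tilde v_j$.

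Finally I would multiply by $\cA$: the block structure of $\cA$ gives $\delta_s^i = \mA \tilde v_i + \tilde y$ for $i \le k$ and $\delta_s^{k+1} = \tilde y$, and reinserting the scalar $\tfrac{t'}{32\lambda\|g\|_2}$ matches verbatim the formulas on Line~\ref{line:commodityipm:deltaxdeltas}. The identity for $\delta_x$ then follows by concatenating the block-wise identities $\delta_x^i = \omS_i^{-1} \tfrac{t'}{32\lambda}\tfrac{g_i}{\|g\|_2} - \omX_i\omS_i^{-1}\delta_s^i$ from Algorithm~\ref{alg:commodity_ipm}. The whole argument is essentially mechanical block bookkeeping; the only possible pitfall is keeping track of the global scalar $\tfrac{t'}{32\lambda\|g\|_2}$ (which in Algorithm~\ref{alg:commodity_ipm} is factored out of $w$ and $v_i$ and reinstated only in $\delta_s^i$), and double-checking that $-\mD_\Sigma^{-1}\mD_j\mA$ has diagonal prefactor $d_j/d_\Sigma$ so that the formula matches the algorithm's $\sum_j \tfrac{d_j}{d_\Sigma}\mA v_j$ exactly.
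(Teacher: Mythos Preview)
Your proposal is correct and follows essentially the same approach as the paper: compute $\cA^\top\omS^{-1}$ applied to the (scaled) gradient, then peel off the three block factors of $(\cA^\top\omX\omS^{-1}\cA)^{-1}$ from Lemma~\ref{lem:structure:inverse} right to left, identify the intermediate vectors with the $w$ and $v_i$ of Algorithm~\ref{alg:commodity_ipm}, and finish with one multiplication by $\cA$. The only cosmetic difference is that the paper keeps the scalar $\tfrac{t'}{32\lambda\|g\|_2}$ outside and reattaches it at the end, whereas you absorb it into $\tilde g$ from the start; both are fine.
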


\begin{proof}
We prove $\frac{32\lambda\|g\|_2}{t'}\delta_s = \cA(\cA^\top\omX\omS^{-1}\cA)^{-1}\cA^\top\omS^{-1} g$ by rewriting the right-hand expression step by step.
We start with
$\cA^\top \omS^{-1} g$
which can be written as
$$
\cA^\top \omS^{-1} g
=
    \left[
    \begin{array}{ccc|c}
       \mA^\top & & 0 & 0 \\
        & \ddots & & \vdots \\
        0 & & \mA^\top & 0 \\
        \hline
        \mI & \cdots & \mI & \mI
    \end{array}
    \right]\omS^{-1}g
    =
    \begin{pmatrix}
        \mA^\top \omS_1^{-1}g_1 \\
        \vdots \\
        \mA^\top \omS_k^{-1} g_k \\
        \sum_{i=1}^{k+1} \omS_i^{-1} g_i
    \end{pmatrix}
$$
where the bottom block is exactly $\mD_\Sigma w$ from \Cref{alg:commodity_ipm}.

Next, we must multiply this vector by $(\cA^\top\omX\omS^{-1}\cA)^{-1}$ which by \Cref{lem:structure:inverse} can be written as
$$
(\cA^\top\omX\omS^{-1}\cA)^{-1}
=
\left[\begin{array}{ccc|c}
\mI&&&0\\
&\ddots&&\vdots\\
&&\mI&0\\
\hline
-\mD_\Sigma^{-1}\mD_1\mA&\cdots&-\mD_\Sigma^{-1}\mD_k\mA&\mI
\end{array}\right]
\left[\begin{array}{ccc|c}
    &&&0\\
    &\mE^{-1}& &\vdots\\
    &&&0\\
    \hline
    0&\cdots&0&\mI
\end{array}\right]
\left[\begin{array}{ccc|c}
    \mI&&&-\mA^\top\mD_1\mD_\Sigma^{-1}\\
    &\ddots&&\vdots\\
    &&\mI&-\mA^\top\mD_k\mD_\Sigma^{-1}\\
    \hline
    0&\cdots&0&\mD_\Sigma^{-1}
\end{array}\right]
$$
We analyze the product of this matrix with $\cA^\top\omS^{-1} g$ by multiplying from right to left. We start with
$$
\left[\begin{array}{ccc|c}
    \mI&&&-\mA^\top\mD_1\mD_\Sigma^{-1}\\
    &\ddots&&\vdots\\
    &&\mI&-\mA^\top\mD_k\mD_\Sigma^{-1}\\
    \hline
    0&\cdots&0&\mD_\Sigma^{-1}
\end{array}\right]
\begin{pmatrix}
        \mA^\top \omS_1^{-1}g_1 \\
        \vdots\\
        \mA^\top \omS_k^{-1} g_2 \\
        \sum_{i=1}^{k+1} \omS_i^{-1} g_i
    \end{pmatrix}
=
\begin{pmatrix}
        \mA^\top (\omS_1^{-1} g_1 - \mD_1 w) \\
        \vdots \\
        \mA^\top (\omS_k^{-1} g_k - \mD_k w) \\
        w
    \end{pmatrix}.
$$
Next we multiply
$$
\left[\begin{array}{ccc|c}
    &&&0\\
    &\mE^{-1}& &\vdots\\
    &&&0\\
    \hline
    0&\cdots&0&\mI
\end{array}\right]
\begin{pmatrix}
        \mA^\top (\omS_1^{-1} g_1 - \mD_1 w) \\
        \vdots\\
        \mA^\top (\omS_k^{-1} g_k - \mD_k w) \\
        w
    \end{pmatrix}
    =
    \begin{pmatrix}
        \mE^{-1} \begin{pmatrix}\mA^\top (\omS_1^{-1} g_1 - \mD_1 w) \\
        \vdots \\
        \mA^\top (\omS_k^{-1} g_k - \mD_k w)
        \end{pmatrix}\\
        w
    \end{pmatrix}
    =
    \begin{pmatrix}
        v_1 \\
        \vdots \\
        v_k \\
        w
    \end{pmatrix}.
$$
Here the vectors $v_1,...,v_k$ are exactly as defined in \Cref{alg:commodity_ipm}.
Finally, we can show
$$
(\cA^\top\omX\omS^{-1}\cA)^{-1} \cA^\top \omS^{-1} g
=
\left[\begin{array}{ccc|c}
\mI&&&0\\
&\ddots&&\vdots\\
&&\mI&0\\
\hline
-\mD_\Sigma^{-1}\mD_1\mA&\cdots&-\mD_\Sigma^{-1}\mD_k\mA&\mI
\end{array}\right]
\begin{pmatrix}
        v_1 \\
        \vdots \\
        v_k \\
        w
    \end{pmatrix}
    =
\begin{pmatrix}
        v_1 \\
        \vdots\\
        v_k \\
        w-\sum_{i=1}^k\mD_\Sigma^{-1}\mD_i\mA v_i
    \end{pmatrix}
$$
To prove $\frac{32\lambda\|g\|_2}{t'}\delta_s = \cA(\cA^\top\omX\omS^{-1}\cA)^{-1}\cA^\top\omS^{-1}g$
we are left with
\begin{align*}
\cA (\cA^\top\omX\omS^{-1}\cA)^{-1}\cA^\top\omS^{-1}g
=&~
    \left[\begin{array}{ccc|c}
       \mA & & 0 & \mI \\
        & \ddots & & \vdots \\
        0 & & \mA & \mI \\
        \hline
        0 & \cdots & 0 & \mI
    \end{array}\right]
    \begin{pmatrix}
        v_1 \\
        \vdots \\
        v_k \\
        w-\sum_{i=1}^k\mD_\Sigma^{-1}\mD_i\mA v_i
    \end{pmatrix} \\
=&~
\left(\begin{array}{lcl}
        w + \mA v_1 & - &\sum_{i=1}^k \mD_\Sigma^{-1}\mD_i\mA v_i \\
        &\vdots& \\
        w + \mA v_k & - &\sum_{i=1}^k \mD_\Sigma^{-1}\mD_i\mA v_i \\
        w & - &\sum_{i=1}^k \mD_\Sigma^{-1}\mD_i\mA v_i
\end{array}\right)\\
=&~
\begin{pmatrix}
    \delta_s^1 \\
    \vdots \\
    \delta_s^{k+1}
\end{pmatrix} \frac{32\lambda\|g\|_2}{t'}
\end{align*}

where $\delta_s^i$ for $i=1,...,k+1$ are as defined in \Cref{alg:commodity_ipm}.
In particular, this implies that $\delta_s$ are the same in \Cref{alg:commodity_ipm} and \Cref{alg:robust_ipm}.
The proof for $\delta_x$ follows directly from definition of $\delta_x$ in \Cref{alg:commodity_ipm}.
\end{proof}

\begin{proof}[Proof of \Cref{thm:commodity_ipm}]
    We argue that \Cref{alg:commodity_ipm} performs the same steps as \Cref{alg:robust_ipm}. \Cref{thm:commodity_ipm} thus follows from \Cref{lem:robust_ipm}.

    As can be seen in \Cref{alg:commodity_ipm} and \Cref{alg:robust_ipm}, both algorithms pick $\ox,\os,\ov,g,t'$ in the same way.
    Only the description of the steps $\delta_x,\delta_s$ differs.
    By \Cref{lem:structure:delta} these steps are the same vectors, so \Cref{alg:commodity_ipm} computes the same vectors $x+\delta_x$, $s+\delta_s$ as \Cref{alg:robust_ipm}.
\end{proof}

\subsection{Bounds on Primal and Dual}

We must bound the largest and smallest values that occur in $x$ and $s$ throughout \Cref{alg:commodity_ipm}.
This is because the complexity of the data structures in \Cref{sec:vector} scale in the log of the ratio of largest to smallest entry. So we must guarantee that these ratios are polynomially bounded.

\begin{lemma}%
\label{lem:ratiobound}
Let $(x, y, s)$ be a feasible point with $xs\approx_{1/10} t$ and $w' = (x', y', s')$ be a feasible point with $x's'\approx_{1/10} t'$ for $t' \le t$. 

Then $\|x'/x\|_1 + \|s'/s\|_1 \le 3(k+1)m$.
\end{lemma}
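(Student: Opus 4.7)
The plan is to use the standard primal-dual orthogonality identity $(x-x')^\top(s-s') = 0$, together with the approximate complementarity $xs \approx_{1/10} t$, to convert $\|x'/x\|_1+\|s'/s\|_1$ into a bound involving only the duality gap terms $x^\top s$ and $x'^\top s'$.

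First I would verify the orthogonality. Since both $(x,y,s)$ and $(x',y',s')$ are feasible, $\cA^\top(x-x') = d-d = 0$ and $s-s' = (c-\cA y)-(c-\cA y') = -\cA(y-y')$. Therefore
\begin{equation*}
(x-x')^\top(s-s') = -(x-x')^\top \cA (y-y') = -(\cA^\top(x-x'))^\top(y-y') = 0,
\end{equation*}
which rearranges to the key identity $x^\top s' + x'^\top s = x^\top s + x'^\top s'$.

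Next I would convert $\|x'/x\|_1$ and $\|s'/s\|_1$ into these inner products. From $x_i s_i \ge e^{-1/10}\, t$ we get $1/x_i \le e^{1/10}\, s_i/t$, and similarly $1/s_i \le e^{1/10}\, x_i/t$. Summing coordinatewise,
\begin{equation*}
\|x'/x\|_1 + \|s'/s\|_1 \;\le\; \frac{e^{1/10}}{t}\bigl(x'^\top s + x^\top s'\bigr) \;=\; \frac{e^{1/10}}{t}\bigl(x^\top s + x'^\top s'\bigr),
\end{equation*}
where the equality is the orthogonality identity above. Finally, $xs \approx_{1/10} t$ gives $x^\top s \le e^{1/10}(k+1)m\, t$, and $x's' \approx_{1/10} t'$ with $t'\le t$ gives $x'^\top s' \le e^{1/10}(k+1)m\, t' \le e^{1/10}(k+1)m\, t$. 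Combining,
\begin{equation*}
\|x'/x\|_1 + \|s'/s\|_1 \;\le\; 2 e^{2/10}(k+1)m \;\le\; 3(k+1)m,
\end{equation*}
since $2 e^{0.2} < 3$.

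There is no real obstacle here; the only subtlety is spotting that the two cross-terms $x^\top s'$ and $x'^\top s$ that arise from the approximate-complementarity substitution are exactly the ones made controllable by the primal-dual orthogonality, so the $t$ in the denominator is swallowed by the diagonal duality gaps of the two iterates rather than by any individual cross inner product (which could in principle be much larger).
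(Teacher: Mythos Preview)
Your proof is correct and follows essentially the same approach as the paper: both use the primal--dual orthogonality identity $(x-x')^\top(s-s')=0$ to swap the cross terms for the diagonal duality gaps, then invoke $xs\approx_{1/10}t$ to convert $1/x_i$ and $1/s_i$ into $s_i/t$ and $x_i/t$. Your constant tracking is in fact a bit tighter than the paper's, which uses the cruder $t\le 2xs$ in place of your $t\le e^{1/10}xs$.
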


\cite[Lemma 3.3]{DadushHNV20} states \Cref{lem:ratiobound} in a slightly different form. They use the $\ell_\infty$-norm instead of the $\ell_1$-norm, and they use $xs=t$ instead of $xs\approx_{1/10}t$. However, their proof also directly implies \Cref{lem:ratiobound} as stated above. We repeat their proof here for completeness sake.
\begin{proof}[Proof of \Cref{lem:ratiobound}]
    By $\cA^\top (x-x') = 0$ and $s-s'=\cA(y-y')$ we have $(x-x')^\top(s-s') = 0$.
    This can be rewritten as 
    $$x^\top s' + x'^\top s = x^\top s+x'^\top s'.$$
    By assumption $xs\approx_{1/10} t$, $x's'\approx_{1/10} t'$, and $t' \le t$, the right hand side is upper bounded by $3(k+1)mt$.
    Dividing by $t$, and using that $t\le 2xs$ we obtain
    \begin{align*}
    \|\frac{x'}{x}\|_1 + \|\frac{s'}{s}\|_1
    =&~
    \sum_{i=1}^{(k+1)m} \left(\frac{x'_i}{x_i} + \frac{s'_i}{s_i}\right)
    =
    \sum_{i=1}^{k+1)m} \left(\frac{x'_i s_i}{x_i s_i} + \frac{s'_i x_i}{s_i x_i} \right)\\
    \le&~
    2
    \sum_{i=1}^{k+1)m} \left(\frac{x'_i s_i}{t} + \frac{s'_i x_i}{t}\right) \\
    =&~
    2 (x'^\top s + x^\top s') / t\\
    <&~
    3(k+1)m.
    \end{align*}
\end{proof}

\begin{corollary}\label{cor:bitlengthbound}
    Throughout the IPM we have $\frac{t}{10u} \le s \le 3(k+1)m\cdot s^\init$
    and $\frac{t}{(3(k+1)m s^\init}\le x \le u$
    where $u\in\R^{(k+1)m}$ is the vector of edge capacities repeated $k+1$ times.
\end{corollary}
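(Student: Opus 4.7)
The plan is to derive the four inequalities in two groups: obtain the upper bounds first (one from primal feasibility and one from \Cref{lem:ratiobound} applied between the initial and current iterates), and then derive the lower bounds by combining these with the near-centrality $xs \approx t$ guaranteed by \Cref{thm:commodity_ipm}.

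For the upper bound on $x$, I would use only primal feasibility. In the standard form of the $k$-commodity LP given in \eqref{eq:ipm:standardform}, the last block of the constraint $\cA^\top x = d$ is precisely $\sum_{i=1}^{k+1} x_i = u$. Combined with $x \ge 0$, this forces $x_i \le u$ blockwise for $i=1,\dots,k+1$, i.e.\ $x \le u$ in $\R^{(k+1)m}$ when $u$ is repeated $k+1$ times. For the upper bound on $s$, I would apply \Cref{lem:ratiobound} with $(x,y,s) = (x^\init,y^\init,s^\init)$ at parameter $t^\init$ and $(x',y',s')$ the current iterate at parameter $t \le t^\init$. Both centrality hypotheses are in place: \Cref{thm:commodity_ipm} gives $\|xs/t - 1\|_\infty \le 1/16$, so $xs \approx_{1/10} t$, and the assumption $\Phi(x^\init s^\init/t^\init) \le 16(k+1)m$ implies $|x^\init_i s^\init_i/t^\init - 1| \le \log(32(k+1)m)/\lambda$, which for our choice of $\lambda = \Theta(\sqrt{(k+1)m}\log)$ is much smaller than $1/10$. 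Thus \Cref{lem:ratiobound} yields $\|s/s^\init\|_1 \le 3(k+1)m$, and taking a single coordinate gives $s_i \le 3(k+1)m \cdot s^\init_i$.

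For the lower bounds I would invoke the centrality condition $\|xs/t - 1\|_\infty \le 1/16$ of \Cref{thm:commodity_ipm} in the form $x_i s_i \ge (15/16) t$. Dividing by the upper bound on $x_i$ gives
$$s_i \ge \frac{15 t}{16 x_i} \ge \frac{15 t}{16 u_i} \ge \frac{t}{10 u_i},$$
and dividing by the upper bound on $s_i$ gives
$$x_i \ge \frac{15 t}{16 s_i} \ge \frac{15 t}{48(k+1)m \cdot s^\init_i} \ge \frac{t}{3(k+1)m \cdot s^\init_i}$$
after absorbing constants. (The constants in the statement are loose, so rounding down at the end is harmless.)

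The only non-routine step is verifying that \Cref{lem:ratiobound} may indeed be applied at the initial iterate, i.e.\ that the centrality hypothesis there holds for $(x^\init, s^\init)$; this is the one place where one has to unpack the definition of $\Phi$ and use that $\lambda$ is chosen to be $\Omega(\sqrt{(k+1)m})$ so that the hypothesis $\Phi(x^\init s^\init/t^\init) \le 16(k+1)m$ yields an $\ell_\infty$ bound much tighter than $1/10$. Once that is confirmed, the rest of the proof is direct bookkeeping of constants.
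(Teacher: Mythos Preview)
Your proposal is correct and follows essentially the same route as the paper: $x\le u$ from primal feasibility, $s\le 3(k+1)m\,s^\init$ from \Cref{lem:ratiobound} applied between the initial and current iterates, and the two lower bounds from the centrality $xs\approx_{1/16} t$ combined with these upper bounds. One small slip in the step you flagged as non-routine: $\lambda = 16\log(40\sqrt{(k+1)m})$ is $\Theta(\log((k+1)m))$, not $\Theta(\sqrt{(k+1)m})$, so unpacking the potential bound gives $|x^\init_i s^\init_i/t^\init - 1|\le \log(32(k+1)m)/\lambda \approx 1/8$ rather than something negligible---but this is still within the slack of \Cref{lem:ratiobound} (its proof only uses $e^{1/10}<2$), so the argument goes through unchanged.
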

\begin{proof}
    At the start of the IPM we have $x^\init s^\init \approx_{1/16} t^\init$
    and later during the IPM we always have $xs \approx_{1/16} t < t^\init$ (see \Cref{thm:commodity_ipm}).
    So by \Cref{lem:ratiobound} we have $s/s^\init \le 3(k+1)m$, so $s \le 3(k+1)m\cdot s^\init$.
    In the other direction, we have $xs\approx_{1/16} t$ and $x \le u$,
    so $s \ge 0.1t/u$.
    The lower bound on $x$ comes from $xs\approx_{1/16} t$.
\end{proof}

\section{Vector Maintenance}
\label{sec:vector}
In \Cref{sec:IPM} be described what the interior point method looks like when applied to a multi-commodity flow instance. 
We observed (see \Cref{line:commodityipm:deltaxdeltas} in \Cref{alg:commodity_ipm}) that the update to the primal solution $x$ and slack of the dual $s$ are given by the following sums \eqref{eq:vector:s},\eqref{eq:vector:x}.
Here we write $x^{(t)},s^{(t)}$ for $x,s$ constructed at the end of the $t\th$ iteration of the interior point method. 
Further, $\beta \in \R$ is the normalization $t'/(32\lambda\|g\|_2)$  
and $z_j = \omS_j^{-1}g_j+\mD_j w$ from \Cref{alg:commodity_ipm}. 

For $j=1,...,k$ we have in \Cref{alg:commodity_ipm} that $x,s$ are of the form (if we define $v_{k+1}=0$):
\begin{align}
    s^{(t)}_j 
    :=&~
    s^{(t-1)}_j + \left(
        w
        + \mA v_j
        - \sum_{\ell=1}^{k} \mD_\ell(\mD_{\Sigma})^{-1} \mA v_\ell
        \right) \beta^{(t)} \notag\\
    =&~
    s^{(t-1)}_j + \left(
        w
        + \sum_{\ell=1}^{k+1} \mD_\ell(\mD_{\Sigma})^{-1} \mA (v_j - v_\ell)
        \right) \beta^{(t)} \label{eq:vector:s}
    \\
    x^{(t)}_j 
    :=&~ 
    x^{(t-1)}_j + \left(
        z_j
        + \mD_j\left(\mA v_j
        - \sum_{\ell=1}^{k} \mD_\ell(\mD_{\Sigma})^{-1} \mA v_\ell
        \right)\right) \beta \notag\\
        =&~
    x^{(t-1)}_j + \left(
        z_j
        + \mD_j\left(
        \sum_{\ell=1}^{k+1} \mD_\ell(\mD_{\Sigma})^{-1} \mA (v_j - v_\ell)
        \right)\right) \beta^{(t)} \label{eq:vector:x}
\end{align}
(Note that by letting $v_{k+1}=0$ we can write $x_{k+1},s_{k+1}$ in \Cref{alg:commodity_ipm} in this form as well.)

The vectors $x^{(t)},s^{(t)}$ are $m$-dimensional vectors and we cannot afford to write them down in each iteration as that takes $\Omega(m)$ for a total of $\Omega(m^{1.5})$ over all iteration of the interior point method. Observe that \Cref{alg:commodity_ipm} does not actually need access to $x^{(t)},s^{(t)}$, but entry-wise approximations $\ox^{(t)},\os^{(t)}$ suffice, where \Cref{alg:commodity_ipm} then uses $d_j = \ox^{(t-1)}_j/\os^{(t-1)}_j$ for all $j=1,...,k+1$ and $d_\Sigma = \sum_{j=1}^{k+1} d_j$. 
So in this section we want to create/present a data structure that maintains these approximations.

\begin{theorem}\label{thm:primaldual:maintenance}
There exists a deterministic data structure with the following operations
\begin{itemize}
    \item \textsc{Initialize}$(\mA\in\R^{m\times n},\epsilon\in(0,1], w\in\R^m, (z_j,s^{(0)}_j,x^{(0)}_j)_{1\le j \le k+1} \in \R_{>0}^m \times \R^m \times \R_{>0}^m \times \R_{>0}^m)$\\
    Initialize on the given incidence matrix $\mA$, edge weights $z_1,\ldots,z_{k+1},w,s^{(0)}_1,...,s^{(0)}_{k+1}$ and accuracy-parameter $\epsilon$ in $\hO(k^2m)$ time.
    \item \textsc{Update}$(i\in [m], c \in \R_{>0}^{k+1},c'\in\R)$ Set $(z_j)_i\leftarrow c_j$ for all $j\in[k]$, $w_i \leftarrow c'$, 
    in $\hO(k)$ amortized time.
    \item \textsc{Add}$(v_1,\ldots,v_k\in\R^n,\beta>0)$
    Let $v_{k+1}=0$. Let $\ox_j^{(t-1)},\os_j^{(t-1)}$ be the output from the last call to \textsc{Add}.
    Let $d_i = \ox_i^{(t-1)}/\os_i^{(t-1)}$ and $d_\Sigma := \sum_{j=1}^{k+1} d_j$ and let $x^{(t)}_j,s^{(t)}_j$ for $j=1,...,k$ as in \eqref{eq:vector:s},\eqref{eq:vector:x}.
    The $t\th$ call to \textsc{Add} returns $k+1$ vectors $\os^{(t)}_j,\ox^{(t)}_j$ for $j=1,...,k+1$ with
    $$
    \os_j^{(t)} \approx_\epsilon s_j^{(t)} \text{ for all }  j\in[k+1]
    $$
    $$
    \ox_j^{(t)} \approx_\epsilon x_j^{(t)} \text{ for all }  j\in[k+1]
    $$
    The vectors $\os^{(t)}_j,\ox^{(t)}_j$ for $j=1,...,k+1$ are returned as a pointer, together with a list $I\subset [k+1]\times[m]$ of indices $(j,i)$ where $(\ox_j^{(t)})_i$ or $(\os_j^{(t)})_i$ changed compared to $(\ox_j^{(t-1)})_i,(\os_j^{(t-1)})_i$.
    \item \textsc{Exact}$()$ Returns $s^{(t)}_j$ and $x^{(t)}_j$ for all $j=1,...,k+1$ in $O(k^2m)$ time.
\end{itemize}
The total time of the first $T\le \sqrt{m}$ calls to \textsc{Add} is bounded by $\tO(k^2 m^{o(1)} + Tkn\log W)$, if for all $t=1,...,T$ there is some $\mu^{(t)}\in\R_{>0}$ with
\begin{align*}
\sum_{j=1}^{k+1}\|(s_j^{(t-1)})^{-1} (s^{(t)}_j - s^{(t-1)}_j)\|_2^2 \le 1/10^2,
~~~
\sum_{j=1}^{k+1}\|(x_j^{(t-1)})^{-1} (x^{(t)}_j - x^{(t-1)}_j)\|_2^2 \le 1/10^2,
~~~ x^{\ell} s^{\ell} \approx_{1/10} \mu^{(t)}
\end{align*}
and $W$ upper bounds the ratio of largest to smallest entry of any $x_j^{(t)}$ or $s_j^{(t)}$.
\end{theorem}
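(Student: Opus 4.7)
The plan is to avoid ever materializing the $m$-dimensional vectors $\delta_{s,j},\delta_{x,j}$ explicitly and instead reduce the maintenance of $\os,\ox$ to a sequence of heavy-hitter queries on classical edge-vertex incidence matrices, exactly as sketched in \Cref{sec:overview:singlecommodity}. Concretely, I would keep the auxiliary vectors $d_j=\ox_j/\os_j$, $d_\Sigma=\sum_j d_j$, together with $w$ and the $z_j$'s, in memory; these are only modified during \textsc{Update} or at the (few) coordinates where \textsc{Add} changes $\os,\ox$, so their maintenance is charged to those coordinate-updates and the $\hO(k^2m)$ initialization.

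For each \textsc{Add} I rewrite the step from \eqref{eq:vector:s} as $\delta_{s,j}=\beta(w+\sum_\ell \mD_\ell\mD_\Sigma^{-1}\mA(v_j-v_\ell))$ and, for each pair $(j,\ell)\in[k+1]^2$, invoke a heavy-hitter data structure on the classical incidence-matrix product $\mD_j^{1/2}\mD_\ell\mD_\Sigma^{-1}\mA(v_j-v_\ell)$ using the expander-decomposition based structure of \cite{BrandLN+20}. For any coordinate $i$ reported heavy in some sub-instance, I check in $O(k)$ time whether the full sum $(\delta_{s,j})_i$ actually exceeds the multiplicative threshold $\epsilon(\os_j)_i$, and if so commit the update to $\os,\ox$ (together with the implied update to $\ox,\os$ and $d$). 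The multiplicative threshold is converted to an additive one for the heavy-hitter sub-instance via $\ox_j\os_j\approx\mu$, which lets me set the sub-instance threshold to $\Theta(\epsilon\sqrt{\mu}/k)$.

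The heart of the argument is bounding the number of heavy hits across all iterations, for which I would use \Cref{lem:overview:normbounds}: applied to the current $v_1,\ldots,v_{k+1}$ it gives
\[
\sum_{j,\ell}\bigl\|\mD_j^{1/2}\mD_\ell\mD_\Sigma^{-1}\mA(v_j-v_\ell)\bigr\|_2^2 \;\le\; 4\sum_j\bigl\|\mD_j^{1/2}\delta_{s,j}\bigr\|_2^2 \;=\; 4\|\mD^{1/2}\delta_s\|_2^2,
\]
and the right-hand side is $O(\mu)$ after substituting the input assumption $\sum_j\|s_j^{-1}(s_j^{(t)}-s_j^{(t-1)})\|_2^2\le 1/100$ together with $xs\approx\mu$. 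A Markov-style argument then bounds the number of entries above threshold by $\tO(k^2/\epsilon^2)$ per \textsc{Add}, so $\tO(Tk^2/\epsilon^2)$ hits in total over the $T\le\sqrt{m}$ calls. Combining with $O(kn)$ per iteration to precompute the $k$ products $\mA v_j$ in the $n$-dimensional space (using sparsity of $\mA$), with $m^{o(1)}$ per reported hit for the sub-instance queries, and with the $\hO(k^2m)$ initialization for the $k^2$ data structures yields the claimed total, where the $\log W$ factor arises from representing $\os,\ox$ on a logarithmic grid over the ratio-bounded range from \Cref{cor:bitlengthbound}.

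The main obstacle I anticipate is the bookkeeping needed to coordinate the $k^2$ heavy-hitter instances while their row weights $\mD_j\mD_\Sigma^{-1}$ evolve, both via explicit \textsc{Update} calls and via the implicit effect of changes to $\os,\ox$ committed during previous \textsc{Add}'s. Each weight change must be propagated correctly to every sub-instance that uses it, and its amortized cost must be absorbed by the IPM potential rather than paid per change; this is exactly where the restriction $T\le\sqrt{m}$ enters, as it matches the amortization budget used by the expander-decomposition frameworks of \cite{BrandLN+20,BrandLL+21}. A secondary subtlety is verifying that the same data structure simultaneously supports the $\delta_{x,j}$ update in \eqref{eq:vector:x}, but this follows immediately since $\delta_{x,j}$ differs from $\delta_{s,j}$ only by the diagonal multiplication $\mD_j$ plus the explicitly known term $z_j$, so the set of candidate heavy coordinates is the same up to $O(k)$ easily enumerated exceptions.
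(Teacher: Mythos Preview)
Your high-level strategy matches the paper's: decompose the multi-commodity heavy-hitter task into $O(k^2)$ classical incidence-matrix instances indexed by pairs $(j,\ell)$, control the total number of hits via the norm inequality (\Cref{lem:overview:normbounds}/\Cref{lem:ipm:normbounds}), and use the expander-based data structure of \cite{BrandLN+20}. The paper packages each of the $O(k^2)$ instances as an invocation of \Cref{lem:vector:maintenance} (\Cref{alg:primaldual:maintenance}), but this is essentially what you describe.

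There is, however, a genuine gap in your description. You detect, at each call to \textsc{Add}, the entries where the \emph{current step} $(\delta_{s,j})_i$ exceeds $\epsilon(\os_j)_i$, and only then refresh $\os$. This misses coordinates that drift slowly: an entry with $|(\delta_{s,j})_i|$ equal to, say, $\tfrac{\epsilon}{2k}(\os_j)_i$ at every step is never reported by any sub-instance, yet after $O(k)$ iterations it has accumulated a relative change exceeding $\epsilon$, so $\os_j\approx_\epsilon s_j$ fails. Per-step detection alone cannot guarantee the invariant. The paper handles this inside \Cref{lem:vector:maintenance} by a dyadic batching scheme: for every $\ell\le\log\sqrt{m}$, every $2^\ell$ iterations it queries the heavy hitter on the \emph{accumulated} vector $\sum_{t'=t-2^\ell+1}^{t} h^{(t')}$ with threshold $\Theta(\epsilon/\log m)$, so that any coordinate whose total change over a dyadic window is large gets caught. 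Your Markov bound of $\tO(k^2/\epsilon^2)$ hits per iteration then becomes a bound of $\tO(2^{2\ell})$ hits per level-$\ell$ batch (via Cauchy--Schwarz on the accumulated sum), which over $T\le\sqrt{m}$ iterations still gives the stated total. You seem aware that the $T\le\sqrt{m}$ restriction matters for amortization, but the missing ingredient is precisely this multi-level accumulation; without it the correctness claim $\os\approx_\epsilon s$ does not hold.

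A smaller point: your decomposition handles the $\sum_\ell\mD_\ell\mD_\Sigma^{-1}\mA(v_j-v_\ell)$ part of \eqref{eq:vector:s} but says nothing about the additive $w$ term (and $z_j$ for $x$). These are explicit vectors, so detecting their heavy coordinates is easy, but they too must be tracked across the dyadic windows, and \Cref{lem:ipm:normbounds} is what guarantees $\sum_i\|\os_i^{-1}w\|_2^2$ stays bounded.
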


Before we prove \Cref{thm:primaldual:maintenance}, we first outline its proof and state some useful lemmas.
For simplicity, we will focus in our outline of \Cref{thm:primaldual:maintenance} only on the slack of the dual $s^{(t)}$.

In \cite{BrandLSS20,BrandLN+20,BrandLL+21}, data structures were given that maintain an approximation 
$\os'^{(t)} \approx s'^{(t)} = s'^{(t-1)} + \mA h$.
These can easily be extended to maintain an approximation of 
$s'^{(t)} = s'^{(t-1)} + (w + \mG\mA h) \beta$
for some diagonal matrix $\mG$, vector $w$, and scalar $\beta$.
Such a data structure is given by \Cref{lem:vector:maintenance} and proven in \Cref{sec:vector:maintain}.
If we run $k$ copies of this data structure, we can maintain an approximation of each term of the sum in \eqref{eq:vector:s} by letting $\mG = \mD_\ell\mD_\Sigma^{-1}$ and $h = v_j - v_\ell$.

\begin{restatable}{lemma}{vectormaintenance}\label{lem:vector:maintenance}
There exists a deterministic data structure with the following operations
\begin{itemize}
    \item \textsc{Initialize}$(\mA\in\R^{m\times n},\epsilon\in(0,1]^m, g\in\R^m_{>0}, w\in\R^m, s^{(0)}\in\R^m)$
    Initialize on the given incidence matrix $\mA$, edge weights $g$ and accuracy-vector $\epsilon$ in $\hO(m)$ time.
    \item \textsc{SetAccuracy}$(i\in [m], \delta\in(0,1])$ Set $\epsilon_i \leftarrow \delta$ in $\hO(1)$ amortized time.
    \item \textsc{Update}$(i\in [m], c \in \R_{>0},c'\in\R)$ Set $g_i\leftarrow c$, $w_i\leftarrow c'$ in $\hO(1)$ amortized time.
    \item \textsc{Add}$(h\in\R^n,\beta>0)$
    Let $h^{(\ell)}$ be the vectors $h$ given during the $\ell\th$ call to \textsc{Add}.
    Let $g^{(\ell)}$ be the state of $g$ during the $\ell\th$ call to \textsc{Add}.
    Let 
    $$
    s^{(t)} := s^{(0)} + \left(\sum_{\ell=1}^t \mG^{(\ell)} \mA h^{(\ell)} + \beta^{(\ell)}w^{(\ell)}\right)
    $$
    The $t\th$ call to \textsc{Add} returns a vector $\os$ with
    $$
    |\os_i - s^{(t)}_i| \le \epsilon_i \text{ for all }  i
    $$
    The vector $\os$ is returned as a pointer, together with a list $I\subset [m]$ of indices where $\os_i$ changed.
    \item \textsc{Exact}$(i\in [m])$ Returns $s^{(t)}_i$ in $O(1)$ time where $t$ is the number of calls to \textsc{Add} so far.
\end{itemize}
The total time of the first $T\le\sqrt{m}$ calls to \textsc{Add} is bounded by 
$$\tO(m^{1+o(1)} + m^{o(1)}T(\sum_{\ell=1}^{T}\|(\epsilon^{(\ell)})^{-1} \mG^{(\ell)}\mA h^{(\ell)}\|_2^2+\|(\epsilon^{(\ell)})^{-1} w^{(\ell)}\beta^{(\ell)}\|_2^2) + Tn\log W))$$
where $W$ bounds the largest ratio of largest to smallest entry of $(\epsilon^{(\ell)})^{-1}\mG^{(\ell)}$ for all $\ell=1,...,T$.
\end{restatable}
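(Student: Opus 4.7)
The plan is to build on the heavy-hitter data structure for classical single-commodity incidence matrices developed in \cite{BrandLN+20,BrandLL+21} via dynamic expander decompositions, and to combine it with a coordinate-wise accumulator for the explicit vector part.

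First I would split the quantity to be maintained as $s^{(t)} = s^{(0)} + P^{(t)} + Q^{(t)}$, where $P^{(t)} := \sum_{\ell \le t} \mG^{(\ell)} \mA h^{(\ell)}$ is the ``matrix part'' and $Q^{(t)} := \sum_{\ell \le t} \beta^{(\ell)} w^{(\ell)}$ is the ``explicit part.'' The explicit part is easy: in \textsc{Add} the scalar $\beta^{(\ell)}$ is known and $w^{(\ell)}$ is stored and mutable via \textsc{Update}, so the contribution to each coordinate can be accumulated lazily and index $i$ reported only once the accumulated change relative to the current $\bar s_i$ exceeds $\epsilon_i$. This costs $\hO(m)$ at initialization and an amortized output cost proportional to the number of reported entries, which by a Chebyshev-style counting is bounded by $\sum_\ell \|(\epsilon^{(\ell)})^{-1} \beta^{(\ell)} w^{(\ell)}\|_2^2$.

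For the matrix part $P^{(t)}$, I would invoke the heavy-hitter data structure on the incidence matrix $\mA$ from \cite{BrandLN+20} as a black box. After $m^{1+o(1)}$ preprocessing based on a dynamic expander decomposition, it supports updates to edge weights $g_i$ in $m^{o(1)}$ amortized time and, on input $h \in \R^n$, returns all indices $i$ with $|(\mG \mA h)_i|$ above a supplied per-coordinate threshold in time $m^{o(1)}$ per reported index plus $\tO(n \log W)$ per query for the $\log W$-bit arithmetic on the $n$-dimensional input. Wrapping this so that the per-coordinate threshold is $\epsilon_i / \beta^{(\ell)}$ converts the heavy entries of $\beta^{(\ell)} \mG^{(\ell)} \mA h^{(\ell)}$ into exactly the indices whose approximation $\bar s_i$ must be refreshed, matching the specification of \textsc{Add}. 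A single full recomputation at the end of the $T \le \sqrt m$ calls, costing $\tO(m)$, guards against small below-threshold drifts accumulating past $\epsilon_i$; the periodic-refresh viewpoint is the same as in \cite{BrandLSS20,BrandLL+21}.

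The total cost then follows by summing the two contributions: the number of heavy coordinates reported across all $T$ calls is at most $\sum_{\ell=1}^{T}\left(\|(\epsilon^{(\ell)})^{-1} \mG^{(\ell)} \mA h^{(\ell)}\|_2^2 + \|(\epsilon^{(\ell)})^{-1} \beta^{(\ell)} w^{(\ell)}\|_2^2\right)$, since each reported coordinate contributes at least $1$ to the rescaled squared norm. Multiplying by the $m^{o(1)}$ per-report cost and adding $\tO(n \log W)$ per query for handling $h^{(\ell)}$ gives the stated bound. The main obstacle I expect is not the top-level amortization but carefully verifying that the \cite{BrandLN+20} data structure composes with (i) per-coordinate accuracy parameters $\epsilon_i$ that themselves receive \textsc{SetAccuracy} updates, and (ii) the additional explicit $\beta^{(\ell)} w^{(\ell)}$ term with its own \textsc{Update} operation; this is where the $\hO(1)$ amortized bookkeeping must be stitched together so that changing $\epsilon_i$ or $w_i$ does not force a $\omega(\log m)$-cost update inside the expander-decomposition structure.
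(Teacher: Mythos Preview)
Your high-level decomposition into the matrix part $P^{(t)}$ and the explicit part $Q^{(t)}$, together with the use of the \cite{BrandLN+20} expander-based heavy hitter, matches the paper. However, there is a genuine gap in how you guard the approximation invariant $|\os_i - s_i^{(t)}| \le \epsilon_i$ over \emph{all} iterations $t$.

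Detecting the heavy coordinates of a single increment $\mG^{(\ell)}\mA h^{(\ell)}$ at threshold $\epsilon_i$ only catches entries that jump a lot in one step. An entry $i$ that drifts by, say, $\epsilon_i/2$ in each of three consecutive iterations will never be reported, yet after the third call the error is $3\epsilon_i/2$ and the guarantee is violated. Your fix --- ``a single full recomputation at the end of the $T\le\sqrt m$ calls'' --- repairs only the final output, not the intermediate ones, so the specification of \textsc{Add} is not met. Lowering the per-step threshold to $\epsilon_i/T$ would restore correctness but inflate the reported-index count by a factor $T^2$, overshooting the stated bound. The paper's actual mechanism is a multi-scale batching: for every $0\le k\le \log\sqrt m$, every $2^k$ iterations it runs the heavy hitter on the \emph{accumulated} input $\sum_{\ell=t-2^k+1}^{t} h^{(\ell)}$ at threshold $\Theta(\epsilon_i/\log m)$. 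A telescoping argument then shows no entry can drift past $\epsilon_i$ without being caught at some scale, and Cauchy--Schwarz on $\|\sum_\ell \cdot\|_2^2 \le 2^k\sum_\ell\|\cdot\|_2^2$ is exactly what produces the extra factor $T$ multiplying the sum of norms in the lemma's time bound --- a factor your analysis omits. The same multi-scale buckets, together with index sets $F^k$ recording which coordinates received \textsc{Update} or \textsc{SetAccuracy} calls in the last $2^k$ steps, are also what resolves the bookkeeping concern you flag at the end: those indices are simply force-refreshed at the appropriate scale and charged to the amortized $\hO(1)$ cost of the update operations.
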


There is one main issue with this approach of running several copies of \Cref{lem:vector:maintenance} to maintain $\ox^{(t)},\os^{(t)}$ as in \Cref{thm:primaldual:maintenance}.
Notice that the complexity of \Cref{lem:vector:maintenance} depends on
$\|(\epsilon)^{-1} \mG\mA h\|_2^2$ and $\|(\epsilon)^{-1} w \beta\|_2^2$
which for our application will be
$\epsilon = O(1/\os_j)$ (since we want a multiplicative approximation instead of an additive one)
and $\mG = \mD_\ell\mD_\Sigma^{-1}$, and $h = (v_j - v_\ell)\beta$.
So the complexity of running all these copies in parallel would depend on
\begin{align}
\sum_{j=1}^{k+1} \left(\|s_j^{-1}w\beta\|_2^2  + \sum_{\ell=1}^k \|s_j^{-1}(\mD_\ell\mD_\Sigma^{-1}\mA(v_j - v_\ell)\beta)\|_2^2\right) \label{eq:vector:sumofnorms}
\end{align}
In general, these terms can be much larger than
\begin{align}
\sum_{j=1}^{k+1} \left\|s_j^{-1}\beta\left(w + \sum_{\ell=1}^k \mD_\ell\mD_\Sigma^{-1}\mA(v_j - v_\ell)\right)\right\|_2^2
=
\sum_{j=1}^{k+1} \| (s_j^{(t-1)})^{-1}(s_j^{(t)} - s_j^{(t-1)})\|_2^2 \label{eq:vector:normofsums}
\end{align}
which is what the complexity stated in \Cref{thm:primaldual:maintenance} is supposed to depend on.
It is important that we have a complexity dependence on \eqref{eq:vector:normofsums}
because a bound on this norm is given by \Cref{thm:commodity_ipm} (last bullet).

The following \Cref{lem:ipm:normbounds} shows that we can actually bound \eqref{eq:vector:sumofnorms} by \eqref{eq:vector:normofsums}.
\begin{restatable}{lemma}{normbounds}\label{lem:ipm:normbounds}
    For any $v_1, ...,v_k \in \R^n$, $v_k=0\in\R^n$, $w, \ox_1,...,\ox_{k+1},\os_1,...,\os_{k+1} \in \R_{>0}^m$ 
    and $d_i=\ox_i/\os_i$ for $i=1,2,\ldots,k+1$, 
    let $d_\Sigma=\sum_{i=1}^{k+1}d_i$,
    and let $\mu\in\R_{>0}$ such that $\ox_i\os_i \approx_{1/5} \mu$ for all $i=1,...,k+1$.

Suppose for some $\epsilon\ge 0$,
\begin{align*}
\sum_{i=1}^{k+1}\left\lVert\frac{1}{\os_1}\left(w+\sum_{j=1}^{k+1} \dfrac{d_j}{d_\Sigma}\mA (v_i-v_j)\right)\right\rVert_2^2\le\epsilon^2
\end{align*}

Then,
\begin{align*}
\sum_{i=1}^{k+1}\sum_{j=1}^{k+1}
\left\lVert\frac{1}{\os_i}\frac{d_j}{d}\mA(v_i-v_j)\right\rVert_2^2\le 6\epsilon^2\\
\sum_{i=1}^{k+1}\left\lVert\frac{1}{\os_i}w\right\rVert_2^2\le 2\epsilon^2
\end{align*}
\end{restatable}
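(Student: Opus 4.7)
The plan is to prove \Cref{lem:ipm:normbounds} coordinate-by-coordinate. For a fixed $\ell\in[m]$, I abbreviate $u_i := (\mA v_i)_\ell$, $p_i := d_{i,\ell}/d_{\Sigma,\ell}$ (so $\sum_i p_i = 1$), $q_i := 1/\os_{i,\ell}^2$, $w := w_\ell$, and $a_i := u_i - \bar u$ with $\bar u := \sum_j p_j u_j$. In these scalar variables, the $i$th summand of the hypothesis (at coordinate $\ell$) is exactly $q_i(w+a_i)^2$, since $w + \sum_j p_j (u_i - u_j) = w + u_i - \bar u = w + a_i$. It therefore suffices to bound the per-coordinate sums $\sum_{i,j} q_i p_j^2(u_i - u_j)^2$ and $\sum_i q_i w^2$ by constant multiples of $\sum_i q_i(w+a_i)^2$, and then sum over $\ell$.

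The key structural observation is that $\ox_i\os_i \approx_{1/5}\mu$ combined with $d_i = \ox_i/\os_i$ gives $q_i = d_{i,\ell}/(\ox_{i,\ell}\os_{i,\ell}) \approx_{1/5} (d_{\Sigma,\ell}/\mu)\, p_i$. Setting $\alpha := d_{\Sigma,\ell}/\mu$ and $\delta' := e^{1/5}-1$, we have $q_i = (1+\eta_i)\alpha p_i$ with $|\eta_i| \le \delta'$, so $q$ is a near-proportional reweighting of $p$. Writing $Q := \sum_i q_i$, $V := \sum_i q_i a_i^2$, and $\sigma := \sum_i q_i a_i$, the identity $\sum_i p_i a_i = 0$ (which is the definition of $\bar u$) gives $\sigma = \alpha\sum_i \eta_i p_i a_i$, and one application of Cauchy--Schwarz then yields $\sigma^2/Q \le c\, V$ for an explicit constant $c<1$ depending only on $\delta'$. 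This ``approximate centering'' of the $a_i$ with respect to the weights $q_i$ is the engine of the proof.

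For the first conclusion I would use the pointwise bound $(u_i - u_j)^2 \le 2a_i^2 + 2a_j^2$ and $\sum_j p_j^2 \le 1$ to get $\sum_j p_j^2(u_i-u_j)^2 \le 2a_i^2 + 2\sum_j p_j a_j^2$. Weighting by $q_i$ and summing in $i$ gives $\sum_{i,j} q_i p_j^2(u_i-u_j)^2 \le 2V + 2Q\sum_j p_j a_j^2 \le O(1)\cdot V$, where the final step uses that $Qp_j$ and $q_j$ differ only by a bounded factor. To tie $V$ back to the hypothesis I expand
\[
\sum_i q_i(w+a_i)^2 = Qw^2 + 2w\sigma + V = Q\bigl(w + \tfrac{\sigma}{Q}\bigr)^2 + \Bigl(V - \tfrac{\sigma^2}{Q}\Bigr),
\]
which immediately gives $V - \sigma^2/Q \le \sum_i q_i(w+a_i)^2$, and combined with $\sigma^2/Q \le cV$ for $c<1$ this yields $V \le (1-c)^{-1}\sum_i q_i(w+a_i)^2$, proving the first bound after summing over $\ell$. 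For the second conclusion I would use $|w| \le |w + \sigma/Q| + |\sigma|/Q$ together with $Q(w+\sigma/Q)^2 \le \sum_i q_i(w+a_i)^2$, obtaining $Qw^2 \le 2\sum_i q_i(w+a_i)^2 + 2\sigma^2/Q$, which again reduces to the hypothesis via the bound on $V$.

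The main obstacle is the constant bookkeeping. Unlike the clean $w=0$, $q \propto p$ case of \Cref{lem:overview:normbounds} sketched in the overview, here $q_i$ is only approximately proportional to $p_i$ and the hypothesis is shifted by an extra $w$-term, so the whole argument hinges on $V - \sigma^2/Q$ staying bounded below by a fixed fraction of $V$. Matching the lemma's explicit constants $6$ and $2$ therefore reduces to verifying that $\delta' = e^{1/5}-1$ is small enough for the ratio $\delta'^2/(1-\delta')^2$ to be comfortably below $1$, which it is; the slack of $1/5$ in the hypothesis $\ox_i\os_i\approx_{1/5}\mu$ was almost certainly chosen precisely to leave room for this calculation.
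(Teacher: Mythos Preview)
Your approach is sound but more elaborate than the paper's, and the extra machinery costs you the constant $2$ in the second bound.

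The paper's proof also works coordinatewise and also uses the single fact $1/\os_i^2 \approx_{1/5} d_i/\mu$ (your $q_i \approx \alpha p_i$), but it never introduces the centered variables $a_i$, the variance $V$, or the approximate-centering estimate $\sigma^2/Q \le cV$. Instead it observes that $u_i - u_j = \eta_i - \eta_j$ where $\eta_i := w + a_i$ is exactly the quantity inside the hypothesis, so $(\eta_i-\eta_j)^2 \le 2\eta_i^2 + 2\eta_j^2$ already produces terms of the form $d_i\eta_i^2$ that match the hypothesis directly; summing and using $\sum_j p_j^2 \le 1$, $p_j \le 1$ gives the factor $4$ of \Cref{lem:vector:simplifiednorm}, and one application of $q_i\approx_{1/5}\alpha p_i$ on each side gives $4e^{2/5}<6$. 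For the $w$-bound the paper uses the identity $w = \sum_i p_i \eta_i$ (immediate from $\sum_i p_i a_i = 0$) and Jensen to get $w^2 \le \sum_i p_i \eta_i^2$, hence $Qw^2 \le e^{2/5}\sum_i q_i\eta_i^2$, i.e.\ $e^{2/5}\epsilon^2 < 2\epsilon^2$.

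Your route through $V$ and $\sigma$ works for the first conclusion (I get roughly $2(1+e^{2/5})/(1-c)\approx 5.4<6$), but for the second your bound $Qw^2 \le 2H_\ell + 2\sigma^2/Q \le \tfrac{2}{1-c}H_\ell$ with $c=\delta'^2/(1-\delta')^2$ and $\delta'=e^{1/5}-1$ evaluates to about $2.18\,\epsilon^2$, so you narrowly miss the stated constant $2$. Swapping your completing-the-square step for the one-line Jensen argument above (using $w=\sum_i p_i\eta_i$, which you already have since it is equivalent to $\sum_i p_i a_i = 0$) recovers the constant and also lets you drop the $\sigma^2/Q\le cV$ lemma entirely.
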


\Cref{lem:ipm:normbounds} allows us to bound the complexity of running the copies of \Cref{lem:vector:maintenance} in parallel,
which then implies the time complexities as stated in \Cref{thm:primaldual:maintenance}.
We now prove \Cref{thm:primaldual:maintenance} using \Cref{lem:ipm:normbounds,lem:vector:maintenance}. 
The proof of \Cref{lem:ipm:normbounds} is deferred to the next subsection. \Cref{lem:vector:maintenance} is deferred to the appendix because it is a simple modification of data structures in \cite{BrandLSS20,BrandLN+20,BrandLL+21}.

\begin{algorithm2e}[ht!]
\caption{Vector Maintenance (\Cref{thm:primaldual:maintenance}) \label{alg:primaldual:maintenance}}
\SetKwProg{Proc}{procedure}{}{}
\SetKwProg{Params}{parameters}{}{}
\Proc{\textsc{Initialize}$(\mA,\epsilon,w,(z_j,s_j^{(0)},x_j^{(0)})_{1\le j\le k+1})$}{
    $\os_j \leftarrow s^{(0)}_j$, $\ox_j \leftarrow x^{(0)}_j$, $d_j \leftarrow \ox_j/\os_j$ for $j=1,...,k+1$. \\
    $d_\Sigma \leftarrow \sum_{j=1}^{k+1} d_j$ \\
    \tcp{Data structures used to maintain $\os_j \approx s_j$}
    $D^s_{j,\ell}.\textsc{Initialize}(\mA,\epsilon/(10\os_j), d_\ell/d_\Sigma, 0, 0)$ for $j=1,...,k+1,\ell=1,...,k$ \\
    $D^s_{j,k+1}.\textsc{Initialize}(\mA,\epsilon/(10\os_j), d_{k+1}/d_\Sigma, w, 0)$ for $j=1,...,k+1$ \\
    \tcp{Data structures used to maintain $\ox_j \approx x_j$}
    $D^x_{j,\ell}.\textsc{Initialize}(\mA,\epsilon/(10\ox_j), d_j d_\ell/d_\Sigma, 0, 0)$ for $j=1,...,k+1,\ell=1,...,k$ \\
    $D^x_{j,k+1}.\textsc{Initialize}(\mA,\epsilon/(10\ox_j), d_j d_{k+1}/d_\Sigma, z_j, 0)$ for $j=1,...,k+1$
}
\Proc{\textsc{Update}$(i\in[m],c,c')$}{
    \tcp{Update data structure to use the new $w_i = c$}
    $D^s_{j,k+1}.\textsc{Update}(i, c')$ for $j=1,...,k+1$ \\
    \tcp{Update data structure to use the new $(z_j)_i = c_j$ for $j=1,...,k+1$}
    $D^x_{j,k+1}.\textsc{Update}(i, c_j)$ for $j=1,...,k+1$
}
\Proc{$\textsc{Add}(v_1,...,v_k\in\R^n, \beta>0)$}{
	$t \leftarrow t+1$, let $v_{k+1}=0$ \\
    \tcp{For simplicity we describe the procedure only for $\os$, $\ox$ works similarly.}
    \For{$j=1,...,k+1$}{
        \tcp{$u_j$ is approximation of $s^{(t)}_j := s^{(t-1)}_j + (w - \sum_{\ell=1}^{k+1} \mD_\ell\mD_\Sigma^{-1}\mA (v_j - v_\ell))\beta$}
        $u_{j}, I_j \leftarrow \sum_{\ell=1}^{k+1} D^s_{j,\ell}.\textsc{Add}(v_j - v_\ell, \beta)$ \tcp*[h]{$w$ is contained in $D^s_{j,k+1}$}\label{line:primaldual:u}\\
        For indices $i \in I_k$ set $(\os_j)_i \leftarrow (u_j)_i$ if $|(\os_j)_i - (u_j)_i|>\epsilon/(5\os_i)$ \label{line:primadual:updateos}\\
        Let $J_j\subset I_j$ be the indices where we changed $(\os_j)_i$.\\
        \For{$i\in J_j$ \label{line:primadual:forupdatedos}}{
            \tcp{Update data structure accuracy so the additive error becomes multiplicative error}
            $D^s_{j,\ell}.\textsc{SetAccuracy}(i,\epsilon/(10(\os_j)_i))$ for $\ell=1,...,k+1$.\\
            \tcp{Update data structure so they use the new $\mD_\ell\mD_\Sigma^{-1}$}
            $(d_j)_i \leftarrow (\ox_j)_i/(\os_j)_i$ and update $d_\Sigma = \sum_{\ell=1}^{k+1} d_j$\\
            $D^s_{j',\ell}.\textsc{Update}(i,(d_\ell/d_\Sigma)_i)$ for $\ell,j'=1,...,{k+1}$.\\
        }
    }
	\Return $\os_j$, $J_j$ for $j=1,...,{k+1}$.
}
\end{algorithm2e}

\begin{proof}[Proof of \Cref{thm:primaldual:maintenance}]
    The algorithm description is given in \Cref{alg:primaldual:maintenance}. 
    As outlined in this section, the idea is to run $O(k^2)$ copies of \Cref{lem:vector:maintenance} to approximate the vectors $s_j^{(t)}$ for $j=1,...,k+1$.
    For this, we run a copy of \Cref{lem:vector:maintenance} for $\mG = \mD_\ell\mD_\Sigma^{-1}$, $h=(v_j-v_\ell)\beta$ for each $j,\ell=1,...,k+1$. All copies except for those with $\ell=k+1$ will have $w=0$.
    Thus the sums $u_j$ (\Cref{line:primaldual:u} in \Cref{alg:primaldual:maintenance}) of the vectors maintained by \Cref{lem:vector:maintenance} will approximate the vectors $s_j^{(t)}$ defined in \eqref{eq:vector:s}.

    Our proof focuses on how to maintain $s^{(t)}_j$ as in \eqref{eq:vector:s}. 
    The proof for $x^{(t)}_j$ is the same, we just change $\mG = \mD_j\mD_\ell\mD_\Sigma{-1}$ for $j,\ell=1,...,k+1$ and replace $w$ by $z_j$.

    \paragraph{Correctness}
    We always have that $|(u_j)_i - (s_j)_i| < \epsilon/(10(\os_j)_i)$ for all $j\in[k],i\in[m]$ by \Cref{lem:vector:maintenance} and our choice to use $\epsilon/(10k\os_j)$ as accuracy parameter for \Cref{lem:vector:maintenance}.

    By \Cref{line:primadual:updateos} we thus have $|(\os_j)_i - (s_j)_i| \le |(\os_j)_i - (u_j)_i| + |(s_j)_i - (u_j)_i| < \epsilon/(4(\os_j)_i)$. So $\os_j \approx_\epsilon s_j$ is a valid approximation.
    
    \paragraph{Complexity}
    The $\hO(k^2m)$ complexity of \textsc{Initialize} 
    come from the fact that we initialize $O(k^2)$ instances of \Cref{lem:vector:maintenance}
    which take $\hO(m)$ time each.

    The $\hO(k)$ complexity of \textsc{Update} comes from perform $O(k)$ calls to \textsc{Update} of \Cref{lem:vector:maintenance} which takes $\hO(1)$ time each.

    We now bound the total time of $T$ calls to \textsc{Add}.
    The complexity cost of the calls to \textsc{Add} of \Cref{lem:vector:maintenance} can be bounded by
    $$
    \tilde{O}\left(
    k^2Tn\log W + m^{o(1)}\sum_{t=1}^T\sum_{j=1}^k \|(s_j^{(t-1)})^{-1} (s_j^{(t)} - s_j^{(t-1)})\|_2^2/\epsilon^2
    \right)
    =
    \tilde{O}(k^2m^{1+o(1)}/\epsilon^2 + k^2Tn\log W)
    $$
    by using the norm bound from \Cref{lem:ipm:normbounds} and the assumption $\sum_{j=1}^{k+1} \|(s_j^{(t-1)})^{-1} (s_j^{(t)} - s_j^{(t-1)})\|_2^2 \le O(1)$ and $T \le \sqrt{m}$.

    The cost of \textsc{SetAccuracy} and \textsc{Update} performed in the loop of \Cref{line:primadual:forupdatedos} is $\hO(k^2)$ per entry $i\in J_j$
    because we must update all $O(k^2)$ copies of \Cref{lem:vector:maintenance}.
    For an index $i$ to be in $J_j$, the entry $(s_j)_i$ must have changed by some $\Omega(\epsilon s_j)$ as otherwise there was no need to update $(\os_j)_i$.
    By the norm bound $\|(s^{(t-1)}_j)^{-1}(s^{(t)}_j-s^{(t-1)}_j)\|_2 \le 1/10$ we can have at most $O(T^2/\epsilon^2) = O(m/\epsilon^2)$ such changes over $T \le \sqrt{m}$ iterations.

    In summary, the total cost of all $\sqrt{m}$ calls to \textsc{Add} is bounded by
    $$
    \tilde{O}(
    k^2m^{1+o(1)} + Tkn\log W
    ).
    $$
\end{proof}

The following lemma allows for a better amortized complexity when using the output $\ox_j,\os_j$ of our data structure as input to an algebraic data structure that uses fast matrix multiplication.
\Cref{lem:vector:amortizednumchanges} essentially states that large changes to $\ox_j,\os_j$ happen infrequently.

\begin{lemma}\label{lem:vector:amortizednumchanges}
We can assume that the output $\ox_j,\os_j$ for $j=1,...,k+1$ of \Cref{thm:primaldual:maintenance} satisfies the following:
Every $2^i$ calls to \textsc{Add}, at most $\tilde{O}(2^i/\epsilon^2)$ entries of $\ox_j,\os_j$ change in total over all $j=1,...,k+1$.
\end{lemma}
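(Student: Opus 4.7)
The plan is to modify the update schedule of \Cref{alg:primaldual:maintenance} to use a dyadic, multi-level batching. Fix $L = \lceil \log_2 T\rceil$ levels; for each level $\ell \in \{0,1,\ldots,L\}$, partition the iteration range into contiguous epochs of $2^\ell$ iterations. Rather than committing a change to $\os_j$ the instant a coordinate's relative change crosses the usual threshold, the modified algorithm compares the exact $s_j$ to a cached value at each level-$\ell$ epoch boundary and commits batched updates at those coordinates whose accumulated relative change $A_{j,i'}$ over the epoch exceeds a level-dependent threshold $\tau_\ell$. The emitted $\ox_j,\os_j$ are the latest-committed values per coordinate.

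First I would bound the per-epoch commit count at each level. Cauchy--Schwarz in time together with the per-iteration norm bound $\sum_j \|(s_j^{(t-1)})^{-1}(s_j^{(t)}-s_j^{(t-1)})\|_2^2 \le 1/100$ gives
\begin{align*}
\sum_{(j,i')} A_{j,i'}^{\,2} \;\le\; 2^\ell \sum_{t \in \mathrm{epoch}}\sum_j \|(s_j^{(t-1)})^{-1}(s_j^{(t)}-s_j^{(t-1)})\|_2^2 \;\le\; \frac{2^{2\ell}}{100},
\end{align*}
so the number of coordinates committing at a level-$\ell$ boundary is at most $\tilde O(2^{2\ell}/\tau_\ell^2)$. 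Picking $\tau_\ell = \Theta(\epsilon\cdot 2^{\ell/2})$ yields $\tilde O(2^\ell/\epsilon^2)$ commits per level-$\ell$ epoch. Any window of $2^i$ iterations intersects $O(2^i/2^\ell)$ level-$\ell$ epochs for $\ell \le i$, so the total commit count in the window is at most $\sum_{\ell\le i}(2^i/2^\ell)\cdot \tilde O(2^\ell/\epsilon^2) = \tilde O(2^i/\epsilon^2)$; level-$\ell$ epochs with $\ell > i$ that happen to straddle the window are amortized across their parent $2^\ell$-epoch and contribute no more on average.

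The main obstacle will be the accuracy analysis: with level-dependent thresholds $\tau_\ell = \Theta(\epsilon\cdot 2^{\ell/2})$, naively summing pending unflagged errors across the $L$ levels gives a $\Theta(\epsilon\sqrt{T})$ bound, which is far too weak to meet the $\approx_\epsilon$ guarantee of \Cref{thm:primaldual:maintenance}. The fix, which mirrors the standard dyadic bookkeeping in the projection/inverse-maintenance data structures of \cite{BrandLSS20,Brand21,JiangSWZ21}, is to assign each coordinate a unique ``active'' level at every iteration (roughly the smallest $\ell$ whose current level-$\ell$ pending change dominates), so that the error at any coordinate is controlled by only that single active level's threshold rather than the sum over all levels; the resulting $\Theta(\log m)$ loss is absorbed by rescaling $\epsilon$ by a polylogarithmic factor, which is also absorbed into $\tilde O(\cdot)$ in the statement.
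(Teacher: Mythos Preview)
Your approach diverges from the paper's in one essential way, and that divergence creates the very gap you flag as ``the main obstacle'' but do not actually close.

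The paper does not use level-dependent thresholds. It applies a generic stabilizer lemma (\Cref{lem:vector:stabilizer}): at \emph{every} level $\ell$ the same fixed threshold $\Theta(\epsilon/\log m)$ is used, and one updates $\ov'^{(t)}_i \leftarrow \ov^{(t)}_i$ whenever $|\ov^{(t)}_i-\ov^{(t-2^\ell)}_i|$ exceeds that threshold. Accuracy is then immediate: between the last commit time $t'$ and the current $t$, decompose $[t',t]$ into at most $2\log m$ dyadic pieces, each contributing at most $\Theta(\epsilon/\log m)$ drift, so the total is $\le \epsilon$. No per-coordinate ``active level'' is needed. The price of a fixed threshold is that a level-$\ell$ boundary can commit $\tilde O(2^{2\ell}/\epsilon^2)$ coordinates (via Cauchy--Schwarz exactly as you wrote), so a $2^i$-window sees $\tilde O(2^{2i}/\epsilon^2)$ changes. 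That $2^{2i}$ bound --- not $2^i$ --- is what the paper actually proves via \Cref{lem:vector:stabilizer} and is exactly what is invoked downstream in the proof of \Cref{thm:ipm_implement} (``at most $\tilde O(2^{2i})$ entries of $\ox,\os$ change every $2^i$ iterations''). The $2^i$ in the lemma statement appears to be a typo for $2^{2i}$.

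Your scheme with $\tau_\ell=\Theta(\epsilon\cdot 2^{\ell/2})$ would indeed give the stronger $\tilde O(2^i/\epsilon^2)$ commit count, but the accuracy guarantee genuinely fails and your proposed fix does not repair it. Assigning each coordinate a single ``active'' level $\ell$ still leaves that coordinate with an allowed drift of $\tau_\ell=\Theta(\epsilon\cdot 2^{\ell/2})$ before any commit occurs; for $\ell$ close to $L=\Theta(\log T)$ this is $\Theta(\epsilon\sqrt{T})$, which violates $\os_j\approx_\epsilon s_j$. The dyadic bookkeeping in \cite{CohenLS19,Brand20,BrandLSS20,JiangSWZ21} that you cite uses a \emph{fixed} threshold across levels (and hence obtains the $2^{2i}$ bound); it does not support growing thresholds. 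So the route that matches the statement as literally written is not known to work, while the paper's simpler fixed-threshold route gives the weaker $2^{2i}$ bound that is all the application needs.
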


It was independently proven \cite{CohenLS19,Brand20,BrandLSS20} (with slightly different argument) that this assumption can be made on the approximate $\ox,\os$.
In previous work, this was stated for a specific sequence coming from certain data structures. Here we state it in general form via \Cref{lem:vector:stabilizer}.
Using the following \Cref{lem:vector:stabilizer} to update $\os_j$ in \Cref{line:primadual:updateos}
implies \Cref{lem:vector:amortizednumchanges} (here we use \Cref{lem:vector:stabilizer} for $\ov = \ln u_j$, $v=\ln s$ and $\os = \exp(\ov')$ and increase the accuracy of $u_j$ as maintained in \Cref{alg:primaldual:maintenance} by some $O(1/\log(m))$ factor).

\begin{lemma}[{Modification of \cite[Lemma 19]{LeeV21}}]\label{lem:vector:stabilizer}
    Assume we are an online sequence of vectors $\ov^{(0)},\ov^{(1)},...\in\R^m$ arriving in a stream, implicitly given via their $\Delta^{(t)} = \ov^{(t)}- \ov^{(t-1)}$. 
    Assume further that there exists another sequence $v^{(1)},v^{(1)},...$ with $\|v^{(t)}-v^{(t-1)}\|_2 \le \beta$ and $\|v^{(t)} - \ov^{(t)}\|_\infty \le \beta/(16\log m)$ for all $t$.

    Then we can construct a sequence
    $\ov'^{(1)},\ov^{(2)},...$
    in amortized time $O(\nnz(\Delta^{(t)}) \log m)$ for iteration $t$,
    such that
    \begin{itemize}
        \item $\|\ov'^{(t)} - v^{(t)}\|_\infty \le \beta$ for all $t$
        \item for all $i$, every $2^i$ iterations, at most $O(2^{2i}\alpha^2\beta^{-1}\log^2 m)$ entries of the returned vector $\ov'^{(t)}$ change.
    \end{itemize}
\end{lemma}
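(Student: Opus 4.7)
The plan is to instantiate the coordinate-wise lazy-update (``stabilizer'') data structure of \cite{LeeV21} and verify that both of its guarantees survive the weakening of the hypothesis from ``$\bar v^{(t)}$ equals the slowly-drifting signal'' to ``$\bar v^{(t)}$ is only $\ell_\infty$-close to a slowly-drifting signal''. Concretely, I maintain the cached output $\bar v'^{(t)}$ together with, for each coordinate $i$, an implicit discrepancy $e_i := \bar v^{(t)}_i - \bar v'^{(t-1)}_i$. On each call, I walk the $\mathrm{nnz}(\Delta^{(t)})$ non-zero entries of the incoming sparse update, refresh the affected $e_i$'s, and whenever some $|e_i|$ crosses the threshold $\theta := (1 - 1/(16 \log m))\beta$, I set $\bar v'^{(t)}_i \leftarrow \bar v^{(t)}_i$, reset $e_i$, and insert a bookkeeping event into the same auxiliary priority-queue structure used in the original argument. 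This gives the claimed $O(\mathrm{nnz}(\Delta^{(t)}) \log m)$ amortized per-step cost.

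The $\ell_\infty$-approximation guarantee is immediate from the triangle inequality, and is the single place the relaxed hypothesis is used:
\[
|\bar v'^{(t)}_i - v^{(t)}_i| \;\le\; |\bar v'^{(t)}_i - \bar v^{(t)}_i| + |\bar v^{(t)}_i - v^{(t)}_i| \;\le\; \theta + \tfrac{\beta}{16 \log m} \;\le\; \beta.
\]
The slack $\beta/(16 \log m)$ in the hypothesis is chosen precisely to absorb the gap between $\bar v$ and $v$, so that every refresh still certifies a drift of $\Omega(\beta)$ in $v$ itself: if coordinate $i$ is refreshed at time $t$ with previous refresh at time $\tau$, then $|v^{(t)}_i - v^{(\tau)}_i| \ge \theta - 2\beta/(16 \log m) = \Omega(\beta)$, since between refreshes the cached value $\bar v'_i$ equals $\bar v^{(\tau)}_i$, which is $\beta/(16 \log m)$-close to $v^{(\tau)}_i$.

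For the refresh-count bound I reuse the multi-scale potential argument of Lee--Vempala essentially unchanged. Its only structural input is the $\Omega(\beta)$ drift-per-refresh implication above together with the per-step $\ell_2$-drift hypothesis (with the displayed constant in the statement read as $\alpha$, matching the $\alpha$ that appears in the conclusion). The argument splits each window of $2^i$ iterations dyadically into sub-windows of lengths $1,2,\ldots,2^i$, charges each refresh to the finest sub-window across which the underlying drift in $v_i$ first became detectable at the threshold $\theta$, and sums the geometric contributions across the $O(\log m)$ dyadic scales -- the $\log^2 m$ in the stated bound comes from one $\log m$ for the dyadic union bound and one $\log m$ for the accounting inside the data structure.

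The main obstacle is confirming that this dyadic charging still goes through when $\bar v$ is no longer the exact signal. As sketched above, the only ingredient needed is the drift-per-refresh lower bound, and the triangle-inequality calculation supplies it while losing only a $(1+o(1))$-factor in the hidden constants; no modification of the multi-scale accounting itself is required. Once this is verified, the per-step work, the bookkeeping overhead, and the counting bound are all inherited from the original lemma.
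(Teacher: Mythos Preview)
Your proposal has a genuine gap in the refresh-count argument, and the gap is algorithmic, not just in the analysis.

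You describe a \emph{single-threshold} scheme: refresh coordinate $i$ as soon as $|e_i| = |\ov^{(t)}_i - \ov'_i|$ exceeds $\theta\approx\beta$. The paper's scheme is \emph{multi-scale}: for every $\ell$, every $2^\ell$ steps it checks which coordinates drifted by more than $\beta/(4\log m)$ \emph{over that specific $2^\ell$-window}, and refreshes exactly those. These are different algorithms, and only the second one achieves the stated per-window bound.

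Here is a concrete failure of your scheme. Take $\alpha=\beta$ and $m$ coordinates. Over the first $T_0\approx\sqrt m$ iterations move every coordinate by $\alpha/\sqrt m$ per step (so $\|v^{(t)}-v^{(t-1)}\|_2=\alpha$), priming each coordinate to $\theta-\epsilon$ with $\epsilon=\alpha/\sqrt m$; no refresh fires yet. In iteration $T_0+1$ bump every coordinate by $2\epsilon$ (still $\ell_2$-drift $\le \alpha$). Now all $m$ coordinates cross $\theta$ and refresh in a single step. For the window $2^i=1$ the stated bound is $O(\log^2 m)$, but you output $m$ changes.

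Your charging sketch (``charge each refresh to the finest sub-window across which the drift first became detectable'') does not rescue this: those $m$ refreshes would be charged to a window of length $\approx\sqrt m$ lying \emph{before} the single-step window under consideration, so the charging says nothing about the number of refreshes \emph{inside} that step. The multi-scale algorithm avoids this precisely because a refresh at scale $\ell$ is, by construction, witnessed by $\Omega(\beta/\log m)$ drift \emph{inside the last $2^\ell$ steps}; that localization is the whole point, and it is also why the paper's threshold is $\beta/\Theta(\log m)$ (so that summing over $O(\log m)$ scales still gives the $\|\ov'^{(t)}-v^{(t)}\|_\infty\le\beta$ guarantee) rather than your $\theta\approx\beta$. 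Your $\ell_\infty$ bound is fine, but you need to replace the data structure by the dyadic one and redo the counting at each scale separately.
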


\begin{proof}
    The algorithm is as follows:
    At the start, set $\ov'^{(0)} = \ov^{(0)}$.
    Then for all $0\le\ell\le \log\sqrt{m}$, every $2^\ell$ iterations, 
    let $I$ be the set of indices $i\in[m]$ where 
    $|\ov^{(t)}_i - \ov^{(t-2^i)}_i| \ge \beta/(4\log m)$.
    Set $\ov'^{(t)}_i = \ov^{(t)}_i$ for $j\in I$.
    After $t=\sqrt{m}$ iterations, we set $\ov'^{(t)} = \ov^{(t)}$ and restart.
    
    This can be done efficiently by keeping $\log m$ ordered lists of changed entries. The $\ell$th list keeps the list of the entries that changed over the past $2^\ell$ iterations. The entries are kept in order from highest to lowest. This can be done in $O(1)$ amortized time per changed entry and list, via a balanced binary search tree data structure.

    \paragraph{Approximation guarantee:}
    We argue that for all $t$ we have $\|\ov'^{(t)} - v^{(t)}\|_\infty \le \beta$.
    For any $i$, let $t'$ be the last time we set $\ov'^{(t')}_i = \ov^{(t')}_i$.
    If $t'=t$, then $|\ov'^{(t)}_i - v^{(t)}_i| = |\ov^{(t)}_i - v^{(t)}_i| \le \beta$ and we are done.
    Otherwise, there is a sequence of at most $2\log m$ many $t'=t_0<t_1...<t_k=t$
    such that each $t_j-t_{j-1}$ is a power of $2$.
    Here we have
    $$
    |\ov'^{(t)}_i - v^{(t)}_i|
    \le
    |\ov'^{(t)}_i - \ov^{(t)}_i| + |v^{(t)}_i - \ov^{(t)}_i|
    \le
    (\sum_{j=1}^k |\ov^{(t_{j-1})}_i - \ov^{(t_j)}_i|) + |v^{(t)}_i - \ov^{(t)}_i|
    \le
    \beta
    $$

    \paragraph{Number of changes}
    Consider an update when the number of updates is a multiple of $2^\ell$ for some $\ell$.
    For every $i\in I$ we have that $|\ov^{(t)}_i - \ov^{(t-2^\ell)}_i| \ge \beta/(4\log m)$ but that also means
    $|v^{(t)}_i - v^{(t-2^i)}_i| \ge \beta/(8\log m)$.
    Since we have $\|v^{(t)}-v^{(t-1)}\|_2 \le \alpha$ for all $t$, there can be at most $O(2^{2\ell}\alpha^2\beta^{-2}\log^2 m)$ such entries.
\end{proof}

\subsection{Norm Bounds}

The only remaining part for proving \Cref{thm:primaldual:maintenance} is to prove \Cref{lem:ipm:normbounds}.
We start with proving a small variation of it that was already presented in \Cref{sec:overview:singlecommodity} as \Cref{lem:overview:normbounds} for the special case $w=0$.
\begin{lemma}\label{lem:vector:simplifiednorm}
        For any $v_1, ...,v_{k+1} \in \R^n$, $w\in\R^m$
    $d_1,...,d_{k+1}\in\R^{m}_{>0}$ 
    let $d_\Sigma=\sum_{i=1}^{k+1}d_i$. Write $\mD_i$ and $\mD_\Sigma$ for the diagonal matrices with $d_i$ and $d_\Sigma$ on the diagonal.
    Then
    \begin{align*}
\sum_{i=1}^{k+1}\sum_{j=1}^{k+1}
\left\lVert\mD_i^{1/2}\mD_j\mD_\Sigma^{-1}\mA(v_i-v_j)\right\rVert_2^2
\le 4\cdot
\sum_{i=1}^{k+1} \left\|\mD^{1/2}_i \left(w+ \sum_{j=1}^{k+1} \mD_j\mD_\Sigma\mA(v_i - v_j)\right)\right\|_2^2
\end{align*}
\end{lemma}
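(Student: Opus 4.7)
\textbf{Proof plan for Lemma \ref{lem:vector:simplifiednorm}.} (I will read the factor $\mD_j\mD_\Sigma$ on the RHS as $\mD_j\mD_\Sigma^{-1}$, which is clearly the intended expression in view of every other occurrence of that product in the paper.) The approach is to reduce everything to a one-dimensional, coordinate-wise inequality. Fix an arbitrary edge index $\ell\in[m]$, and write $\alpha_i := (d_i)_\ell$, $\beta_i := \alpha_i/(d_\Sigma)_\ell$, and $z_i := (\mA v_i)_\ell$. Since $d_\Sigma=\sum_i d_i$ we have $\sum_i \beta_i = 1$ and $\sum_i\alpha_i = (d_\Sigma)_\ell$. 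Set $\bar z := \sum_j \beta_j z_j$. Then the $\ell$-th coordinate contributions are
\[
\text{LHS}_\ell = \sum_{i,j} \alpha_i \beta_j^2 (z_i-z_j)^2, \qquad \text{RHS}_\ell = 4\sum_i \alpha_i (w_\ell + z_i - \bar z)^2,
\]
where I used $\sum_j \beta_j = 1$ to rewrite $\sum_j (d_j/d_\Sigma)\mA(v_i-v_j)$ at coordinate $\ell$ as $z_i-\bar z$. Summing over $\ell$ gives the full statement, so it suffices to prove $\text{LHS}_\ell \le \text{RHS}_\ell$ for every $\ell$.

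The first observation handles $w$: since $\bar z$ is the $\alpha$-weighted mean of the $z_j$'s, we have $\sum_i \alpha_i(z_i-\bar z)=0$, so for any scalar $w_\ell$,
\[
\sum_i \alpha_i (w_\ell+z_i-\bar z)^2 = (d_\Sigma)_\ell\, w_\ell^{\,2} + \sum_i \alpha_i (z_i-\bar z)^2 \;\ge\; \sum_i \alpha_i (z_i-\bar z)^2.
\]
Hence it is enough to prove the lemma in the case $w=0$, i.e.\ $\text{LHS}_\ell \le 4\sum_i\alpha_i(z_i-\bar z)^2$.

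The second step is the key bound. Writing $(z_i-z_j)^2 = ((z_i-\bar z)-(z_j-\bar z))^2 \le 2(z_i-\bar z)^2 + 2(z_j-\bar z)^2$, we get
\[
\sum_{i,j}\alpha_i\beta_j^2(z_i-z_j)^2 \;\le\; 2\Bigl(\sum_j \beta_j^2\Bigr)\sum_i \alpha_i(z_i-\bar z)^2 + 2\Bigl(\sum_i \alpha_i\Bigr)\sum_j \beta_j^2 (z_j-\bar z)^2.
\]
Now $\sum_j \beta_j^2 \le \sum_j \beta_j = 1$, and $(\sum_i \alpha_i)\beta_j^2 = (d_\Sigma)_\ell \cdot \alpha_j^2/(d_\Sigma)_\ell^2 = \alpha_j \beta_j \le \alpha_j$. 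Substituting, both terms are bounded by $2\sum_i \alpha_i(z_i-\bar z)^2$, giving the factor-$4$ bound per coordinate. Summing over $\ell$ and combining with the $w$-reduction yields the lemma.

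\emph{Main obstacle.} The only non-routine point is recognizing that one should center every $z_i$ at the weighted mean $\bar z = \sum_j \beta_j z_j$ (which is precisely the quantity that appears on the RHS thanks to $\sum_j \beta_j = 1$), and then exploiting the two cancellations $\sum_j \beta_j^2 \le 1$ and $\alpha_j\beta_j \le \alpha_j$. Notably, the $w$ term is dealt with purely through the fact that $\bar z$ is the minimizer of $c\mapsto \sum_i \alpha_i(z_i-c)^2$; beyond that, the proof is an exercise in Cauchy--Schwarz / triangle inequality with the right weights.
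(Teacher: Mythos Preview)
Your proof is correct and is essentially the paper's argument written out coordinate-wise: the paper sets $\eta_i := w + \mA v_i - \sum_j \mD_j\mD_\Sigma^{-1}\mA v_j$ (which at coordinate $\ell$ is exactly your $w_\ell + z_i - \bar z$), observes $\mA(v_i-v_j)=\eta_i-\eta_j$, applies $(a-b)^2\le 2a^2+2b^2$, and then uses the same two bounds $\sum_j\beta_j^2\le 1$ and $\alpha_j\beta_j\le\alpha_j$. The only cosmetic difference is that you first strip off $w$ via the orthogonality $\sum_i\alpha_i(z_i-\bar z)=0$, whereas the paper simply keeps $w$ inside $\eta_i$ throughout (which automatically handles it since $\eta_i-\eta_j$ is independent of $w$).
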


\begin{proof}
For $i=1,\ldots,k+1$, let $\eta_i=w+\mA v_i-\sum_{j=1}^{k+1}\mD_j\mD^{-1}_\Sigma\mA v_j=w+\sum_{j=1}^{k+1} \mD_j\mD^{-1}_\Sigma \mA (v_i-v_j)$. Then,
\begin{align*}
\sum_{i=1}^{k+1}\sum_{j=1}^{k+1}
\left\lVert \mD_i^{1/2} \mD_j\mD^{-1}_\Sigma\mA(v_i-v_j)\right\rVert_2^2
&=
\sum_{i=1}^{k+1}\sum_{j=1}^{k+1}
\left\lVert \mD_i\mD_j^2\mD^{-2}_\Sigma(\eta_i-\eta_j)^2\right\rVert_1 \\
&\le 2\sum_{i=1}^{k+1}\sum_{j=1}^{k+1}
\left\lVert \mD_i\mD_j^2\mD^{-2}_\Sigma\eta_i^2\right\rVert_1+2\sum_{i=1}^{k+1}\sum_{j=1}^{k+1}
\left\lVert \mD_i\mD_j^2\mD^{-2}_\Sigma\eta_j^2\right\rVert_1\\
&\le 2\sum_{i=1}^{k+1}
\left\lVert \mD_i\eta_i^2\right\rVert_1+2\sum_{j=1}^{k+1}
\left\lVert \mD_j^2\mD^{-1}_\Sigma\eta_j^2\right\rVert_1\\
&\le 2\sum_{i=1}^{k+1}
\left\lVert \mD_i \eta_i^2\right\rVert_1+2\sum_{j=1}^{k+1}
\left\lVert \mD_j\eta_j^2\right\rVert_1\\
&\le 4\sum_{i=1}^{k+1}
\left\lVert \mD_i\eta_i^2\right\rVert_1\\
&=
4\sum_{i=1}^{k+1} \left\|\mD^{1/2}_i \left(w+\sum_{j=1}^{k+1} \mD_j\mD_\Sigma\mA(v_i - v_j)\right)\right\|_2^2
\end{align*}
\end{proof}
We now prove \Cref{lem:ipm:normbounds}.
\normbounds*

\begin{proof}
We have $\ox_i\os_i\approx_{1/5}\mu$ for all $i\in\{1,2,\ldots,k+1\}$.
Thus, 
$$\dfrac{1}{\os_i^2}\approx_{1/5}\frac{\ox_i\os_i}{\mu}~\frac{1}{\os_i^2} = \dfrac{d_i}{\mu}.$$

For $i=1,\ldots,k+1$, let $\eta_i=w+\mA v_i-\sum_{j=1}^{k+1}\dfrac{d_j}{d}\mA v_j=w+\sum_{j=1}^{k+1} \frac{d_j}{d} \mA (v_i-v_j)$. Then,
\begin{align*}
\frac{1}{\mu}\sum_{i=1}^{k+1}\lVert d_i \eta_i^2\rVert_1\approx_{1/5}\sum_{i=1}^{k+1}\lVert \frac{1}{\os_i^2} \eta_i^2\rVert_1=
\sum_{i=1}^{k+1}\lVert\frac{1}{\os_i}\eta_i\rVert_2^2\le\epsilon^2
\end{align*}
By \Cref{lem:vector:simplifiednorm} we have
\begin{align*}
\sum_{i=1}^{k+1}\sum_{j=1}^{k+1}
\left\lVert\frac{1}{\os_i}\frac{d_j}{d}\mA(v_i-v_j)\right\rVert_2^2
&\approx_{1/5}\sum_{i=1}^{k+1}\sum_{j=1}^{k+1}
\frac{1}{\mu}\left\lVert\sqrt{d_i}\frac{d_j}{d}\mA(v_i-v_j)\right\rVert_2^2\\
&\le \frac{4}{\mu}\sum_{i=1}^{k+1}
\left\lVert\sqrt{d_i}\eta_i\right\rVert_2^2\\
\sum_{i=1}^{k+1}\sum_{j=1}^{k+1}
\left\lVert\frac{1}{\os_i}\frac{d_j}{d}\mA(v_i-v_j)\right\rVert_2^2&\le e^{2/5}4\epsilon^2\le 6\epsilon^2
\end{align*}

Since $w=\sum_{i\in[k+1]}\dfrac{d_i}{d}\eta_i$,

\begin{align*}
\sum_{i\in[k+1]}\left\lVert\frac{1}{\os_i}w\right\rVert_2^2&\approx_{1/5}\sum_{i\in[k+1]}\lVert \frac{d_i}{\mu}w^2\rVert_1\\
&=\lVert \frac{d}{\mu}w^2\rVert_1\\
&=\lVert \frac{d}{\mu}(\sum_{i\in[k+1]}\frac{d_i}{d}\eta_i)^2\rVert_1\\
&\le \lVert \frac{d}{\mu}\sum_{i\in[k+1]}\frac{d_i}{d}\eta_i^2\rVert_1&&\text{by Jensen's inequality}\\
&\le \lVert \sum_{i\in[k+1]}\frac{d_i}{\mu}\eta_i^2\rVert_1\\
\sum_{i\in[k+1]}\left\lVert\frac{1}{\os_i}w\right\rVert_2^2&\le e^{2/5}\epsilon^2\le 2\epsilon^2
\end{align*}

\end{proof}

\section{Multi-Commodity Flow Algorithm}
\label{sec:algorithm}

In this section we prove our main results \Cref{thm:intro:mincost,thm:intro:throughput} -- solving multi-commodity flow in $\tilde{O}(k^{2.5}\sqrt{m}n^{\omega-1/2})$ time.
We restate one of the results here as a reminder.

\mincost*

We prove this by combining the data structures from \Cref{sec:vector} with the interior point method from \Cref{sec:IPM}, i.e.~we show that the interior point method can be implemented efficiently by using the data structures.

We start by handling some particularities of $k$-commodity flow: The matrix $\mB^\top\mB$ is not invertible if $\mB$ is an incidence matrix, so technically we can not directly apply the interior point method from \Cref{sec:IPM}. 
In \Cref{sec:algorithm:initial} we describe how to modify the $k$-commodity flow LP such that this issue can be resolved.
We also describe how to modify the LP such that we can find an initial point for the interior point method.

Then in \Cref{sec:algorithm:implement} we combine all our results to prove \Cref{thm:intro:mincost,thm:intro:throughput}

\subsection{Initial Point and Invertibility}
\label{sec:algorithm:initial}
Let us quickly recap the structure of the given linear program to see why the constraint matrix is not full-rank. Then we describe how to make the linear program full rank.

We are given a $k$-commodity flow instance on graph $G=(V,E)$ with two demands $d_1,...,d_k \in \R^V$, costs $c_1,...,c_k \in \R^E$, edge capacities $u \in \R^E_{>0}$. We can write this instance as an LP as follows.
Let $\mB\in\{-1,0,+1\}^{E\times V}$ be the edge-vertex-incidence matrix, then define
\begin{align}
    \cB^ := \begin{bmatrix}
        \mB & 0 & 0 & \mI \\
        0 & \ddots & 0 & \mI \\
        0 & 0 & \mB & \mI \\
        0 & ... & 0 & \mI
    \end{bmatrix} \in \R^{(k+1)E\times(kV+E)} \quad
    d = \begin{pmatrix}
        d_1 \\
        \vdots \\
        d_k \\
        u
    \end{pmatrix} \in \R^{kV+E} \quad
    c = \begin{pmatrix}
        c_1 \\
        \vdots \\
        c_k \\
        0
    \end{pmatrix} \in \R^{(k+1)E}
\end{align}
Then the $k$-commodity flow problem can be written as following primal and dual LP
\begin{align}
    (P)~\min_{\cB^\top x = d, x\ge 0} c^\top x  \quad\quad\quad (D)~\max_{\cB y + s = c, s\ge 0} b^\top y \label{eq:structure:lp}
\end{align}
Here $x$ can be split into $k+1$ many $m$-dimensional vectors $x = (x_1, ..., x_{k+1})$ where $x_1,..., x_k$ are the flows corresponding to the $k$ commodities and $x_{k+1} = u-\sum_{i=1}^k x_i$ is the slack for the capacities.
Note that the matrix $\cB$ is not full rank because $\mB$ is not full rank (e.g.~the vector $(\mathbf{1}_m,0,...,0) \in \R^{kE+V}$ is in the kernel of $\cB$).
To make the matrix full rank, we can modify the matrices as follows.

Assume for simplicity that the graph has only one weakly connected component (otherwise we can just solve the flow problem on each component independently).
Let $\mA$ be the edge-vertex incidence matrix $\mB$ with the first row deleted, and let $b_1,..,b_k$ be the vectors $d_1,...,d_k$ where the first entry was deleted. Further let
\[\cA := \begin{bmatrix}
        \mA & 0 & 0 & \mI \\
        0 & \ddots & 0 & \mI \\
        0 & 0 & \mA & \mI \\
        0 & ... & 0 & \mI
    \end{bmatrix} %
    \quad
    b = \begin{pmatrix}
        b_1 \\
        \vdots \\
        b_k \\
        u
    \end{pmatrix} %
\]
Then the following \Cref{lem:invertible} shows that the matrix becomes full-rank and that deleting the row did not change the solution of the linear program.
\begin{lemma}\label{lem:invertible}
If $\mathbf{1}^\top d = 0$ and the underlying graph is weakly connected, 
then for $x\in\R^{(k+1)E}$, $\cA^\top x=b$ iff $\cB x=d$.

Furthermore, the columns of $\cA$ are linearly independent.
\end{lemma}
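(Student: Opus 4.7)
The plan is to establish the two claims separately, interpreting the statement $\cB x = d$ as the (evidently intended) $\cB^\top x = d$, and reading the condition $\mathbf{1}^\top d = 0$ as the per-commodity condition $\mathbf{1}_V^\top d_i = 0$ for $i = 1,\ldots,k$, which is the usual flow-conservation feasibility requirement.

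For the equivalence, I would first observe that $\cA^\top$ and $\cB^\top$ share the same bottom $E$ rows, which encode the capacity constraints $\sum_{i=1}^{k+1} x_i = u$, so these contribute identically to both systems. The remaining blocks decouple across commodities: $\cB^\top x = d$ reduces to $\mB^\top x_i = d_i$ for $i=1,\ldots,k$, while $\cA^\top x = b$ reduces to $\mA^\top x_i = b_i$. The key identity is $\mathbf{1}_V^\top \mB^\top = (\mB \mathbf{1}_V)^\top = 0$, because every row of $\mB$ has exactly one $+1$ and one $-1$. Hence for any $x_i$, the sum of the $V$ entries of $\mB^\top x_i$ is zero, which means that whenever $\mathbf{1}_V^\top d_i = 0$, the first equation of $\mB^\top x_i = d_i$ is the negative of the sum of the remaining $V-1$ equations. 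Dropping it yields exactly $\mA^\top x_i = b_i$, and no information is lost; this gives both implications.

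For linear independence of the columns of $\cA$, I would write an arbitrary vector in the column space as $v = (y_1,\ldots,y_k,z)$ with $y_i \in \R^{V-1}$ and $z \in \R^E$, and suppose $\cA v = 0$. Reading the bottom block of rows forces $z = 0$; the $i$-th top block of rows then gives $\mA y_i + z = \mA y_i = 0$. It therefore suffices to show that $\mA$ has linearly independent columns. This is the standard fact that for a connected graph on $V$ vertices the incidence matrix has rank $V-1$, with its one-dimensional left kernel spanned by $\mathbf{1}_V$; removing the row corresponding to any single vertex (equivalently, the column of $\mB^\top$ associated with that vertex, which is what $\mA$ encodes after transposition) produces a matrix of full column rank $V-1$.

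The only place where care is needed is matching indexing conventions between ``row of $\mB$'' and ``column of $\mB^\top$''; no computation in the proof is delicate, and the assumption of weak connectedness enters only through the classical rank bound for the incidence matrix. Thus there is no real obstacle beyond bookkeeping.
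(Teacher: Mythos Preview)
Your proposal is correct. The equivalence argument is essentially the paper's: both observe that the deleted constraint in each commodity block is redundant once $\mathbf{1}_V^\top d_i = 0$, so the solution sets of $\cB^\top x = d$ and $\cA^\top x = b$ coincide.

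For linear independence of the columns of $\cA$, you take a somewhat different route than the paper. You argue directly on the kernel: from $\cA v = 0$ the bottom identity block forces $z = 0$, and then each $\mA y_i = 0$ forces $y_i = 0$ by the standard rank fact for incidence matrices of connected graphs. The paper instead forms the Gram matrix $\cA^\top\cA$, identifies its diagonal blocks as Laplacians with one row and column removed, and invokes Kirchhoff's Matrix-Tree Theorem to conclude these blocks are nonsingular. Your argument is more elementary and self-contained; it also sidesteps the (implicit) step of passing from invertibility of the diagonal blocks to invertibility of the full block matrix, which the paper leaves to the reader. The Matrix-Tree Theorem buys nothing extra here since one only needs rank, not the actual count of spanning trees.
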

\begin{proof}
The linear system $\cB x = d$ has linear dependent constraints as $\cB$ has rank $km+kn-k$ but consists of $km+kn$ columns. By deleting $k$ columns of $\cB$, we pick a maximum set of linear independent constraints. So the solution set stays the same.
The linear independence is given by $\cB^\top \cB$ being of form
$$
\begin{bmatrix}
    \mL_1 &  & 0 &*\\
    & \ddots & & *\\
    0 & & \mL_k & * \\
    * & * & * & (k+1)\mI
\end{bmatrix}
$$
where $\mL_1,...,\mL_k$ are Laplacians with one row and column deleted.
By Kirchoff's Matrix-Tree Theorem \cite{MooreM11} their determinant is the number of spanning forests. So they are full rank (non-zero determinant) as we assumed the underlying graph is connected.
\end{proof}

Note that the interior point method from \Cref{sec:IPM} (\Cref{thm:commodity_ipm}) assumes that vectors $x,s$ are initially given and satisfy $xs\approx t$ for some $t\in\R_{>0}$. %
Thus to solve $k$-commodity flow, we must first construct these initial vectors. 
This is done by slightly modifying the graph of the $k$-commodity flow instance without substantially changing the optimal solution. 
For this modified graph, it is easy to construct the initial vectors $x,s$.
This modified graph will have different edge costs, so solving that instance would not solve the original problem. However, we can show that by running the interior point method in ``reverse'' (i.e.~increasing $t$ in each iteration) we can swap the cost vectors back to the original cost after reaching large enough $t$.
This way we obtain a centered initial point for the original cost vector.

\begin{lemma}[Initial point lemma]\label{lem:initial_point}
Suppose we are given a feasible $k$-commodity flow instance on a graph $G=(V,E)$ with demands $d_1,\ldots,d_k\in\mathbb{Z}^V$, costs $c_1,\ldots,c_k\in\mathbb{Z}^E$, capacities $u\in\mathbb{Z}_{>0}^E$, and some $\epsilon\in(0,0.1]$. Let $C=\max_{i\in[k]}\lVert c_i\rVert_\infty$ and $U=\max(\lVert u\rVert_\infty,\max_{i\in[k]}\lVert d_i\rVert_\infty)$.

We can construct a $k$-commodity flow instance on a modified graph $G'=(V',E')$ with demands $d'_1,...,d'_k\in\mathbb{Z}^{V'}$, costs $c_1',...,c_k'\in\mathbb{Z}^{E'}$ and second set of costs $c_1'',...,c_k''\in\mathbb{Z}^{E'}$, and capacities $u'\in\mathbb{Z}_{>0}^{E'}$ with the following properties:
\begin{itemize}
    \item $|V'|=|V|+1$, $|E'|=|E|+|V|$,
and $c_1',\ldots,c_k'$, $c_1'',\ldots,c_k''$, and $u'$ have entries bounded in magnitude by $\text{poly}(m,C,U,k,\epsilon^{-1})$.

    \item For cost $c'$, we can find a primal feasible $x'\in\mathbb{R}^{(k+1)E'}$, dual feasible $y'\in\mathbb{R}^{kV'+E'}$, and corresponding slack $s\in\mathbb{R}^{(k+1)E'}$, all of whose entries are bounded in magnitude by $\text{poly}(m,C,U,k)$, such that $x's' \approx_\epsilon 1$.

    \item Further, for any feasible $x'',s''\in\R^{(k+1)E'}$ with $x''s''\approx_\epsilon \mu$
for $\mu > 30mk(CU)^2/\epsilon^3$ and cost $c'$,
we have $x''s''' \approx_{2\epsilon} \mu$ where $s'''$ is the slack when replacing costs $c'$ by $c''$. In particular, the solutions stay feasible and centered when replacing the cost vector.

    \item At last, any primal feasible $x' \in \R^{(k+1)E'}$ with $c''^\top x' \le OPT(G',c'')+\epsilon$
can be truncated to form $x\in\R^{(k+1)E}$ with
\[c^\top x\le OPT(G,c)+\epsilon\]
and for incidence matrix $\mB$ of $G$
\[\sum_{i\in[k]}\lVert \mB x_i-d_i\rVert_1\le\epsilon.\]
\end{itemize}
\end{lemma}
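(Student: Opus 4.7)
The plan is to construct $G'$ by adding a single super-vertex $v^*$ to form $V' = V \cup \{v^*\}$, together with one auxiliary edge $e_v$ between $v^*$ and each $v \in V$, with each $e_v$ oriented so that an appropriately chosen initial flow is non-negative on it. This gives $|V'| = |V| + 1$ and $|E'| = |E| + |V|$. I would keep $d'_i(v) = d_i(v)$ on $V$ and put $d'_i(v^*) = 0$ (consistent with $\mathbf{1}^\top d_i = 0$). Capacities are $u'_e = \Theta(U)$ on original edges and $u'_{e_v} = M$ on auxiliary edges; the modified cost $c''$ coincides with $c$ on original edges and takes a big value $M' = \Theta(mkCU/\epsilon)$ on auxiliary edges. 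The cost $c'$ is determined by the centering condition below.

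For the initial point, I would set a uniform flow $x^\init_{i,e} = \alpha = \Theta(U/n)$ on every original edge for every $i \in [k]$, and let $x^\init_{i, e_v}$ be the value forced by flow conservation at $v$ (with the direction of $e_v$ chosen so that this value is strictly positive and at least inverse-polynomial). Set the capacity slack $x^\init_{k+1, e} = u'_e - \sum_{i \le k} x^\init_{i,e}$. Every entry of $x^\init$ is then bounded above and below by inverse polynomials. To achieve $x^\init s^\init = \mathbf{1}$ (and thus $\approx_\epsilon 1$ for free), take $s^\init = 1/x^\init$ entrywise; this determines the dual via $\cB y^\init + s^\init = c'$ by setting $y^\init_{i, v} = 0$ for the flow-conservation duals and $y^\init_{\mathrm{cap}, e} = -1/x^\init_{k+1, e}$, yielding $c'_{i,e} = 1/x^\init_{i,e} - 1/x^\init_{k+1,e}$, which is polynomially bounded.

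For the cost swap, keeping $(x'', y'')$ fixed and replacing $c'$ by $c''$ gives $s''' = s'' + (c'' - c')$, so $x'' s''' = x''s'' + x''(c'' - c')$. On original edges $\|c'' - c'\|_\infty = O(C + n/U)$ and $\|x''\|_\infty \le u'_e = O(U)$, giving a product of $O(CU + n)$; on auxiliary edges $\|c'' - c'\|_\infty = O(M')$ and $\|x''\|_\infty \le M$, giving a product of $O(MM')$. The hypothesis $\mu > 30mk(CU)^2/\epsilon^3$ dominates $MM'/\epsilon$ for the right polynomial choices of $M$ and $M'$, so $\|x''(c'' - c')\|_\infty \le \epsilon \mu$, hence $x''s''' \approx_{2\epsilon} \mu$. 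For the truncation, the extension of the true $G$-optimum to $G'$ by zero auxiliary flow is feasible with cost $\mathrm{OPT}(G,c) \le mCU$, so $\mathrm{OPT}(G',c'') \le mCU$; any primal-feasible $x'$ with $c''^\top x' \le \mathrm{OPT}(G',c'') + \epsilon$ then obeys $M' \sum_{i,v} x'_{i, e_v} = c''^\top x' - c^\top x'|_E \le 2mkCU$ (using $c^\top x'|_E \ge -mkCU$), so $M' = 2mkCU/\epsilon$ forces $\sum_{i,v} x'_{i, e_v} \le \epsilon$, which is exactly $\sum_i \|\mB^\top x_i - d_i\|_1$ for the truncated $x = x'|_E$; the cost bound $c^\top x \le c''^\top x' \le \mathrm{OPT}(G,c) + \epsilon$ is immediate.

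The main obstacle is carefully choosing the orientation of each $e_v$ and the magnitude $\alpha$ so that $x^\init$ is strictly positive and bounded below on every auxiliary edge (required for $s^\init = 1/x^\init$ to have polynomially bounded entries), and simultaneously pinning down $M$ and $M'$ so that the cost-swap tolerance $\epsilon\mu$ and the truncation tolerance $\epsilon$ are both met under the stated threshold. Once those choices are fixed, each of the four bullets reduces to direct algebraic verification of the relevant inequalities.
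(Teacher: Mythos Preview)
Your overall strategy matches the paper's: add a super-vertex, route a synthetic initial flow through it, choose $c'$ so that $x^\init s^\init = 1$, and put a big-$M$ cost on the auxiliary edges for $c''$. Your cost-swap and truncation arguments are correct and essentially identical to the paper's. Your treatment of the dual (setting $y^\init_{\mathrm{cap},e} = -1/x^\init_{k+1,e}$ so that $c'_{k+1}=0$ as the LP structure demands) is in fact more careful than the paper, which simply writes $c'=1/x'$, $y'=0$ and glosses over the vanishing $(k{+}1)$-th cost block.

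There are, however, two genuine gaps in your initial-point construction.

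\textbf{A single auxiliary edge per vertex does not suffice.} The flow that conservation forces on $e_v$ for commodity $i$ equals a term depending only on $v$ (namely $\alpha$ times the degree imbalance at $v$) plus a term $\pm d_i(v)$. The first term is common to all commodities; the second can have opposite signs for different $i$. At any vertex $v$ with, say, $d_1(v) > 0$ and $d_2(v) < 0$ and zero degree imbalance, no single orientation of $e_v$ makes the forced flow non-negative for every commodity simultaneously, so $x^\init$ cannot be primal feasible. The paper sidesteps this by connecting $v^*$ to each $v$ with directed edges in \emph{both} directions and routing at least one unit each way; so in fact $|E'| = |E| + 2|V|$, and the bullet $|E'| = |E| + |V|$ in the lemma is a harmless misstatement that affects nothing downstream.

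\textbf{Uniform $\alpha$ need not respect small capacities.} Since $u \in \mathbb{Z}_{>0}^E$, some original edge may have $u_e = 1$, and then $\sum_{i \le k} x^\init_{i,e} = k\alpha = \Theta(kU/n)$ can exceed $u_e$, making the capacity slack $x^\init_{k+1,e}$ negative. The fix (as in the paper) is edge-adaptive: set $x^\init_{i,e} = u_e/(k+1)$ on each original edge, so the slack is exactly $u_e/(k+1) > 0$. For the same reason you should keep $u'_e = u_e$ on original edges rather than inflating them to $\Theta(U)$; otherwise the truncated solution is not capacity-feasible on $G$, and the bound $c^\top x \le \OPT(G,c)+\epsilon$ is vacuous.
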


\begin{proof}
    Given graph $G=(V,E)$, we construct $G'=(V',E')$ by adding a new vertex with demand zero and connecting it with every other vertex with directed edges in both directions. Let $d' \in \R^{V+1}$ be the corresponding demand vector, i.e.~$d'_v = d_v$ for all $v\in V$ and $d'_u = 0$ for the newly added vertex.

    \paragraph{Initial points $x'$, $s'$:}
    A flow $x'$ for $G'$ can be constructed by routing flow $u_e/(k+1)$ on each edge $e \in E$ for each commodity.
    The flow $x'$ can be made feasible (i.e.~satisfy the demands) by routing the missing flow through the newly added vertex.
    We can assume that for each commodity and each newly added edge, it routes at least $1$ unit of flow (i.e.~we simply route $1$ unit back and forth).
    The capacity on the newly added edges $e\in E'\setminus E$ is set to $1$ plus the amount of flow on $e$ in $x'$. (So the slack of the capacity constraints is $1$ on the new edges and $u_e/(k+1)$ on the original edges.) 

    Let $x'_1,...,x'_k \in \R^{E'}$ be the flows of the individual commodities. Let $x'_{k+1}$ be the slack of the capacity constraints.
    Then $x' \in \R^{(k+1)E'}_{>0}$ is a feasible solution for the LP representing the multi-commodity flow on $G'$.
    The cost $c' \in \R^{(k+1)E'}$ is set to $1/x'$, so dual solution $y'=0$ has slack $s'=1/x'$ and thus $x's'=1$ is a centered solution.

    \paragraph{Switching the cost vector $c'$ to $c''$:}
    Now consider any feasible solution $(x'',s'')$ for the multi-commodity flow problem on $(G,c',d')$ with $x''s''\approx_\epsilon \mu$ for $\mu > 10ZU/\epsilon$ where $Z =3mkCU/\epsilon^2$. 
    Assume we replace cost $c'$ by cost $c''$ which we defined as $c''_e = c_e$ for $e\in E$ and $c''_e = Z$ for $e\in E' \setminus E$.
    Then the slack $s''$ becomes some slack $\hat{s}''_e=s''_e + (c_e - 1/x'_e)$ for $e\in E$  and $\hat{s}''_e = s''_e + Z - c'_e = s'_e + Z - 1/x'_e$ for $e \in E' \setminus E$.
    This implies $\|\hat{s}''-s''\|_\infty \le \max\{ Z, \|c\|_\infty+k+1\} \le Z$ as we can assume $x'_e \ge 1/(k+1)$ by construction for integral capacities $u$. 
    By $x'' \le U$ we have that 
    $s'' \ge 0.5\mu/x'' \ge 0.5\mu/U \ge Z/\epsilon$ in particular, $\hat{s}'' > 0$ is a feasible slack, so the dual solution is still feasible. 
    Further we have
    $x''~\hat{s}'' = x''s'' + x''(\hat{s}''-s'')$
    where by $x'' \le U$ and $\|\hat{s}''-s''\|_\infty \le Z$
    and $\mu > 10UZ/\epsilon$ we have $x''\hat{s}'' \approx_{2\epsilon} \mu$. 

    \paragraph{Transforming solution of $G',c''$ to $G,c$:}
    At last, let $\ox\in\R^{(k+1)E'}$ be any primal feasible solution with $c''^\top \ox \le OPT(G',c'') + \epsilon$.
    Let $x\in\R^{(k+1)E}$ be the restriction of $\ox$ onto the original edges $E$.
    Since $c''$ is the same as $c$ except for the newly added edges, which have positive cost, we have
    $$c^\top x \le c''^\top \ox \le OPT(G', c'') + \epsilon \le OPT(G, c) + \epsilon$$
    where the last inequality comes from the fact that any feasible solution on $G$ is also feasible on $G'$ and both graphs share the same costs except for the newly added edges.
    The solution $x$ is almost feasible because
    \begin{align*}
        c''^\top \ox 
        = 
        c^\top x + Z \sum_{i\in[k]} \sum_{e\in E'\setminus E} (\ox_i)_e
        \ge
        -mCU + Z \sum_{i\in[k]} \sum_{e\in E'\setminus E} (\ox_i)_e
    \end{align*}
    which together with 
    $$c''^\top \ox \le OPT(G',c'') + \epsilon \le OPT(G, c) + \epsilon \le mCU + \epsilon$$
    implies
    $$
    -mCU + Z \sum_{i\in[k]} \sum_{e\in E'\setminus E} (\ox_i)_e
    \le
    mCU + \epsilon
    $$
    and thus $\sum_{i\in[k]} \sum_{e\in E'\setminus E} (\ox_i)_e \le (2mCU+\epsilon)/Z$.
    For $Z = 3mkCU/\epsilon^2$ this implies for the incidence matrix $\mB$ of graph $G$,
    $$
    \sum_{i\in[k]} \| \mB x_i - d_i\|_1
    \le
    \sum_{i\in[k]} \sum_{e\in E'\setminus E} (\ox_i)_e
    \le
    \epsilon.
    $$

\end{proof}

\subsection{Implementing the IPM}
\label{sec:algorithm:implement}

We now have all tools available to us to prove our main results \Cref{thm:intro:mincost,thm:intro:throughput}.
We start by proving that the interior point method from \Cref{sec:IPM} (\Cref{alg:commodity_ipm}) can be implemented efficiently. \Cref{thm:ipm_implement} states the complexity of \Cref{alg:commodity_ipm} when implemented via the data structure from \Cref{sec:vector}.

\begin{theorem}\label{thm:ipm_implement}
    For any $0\le\mu\le1$, we can implement \Cref{alg:commodity_ipm} such that if the input is an incidence matrix, the total time is bounded by
    $$
    \tO(\sqrt{m}(n^{\omega-1/2}+n^{\omega(1,1,\mu)-\mu/2}+n^{1+\mu}+n\log \frac{Ut^\init}{S t^\target})\log|t^\init/t^\target|)
    $$
    where $U$ is the largest capacity and $S$ is the largest entry of $s^\init$.
    For current bounds on $\omega\approx2.373$ and $\alpha\le0.319$ this is
    $$
    \tO(\sqrt{m}(n^{\omega-1/2}+n\log \frac{Ut^\init}{S t^\target})\log|t^\init/t^\target|)
    $$
\end{theorem}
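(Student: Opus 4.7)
The plan is to implement each iteration of Algorithm \ref{alg:commodity_ipm} using three cooperating data structures: (i) the vector maintenance data structure of Theorem \ref{thm:primaldual:maintenance} to maintain the approximate iterates $\ox \approx x$, $\os \approx s$ without ever writing them down explicitly; (ii) an inverse maintenance data structure applied to the reduced $kn \times kn$ matrix $\mE$ defined in \eqref{eq:ipm:defineE}; and (iii) the heavy-hitter pipeline for multi-commodity incidence matrices sketched in Section \ref{sec:overview:singlecommodity}, which internally runs $O(k^2)$ copies of the classical single-commodity heavy hitter of \cite{BrandLN+20}. The number of iterations and all step-norm bounds come directly from Theorem \ref{thm:commodity_ipm}: $T = O(\sqrt{km}\,\lambda \log(t^\init/t^\target))$ iterations, with $\|\mS^{-1}\delta_s\|_2, \|\mX^{-1}\delta_x\|_2 \le \lambda/16$ and $\|xs/t-1\|_\infty \le 1/16$. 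Corollary \ref{cor:bitlengthbound} provides the ratio bound, so the parameter $\log W$ appearing in Theorem \ref{thm:primaldual:maintenance} is bounded by $O(\log(Ut^\init/(St^\target)))$.

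The first substantive step is to bound how many entries of $\mE$ change per iteration. By Lemma \ref{lem:vector:amortizednumchanges}, the coordinates of $\ox,\os$ in which they change are $\tO(2^i/\epsilon^2)$ over any window of $2^i$ iterations, which for $\epsilon = \Theta(1/\lambda) = \tilde\Theta(1/\sqrt{km})$ amortizes to $\tO(\sqrt{km})$ coordinate changes per iteration. Because $\mA$ is an incidence matrix with exactly two nonzeros per row, each coordinate change to a diagonal entry of some $\mD_j$ or $\mD_\Sigma$ perturbs only the $2\times 2$ block of each of the $O(k^2)$ summands in $\mE$ that is indexed by the two endpoints of the corresponding edge. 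Hence the total number of entry updates to $\mE$ per iteration is amortized $\tO(k^2\sqrt{km})$, with changes localized to a small union of row/column pairs; this is exactly the regime in which the inverse maintenance data structure of \cite{CohenLS19,Brand20,JiangSWZ21} supports the query $\mE^{-1}u$ in amortized time $\tO(n^{\omega-1/2}+n^{\omega(1,1,\mu)-\mu/2}+n^{1+\mu})$ per iteration for any $0\le\mu\le1$, after an $O(n^\omega)$ initialization that comfortably fits within the overall budget. The right-hand side vector $u$ is updated sparsely for the same reason.

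To close the loop, the output $(v_1,\ldots,v_k) = \mE^{-1}u$ together with the scalar vector $w$ is fed into the \textsc{Add} operation of Theorem \ref{thm:primaldual:maintenance}. That data structure already encapsulates the hardest part, namely computing just enough of $\delta_x,\delta_s$ to update $\ox,\os$: internally it decomposes the multi-commodity incidence matrix into $O(k^2)$ classical incidence matrices and runs a single-commodity heavy hitter on each. The key identity that the sum of the individual squared norms is bounded by a constant times the squared norm of the sum, proved as Lemma \ref{lem:vector:simplifiednorm} and used in Lemma \ref{lem:ipm:normbounds}, combined with the IPM's $\|\mS^{-1}\delta_s\|_2 \le \lambda/16$ bound, ensures that the number of candidate heavy hitters returned is $\tO(k^2/\epsilon^2)$ per iteration in total. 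Theorem \ref{thm:primaldual:maintenance} then accounts for the full vector-maintenance cost as $\tO(k^2 m^{1+o(1)} + Tkn\log W)$.

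Summing the three contributions over $T = \tO(\sqrt{km}\log(t^\init/t^\target))$ iterations and absorbing $\polylog(m,k)$ into $\tO(\cdot)$ yields the stated bound, with the additive $\sqrt{m}\,n\log(Ut^\init/(St^\target))$ term coming precisely from the $Tkn\log W$ factor inside Theorem \ref{thm:primaldual:maintenance}. The main obstacle, and the step I would check most carefully, is verifying that the three amortizations are mutually consistent: the per-iteration amortized budget for updates to $\mE$ is driven by the number of updates to $\ox,\os$ produced by the vector maintenance, which is in turn controlled by the step norms from Theorem \ref{thm:commodity_ipm} via Lemma \ref{lem:vector:amortizednumchanges}; and the heavy hitter routine inside the vector maintenance must not spuriously return more candidates than Lemma \ref{lem:ipm:normbounds} allows, as otherwise a feedback loop would inflate the per-iteration cost. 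Both invariants hold because the norm bounds on $\delta_s,\delta_x$ survive the decomposition into $O(k^2)$ single-commodity pieces by Lemma \ref{lem:ipm:normbounds}.
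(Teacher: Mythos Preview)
Your high-level architecture is right and matches the paper: vector maintenance (\Cref{thm:primaldual:maintenance}) for $\ox,\os$, inverse maintenance for $\mE$, and the heavy-hitter decomposition via \Cref{lem:ipm:normbounds}. But two genuine gaps remain.

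\textbf{Batching into $\sqrt{m}$-iteration phases.} \Cref{thm:primaldual:maintenance} only covers the first $T\le\sqrt{m}$ calls to \textsc{Add}. You apply it directly over all $\tilde O(\sqrt{km}\log|t^\init/t^\target|)$ iterations, which is not licensed. The paper splits the run into batches of $O(\sqrt{m})$ iterations, calls \textsc{Exact}$()$ at the end of each batch to materialize $x,s$, and reinitializes all data structures; the batch count contributes the outer $\tilde O(\sqrt{k}\log|t^\init/t^\target|)$ factor. This also keeps the $\ov=\ox\os/t^\init$ approximation valid, since $t$ drifts by only a $1+O(1/\lambda)$ factor within a batch.

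\textbf{Inverse-maintenance amortization.} You assert that ``$\tO(\sqrt{km})$ coordinate changes per iteration'' is ``exactly the regime'' in which the $\tO(n^{\omega-1/2}+n^{\omega(1,1,\mu)-\mu/2}+n^{1+\mu})$ amortized bound holds. That is not how the amortization works: a flat average of $\sqrt{m}$ rank-one updates per iteration would not give this bound. The paper needs the \emph{window-dependent} guarantee of \Cref{lem:vector:amortizednumchanges} (at most $\tilde O(2^{2i})$ changes in every $2^i$-iteration window) together with a specific lazy strategy: call \textsc{TempUpdate} each iteration, batch into a real \textsc{Update} once accumulated rank exceeds $n^\mu$, and reinitialize once it exceeds $nk$. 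The convexity of $\omega(1,1,\cdot)$ then bounds the geometric sum over scales. Your write-up skips this mechanism entirely.

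A smaller slip: you write $\epsilon=\Theta(1/\lambda)=\tilde\Theta(1/\sqrt{km})$, but $\lambda=16\log 40\sqrt{(k+1)m}$ is polylogarithmic, so $\epsilon$ is a constant up to $\tilde O(\cdot)$, not $1/\sqrt{km}$. This error feeds into your (incorrect) amortization claim.
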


To prove \Cref{thm:ipm_implement}, we need one more data structure for maintaining the solution of a sparsely changing linear system.

\begin{lemma}\label{lem:inverse_maintenance}
    There exists a deterministic data structure with the following operations:
    \begin{itemize}
        \item \textsc{Initialize} %
        Initializes on given $\mM\in\R^{d\times d}, v\in\R^d$ and returns $\mM^{-1}v$ in $O(d^\omega)$ time.
        \item \textsc{Update}($\mU,\mV,v$) %
        For any $0\le \mu$, perform a rank $d^\mu$ update $\mM \leftarrow \mM + \mU\mV^\top$ and replace $v$ by the given new vector.
        Then return $\mM^{-1}v$ in $O(d^{\omega(1,1,\mu)})$ time.
        \item \textsc{TempUpdate} %
        For any $0\le\mu\le 1$, \emph{temporarily} perform a rank $d^\mu$ update $\mM \leftarrow \mM + \mU\mV^\top$ where $\mU,\mV$ have at most $d^\nu$ non-zero entries, and change up to $d^\nu$ entries of $v$.
        Return $\mM^{-1} v$ in $O(d^{\omega\cdot\mu}+d^{1+\nu})$ time. 
        Then revert these changes to $\mM$ and $v$ again.
    \end{itemize}
\end{lemma}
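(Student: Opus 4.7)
All three operations are standard applications of the Sherman--Morrison--Woodbury identity
\[
(\mM+\mU\mV^\top)^{-1} \;=\; \mM^{-1}-\mM^{-1}\mU\,(\mI+\mV^\top\mM^{-1}\mU)^{-1}\mV^\top\mM^{-1}.
\]
The plan is to explicitly store $\mM^{-1}$ (kept consistent with the current $\mM$) together with the current vector $\mM^{-1}v$; each operation is implemented by chaining a few matrix--vector or matrix--matrix products whose costs are read off from the sizes and sparsity of the updated blocks. For \textsc{Initialize} I would compute $\mM^{-1}$ by fast matrix inversion in $O(d^\omega)$ time, store it, and return $\mM^{-1}v$ using $O(d^2)\le O(d^\omega)$ extra time.

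For \textsc{Update}, having $\mM^{-1}$ stored explicitly, I apply Woodbury to overwrite the inverse. The bottleneck products are $\mM^{-1}\mU$ and $\mV^\top\mM^{-1}$ of shape $d\times d^\mu$ and $d^\mu\times d$, each computed in $O(d^{\omega(1,1,\mu)})$ time. The remaining steps --- inverting the $d^\mu\times d^\mu$ matrix $\mI+\mV^\top\mM^{-1}\mU$ in $O(d^{3\mu})$ time, and multiplying the three factors $(\mM^{-1}\mU)\cdot(\cdot)^{-1}\cdot(\mV^\top\mM^{-1})$ to form the rank--$d^\mu$ correction --- each fit in $O(d^{\omega(1,1,\mu)})$ time as well. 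Subtracting from the stored $\mM^{-1}$ and recomputing $\mM^{-1}v$ for the new $v$ are $O(d^2)$ operations.

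For \textsc{TempUpdate}, the crucial point is that I must \emph{not} materialize the updated inverse. Write $v'=v+\Delta v$ with at most $d^\nu$ non-zeros in $\Delta v$. Since $\mM^{-1}$ is stored, I form $\mM^{-1}v'=\mM^{-1}v+\mM^{-1}\Delta v$ in $O(d^{1+\nu})$ time by touching only the $\le d^\nu$ columns of $\mM^{-1}$ indexed by the non-zero coordinates of $\Delta v$. Applying Woodbury,
\[
(\mM+\mU\mV^\top)^{-1}v' \;=\; \mM^{-1}v' \;-\; \mM^{-1}\mU\,(\mI+\mV^\top\mM^{-1}\mU)^{-1}\mV^\top\mM^{-1}v',
\]
I evaluate right-to-left: form $z_1=\mV^\top(\mM^{-1}v')\in\R^{d^\mu}$ in $O(d^\nu)$ time using the non-zeros of $\mV$; assemble $\mI+\mV^\top\mM^{-1}\mU$ by enumerating pairs of non-zeros of $\mU,\mV$ (or by computing only the relevant rows of $\mM^{-1}\mU$) in $O(d^{2\nu}+d^{\mu+\nu})=O(d^{1+\nu})$ time; invert this $d^\mu\times d^\mu$ block in $O(d^{\omega\mu})$ time and apply it to $z_1$ to obtain $z_2\in\R^{d^\mu}$; form $z_3=\mU z_2\in\R^d$ in $O(d^\nu)$ time, noting that $z_3$ has at most $d^\nu$ non-zero entries because $\mU$ does; finally compute $z_4=\mM^{-1}z_3$ in $O(d^{1+\nu})$ time as for $\mM^{-1}\Delta v$. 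Returning $\mM^{-1}v'-z_4$ costs $O(d)$ more, and the total is $O(d^{\omega\mu}+d^{1+\nu})$. Since nothing stored was modified, the state trivially reverts.

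\textbf{Main obstacle.} The only delicate point is the accounting for \textsc{TempUpdate}: one must avoid ever materializing the full $d\times d^\mu$ matrix $\mM^{-1}\mU$, which would cost $O(d^{1+\mu})$ and break the bound whenever $\mu>\nu$. The key observation is that $\mU z_2$ inherits the $d^\nu$-sparsity of $\mU$, so $\mM^{-1}(\mU z_2)$ can be computed by accessing only $d^\nu$ columns of $\mM^{-1}$; with this in hand, every intermediate quantity is either a $d$-vector with $\le d^\nu$ non-zeros, a dense $d^\mu$-vector, or a dense $d^\mu\times d^\mu$ matrix, and the bound falls out. No deeper idea is required beyond careful bookkeeping combined with Sherman--Morrison--Woodbury.
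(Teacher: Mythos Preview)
Your proposal is correct and follows the same Sherman--Morrison--Woodbury route as the paper. Two remarks are worth making.

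First, your ``main obstacle'' is a non-issue: a rank-$d^\mu$ update requires $\mU$ to have at least $d^\mu$ nonzero rows, so $\mu\le\nu$ is implicit in the hypothesis and $d^{1+\mu}\le d^{1+\nu}$ always. The paper accordingly just computes $\mM^{-1}\mU$ explicitly in $O(d^{1+\nu})$ time (column by column, using sparsity of $\mU$), then forms $\mV^\top(\mM^{-1}\mU)$ in $O(d^{\mu+\nu})\le O(d^{1+\nu})$ time, and finally evaluates the Woodbury correction on a single vector. Your more delicate accounting that avoids materializing $\mM^{-1}\mU$ is valid but unnecessary. The paper also folds $v$ into the matrix via the augmented matrix $\mN=\begin{pmatrix}\mM & v\\ 0 & -1\end{pmatrix}$, whose inverse has $\mM^{-1}v$ as its last column; this lets changes to $v$ be treated uniformly as part of the low-rank update to $\mN$, rather than handled separately as you do.

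Second, two minor gaps: for \textsc{Update} you should invert the $d^\mu\times d^\mu$ block in $O(d^{\omega\mu})$ rather than $O(d^{3\mu})$ time (at $\mu=1$ the latter exceeds the claimed $d^{\omega(1,1,1)}=d^\omega$), and you do not address the case $\mu>1$ allowed by the lemma, where Woodbury is wasteful and one should simply recompute the inverse from scratch in $O(d^\omega)\le O(d^{\omega(1,1,\mu)})$ time after forming $\mU\mV^\top$.
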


\begin{proof}
    First, observe that
    $$
    \mN = \begin{bmatrix}
        \mM & v \\
        0 & -1
    \end{bmatrix}
    \text{ has inverse }
    \mN^{-1} = \begin{bmatrix}
        \mM^{-1} & \mM^{-1}v \\
        0 & -1
    \end{bmatrix}.
    $$
    So to maintain $\mM^{-1}v$ we just need to maintain the inverse of $\mN$ and return the last columns of its inverse.
    
    To maintain the inverse, we use the Woodbury-identity \cite{Woodbury50,ShermanM50}:
    $$
    (\mN^{-1} + \mU \mV^\top)^{-1} = \mN^{-1} - \mN^{-1}\mU (\mI + \mV^\top \mN^{-1} \mU)^{-1} \mV^\top \mN^{-1}
    $$
    For \textsc{Update} of rank $d^\mu$, the matrices $\mU,\mV$ have $d^\mu$ columns.
    If $\mu \le 1$, the Woodbury identity allows us to obtain the new inverse in $O(n^{\omega(1,1,\mu)})$ time.
    For $\mu \ge 1$, we first compute $\mU\mV^\top$ in $O(n^{\omega(1,1,\mu)})$ time, and then compute the inverse without Woodbury identity in $O(d^\omega)$ time.

    For \textsc{TempUpdate} of rank $d^\mu$ where $\mU,\mV$ have at most $d^\nu$ non-zero entries, we can compute $\mN^{-1}\mU$ and $\mV^\top \mN^{-1}$ in $O(d^{1+\nu})$ time, and then compute $(\mI + \mV^\top \mN^{-1} \mU)^{-1}$ in $O(d^{\mu\omega})$ time.
    To return the last columns of the inverse, we must compute $$
    (\mN^{-1} - \mN^{-1}\mU (\mI + \mV^\top \mN^{-1} \mU)^{-1} \mV^\top \mN^{-1})e_{d+1}
    $$
    which takes $O(d^{1+\tau})$ time since $\mN^{-1}\mU$ and $\mV^\top \mN^{-1}$ are of size $d\times d^\mu$ and $d^\mu\times d$ respectively.
\end{proof}

\begin{proof}[Proof of \Cref{thm:ipm_implement}]
    \Cref{alg:commodity_ipm} takes $\tilde{O}(\sqrt{km}\log |t^\init/t^\target|)$ iterations (see \Cref{thm:commodity_ipm}). 
    We split this into batches of $O(\sqrt{m})$ iterations. 
    In the following, we analyze the complexity of one such batch, 
    the overall complexity then increases by a $\tilde{O}(\sqrt{k}\log |t^\init/t^\target|)$ factor.

    \paragraph{Maintaining $\ox\approx x,\os\approx s$}
    The main task is to maintain the approximate vectors $\ox,\os$.
    We maintain these via the data structure \Cref{thm:primaldual:maintenance}
    where we use $\beta = \frac{t'}{32\lambda\|g\|_2}$, $z_j = \omS^{-1}_j g_j$, and accuracy $\epsilon = \lambda/500$.
    Thus the vectors maintained by \Cref{thm:primaldual:maintenance} are exactly as given in \Cref{line:commodityipm:deltaxdeltas} of \Cref{alg:commodity_ipm}.
    Note that whenever some entry of any output $\ox_j,\os_j$ changes, we must update an entry of $z_j$ for the same $j$, and an entry of $w$. This takes $\hO(k)$ time per update by \Cref{thm:primaldual:maintenance}.
    By \Cref{lem:vector:amortizednumchanges} there are only $\tilde{O}(m)$ changes over $\sqrt{m}$ iterations, so the total time of all calls to \textsc{Update} of \Cref{thm:primaldual:maintenance} is bounded by $\hO(k m)$.
    The total time of the $\sqrt{m}$ calls to \textsc{Add} are bounded by $\tilde{O}(k^2m^{1+o(1)} + Tkn\log W)$.
    Here $W$ is a bound on the ratio of largest to smallest entry in any $d_i/(d_\Sigma\os_j)$ for $i\neq j$. 
    We will bound $W$ at the end of this proof.
    
    \paragraph{Maintaining $\ov$ and $g$}
    We use $\ov = \ox\os/t^\init$ where $t^\init$ is the value of $t$ at the start of an $O(\sqrt{m})$ iteration batch.
    By the small number of iterations, we know $t$ can change by at most some $1+O(1/\lambda)$ factor.
    Further, $\ox\approx_{\lambda/500}x$ and $\os\approx_{\lambda/500} s$ so we have $\|\ov - xs/t\|_\infty \le 1/(48\lambda)$.
    It is easy to maintain $\ov$: whenever an entry of $\ox$ or $\os$ changes, we change the respective entry in $\ov$.
    The vector $g = \nabla \Phi(\ov)$ can be just as easily maintained since the $i$th entry of $g$ depends only on the $i$th entry of $\ov$.
    This also allows us to maintain $\|g\|_2$.

    In summary, the time complexity of maintaining $\ov,g,\|g\|_2$ is $O(1)$ per changed entry of $\ox$ or $\os$.
    By \Cref{lem:vector:amortizednumchanges} there are at most $\tilde{O}(m)$ entry changes to $\ox,\os$ in total, as we consider only a sequence of $O(\sqrt{m})$ iterations.
    So the total time is bounded by $\tilde{O}(m)$.

    \paragraph{Computing $v_1,v_2,...,v_k$}
    The only value that is left to implement the IPM are the vectors $v_1,...,v_k$ that must be given as input to \textsc{Add} of \Cref{thm:primaldual:maintenance}.
    By \Cref{alg:commodity_ipm}, these values are given by
    $$
    \begin{bmatrix}
        v_1 \\ \vdots \\ v_k
    \end{bmatrix}
    =
    \mE^{-1}
    \underbrace{\begin{bmatrix}
        \mA^\top (\omS_1^{-1} g_1 - \mD_1 w) \\
        \vdots \\
        \mA^\top (\omS_k^{-1} g_k - \mD_k w) 
    \end{bmatrix}}_{=:b}
    $$
    where
    \begin{align}
    \mE := 
    \left[\begin{array}{ccc}
    \mA^\top \mD_1 \mA &  &   0  \\
     & \ddots & \\
    0   &  & \mA^\top\mD_k\mA 
    \end{array}\right]-
    \left[\begin{array}{c}
    \mA^\top \mD_1\\
    \vdots \\
    \mA^\top\mD_k\\
    \end{array}\right]
    \mD_\Sigma^{-1}
    \left[\begin{array}{ccc}
    \mD_1\mA  &  \cdots & \mD_k\mA
    \end{array}\right]
    \label{eq:alg:defineE}
    \end{align}
    Note that changing one entry change to $\ox,\os$ (and thus one change to $\mD_\Sigma$ and one change some $\mD_j$) changes the matrix $\mE$ as follows:
    The left block-diagonal matrix changes in only $O(1)$ entries, while the matrix on the right of \eqref{eq:alg:defineE} results in a rank $O(1)$ update that changes $O(k^2)$ entries.
    In particular, we can phrase the update are some $\mE\leftarrow\mE+\mU\mV^\top$ where $\mU,\mV$ have $O(k)$ columns and $O(k)$ non-zeros.
    Thus we can maintain the vectors $v_1,...,v_k$ via the data structure of \Cref{lem:inverse_maintenance}.
    For that, we call \textsc{TempUpdate} of \Cref{lem:inverse_maintenance} in each iteration to solve the current linear system. Once the total rank of all past updates exceeds $n^\mu$, we call \textsc{Update}.
    If the total change is of rank more than $nk$, we reinitialize the data structure from scratch at cost $O((nk)^\omega)$.
    Since at most $\tilde{O}(2^{2i})$ entries of $\ox,\os$ change every $2^i$ iterations (for $i=0,...,\log\sqrt{m}$) by \Cref{lem:vector:amortizednumchanges} we can bound the amortized complexity by
    \begin{align*}
    &~
    \tilde{O}(
        (kn)\cdot n^\mu \cdot k
        +\sum_{i=\mu\log \sqrt{n}}^{\log\sqrt{nk}} (kn)^{\omega(1,1,2\log_{kn}(2^i))}/2^i
        +\sum_{i=\log\sqrt{nk}}^{\log m} (nk)^\omega/2^i
    ) \\
    =&~
    \tilde{O}(
        k^2 n^{1+\mu}
        +\sum_{i=\mu\log \sqrt{n}}^{\log\sqrt{nk}} k^2n^{\omega(1,1,2\log_{n}(2^i))}/2^i
        +\sum_{i=\log\sqrt{nk}}^{\log m} (nk)^\omega/2^i
    ) \\
    =&~
    \tilde{O}(
        k^2 \left(
        n^{1+\mu}
        +n^{\omega(1,1,\mu)-\mu/2}
        \right)
        +(nk)^{\omega-1/2}
    ) 
    \end{align*}
    where we use the convexity of the matrix exponent $\omega(1,1,\cdot)$, so the maximum is bounded by the two end-points $n^{\mu/2} \le 2^i \le \sqrt{nk}$.
    \paragraph{Overall complexity}
    At the end of one $O(\sqrt{m})$ iteration batch, we compute the current value of $x,s$ explicitly, which takes $\hO(k^2m)$ time by \Cref{thm:primaldual:maintenance}. 
    At the start of the next batch, we reinitialize the data structures of \Cref{thm:primaldual:maintenance} again, which also takes $\hO(k^2m)$ time.
    Overall, such a batch of $\tilde{O}(\sqrt{m})$ iterations takes 
    $\tilde{O}(k^2m^{1+o(1)}+\sqrt{m}k^2(n^{\omega-1/2}+n^{1+\mu}+n^{\omega(1,1,\mu)-\mu/2}+n\log W))$ time,
    where $W$ is the ratio of largest to smallest entry in $d_i/(d_\Sigma\os_j)$ for $i\neq j$.
    We will bound this ratio via \Cref{cor:bitlengthbound}.
    As the maximum number of iterations is bounded by $\tilde{O}(\sqrt{km}|t^\init/t^\target|)$, we can bound $\log W$ by $\tilde{O}(\log \frac{Ut^\init}{S t^\target})$ via \Cref{cor:bitlengthbound} where $U = \|U\|_\infty$ is a bound on the largest capacity, and $S = \|s^\init\|_\infty$.

    In summary, the complexity of the algorithm is bounded by 
    $$\tilde{O}(k^{2.5}\sqrt{m}(n^{\omega-1/2}+n^{1+\mu}+n^{\omega(1,1,\mu)-\mu/2}+n\log\frac{Ut^\init}{S t^\target})\log|t^\init/t^\target|)$$
    which for current bounds on $\omega$ is 
    $$\tilde{O}(k^{2.5}\sqrt{m}(n^{\omega-1/2}+n\log \frac{Ut^\init}{S t^\target})\log|t^\init/t^\target|).$$
\end{proof}

By combining \Cref{thm:ipm_implement} with the construction of an initial point (\Cref{lem:initial_point}) we prove our two main results.

\throughput*
\mincost*
\begin{proof}[Proof of \Cref{thm:intro:mincost,thm:intro:throughput}]
We start with the min-cost version \Cref{thm:intro:mincost}.
We transform the given $k$-commodity flow instance $(G,c,d)$ via \Cref{lem:initial_point} to a $k$-commodity flow instance $(G',c',d')$ to obtain an initial point $x',s',t$ with $\Phi(x's'/t) \le 16n$. 
This requires $\epsilon$ in \Cref{lem:initial_point} to be at most $1/16\lambda = O(1/\log m)$ for $\lambda$ as defined in \Cref{alg:commodity_ipm}. 
At the end, we want to reconstruct a solution for the original instance $(G,c,d)$ that is at most some additive $\delta>0$ away from the optimal solution, so by \Cref{lem:initial_point} it suffices to set $\epsilon = \min\{\delta,1/(16\lambda)\}$.

We first run the algorithm \Cref{alg:commodity_ipm} in reverse (i.e.~increase $t$ in every iteration) $t \ge 3m(CU)^2/\epsilon^2$. Let $x'', s''$ be the vectors obtained at the end.
By \Cref{lem:initial_point} we can now replace the cost vector $c'$ by $c''$ and the solutions stay feasible and centered.
So we now run the algorithm \Cref{alg:commodity_ipm} again until $t = O(\epsilon/m)$ so that by $xs\approx_{1/16}t$ we know the solution is at most some $\epsilon$ away from the optimal cost of the modified $k$-commodity flow instance $(G',c'',d')$.
By \Cref{lem:initial_point}, this is good enough to obtain an approximate solution of the original $k$-commodity flow instance $(G,c,d)$.

The time complexity is $\tilde{O}(\sqrt{m}(n^{\omega-1/2}+n\log(UC/\epsilon))\log(UC/\epsilon))$ by \Cref{thm:ipm_implement} because $t,x,s$ are all bounded by $\poly(mCU/\epsilon)$ by \Cref{lem:initial_point} and \Cref{lem:ratiobound}.

\paragraph{Throughput version}

To solve the throughput version, we simply add an edge for each target-sink pair of negative cost $-1$ and set the cost on all other edges to $0$. 
The demand on each vertex is also set to $0$. 
The maximum throughput version is now a minimum cost version and can be solved in $\tilde{O}(\sqrt{m}(n^{\omega-1/2}+n\log(UC/\epsilon))\log(UC/\epsilon))$ time.
Note that the demands are not be perfectly satisfied by the computed solution. 
To fix this, we can route any superfluous flow back to its origin, e.g.~by solving $k$ single-commodity flow instances. This can reduce the maximum throughput by at most an extra $\epsilon$.
\end{proof}

\section*{Acknowledgement}

We would like to thank Yin Tat Lee, Aaron Sidford, Rasmus Kyng, and Richard Peng for helpful discussions.
This work was partially done while %
Jan van den Brand was at UC Berkeley and the Max Planck Institute for Informatics.
Part of this research was funded by ONR BRC grant N00014-18-1-2562, and by the Simons Institute for the Theory of Computing through a Simons-Berkeley Postdoctoral Fellowship, and by the Max Planck Institute.
\bibliographystyle{alpha}
\bibliography{ref}
\newpage

\appendix

\section{Appendix}
\label{sec:appendix:schur}

\schurcomplement*

\begin{proof}%
To show that $D$ is positive definite, note that for all $x\in\mathbb{R}^n\setminus\{0\}$,
\[x^TDx=\begin{pmatrix}0\\x\end{pmatrix}^T\begin{pmatrix}A&B\\C&D\end{pmatrix}\begin{pmatrix}0\\x\end{pmatrix}>0\]

To show that $E=A-BD^{-1}C$ is positive definite, note that
\[\begin{pmatrix}I&-X\\0&I\end{pmatrix}\]
is full rank, so
\[\begin{pmatrix}I&-X\\0&I\end{pmatrix}\begin{pmatrix}A&B\\C&D\end{pmatrix}\begin{pmatrix}I&0\\-X^T&I\end{pmatrix}=\begin{pmatrix}A-XC-BX^T+XDX^T&B-XD\\C-DX^T&D\end{pmatrix}\]
is positive definite.

Setting $X=BD^{-1}$, we get that the submatrix $A-XC-BX^T+XDX^T=A-BD^{-1}C$ (recall that $B=C^T,D=D^T$) is positive definite as well.

Finally, to show \Cref{eq:structure:psd_inverse}, we have
\begin{align*}
    &=\begin{pmatrix}I&0\\-D^{-1}C&I\end{pmatrix}\begin{pmatrix}E^{-1}&0\\0&I\end{pmatrix}\begin{pmatrix}I&-B\\0&I\end{pmatrix}\begin{pmatrix}A&B\\D^{-1}C&I\end{pmatrix}\\
    &=\begin{pmatrix}I&0\\-D^{-1}C&I\end{pmatrix}\begin{pmatrix}E^{-1}&0\\0&I\end{pmatrix}\begin{pmatrix}A-BD^{-1}C&0\\D^{-1}C&I\end{pmatrix}\\
    &=\begin{pmatrix}I&0\\-D^{-1}C&I\end{pmatrix}\begin{pmatrix}E^{-1}&0\\0&I\end{pmatrix}\begin{pmatrix}E&0\\D^{-1}C&I\end{pmatrix}\\
    &=\begin{pmatrix}I&0\\-D^{-1}C&I\end{pmatrix}\begin{pmatrix}I&0\\D^{-1}C&I\end{pmatrix}\\
    &=\begin{pmatrix}I&0\\0&I\end{pmatrix}\\
\end{align*}
\end{proof}

\section{Extension of Vector-Maintenance}

The proof of \Cref{lem:vector:maintenance} follows the approach of \cite{BrandLSS20,BrandLN+20,BrandLL+21}. 
We can not directly use their data structures, because for them the vector $\delta_s$ (which is added to $s$ in each iteration) was of the form $\mA v$, whereas our $\delta_s$ in \Cref{lem:vector:maintenance} is of the form $\mG \mA v + w\beta$ for some diagonal matrix $\mG$, vector $w$ and scalar $\beta$. 
So we must prove that their approach can be extended to this more general shape of $\delta_s$.

\vectormaintenance*

We prove \Cref{lem:vector:maintenance} in two parts:
(i) We describe data structures that maintain the sums in some implicit form, so that we can query any entry of $s$ efficiently.
(ii) We detect which entries of $s$ might have changed a lot since the last time we set $\os_i \leftarrow s_i$, i.e.~we detect indices $i$ where $|\os_i - s_i| \le \epsilon_i$ might not hold anymore.
For these indices we then set $\os_i \leftarrow s_i$ where $s_i$ can be computed via the implicit representation.

Part (i) is proven in \Cref{sec:vector:subroutines} and part (ii) (which concludes the proof of \Cref{lem:vector:maintenance}) is proven in \Cref{sec:vector:maintain}

\subsection{Subroutines}
\label{sec:vector:subroutines}

To maintain the vector $s$ in implicit form, we split the sum
$$
    s^{(t)} := s^{(0)} + \sum_{\ell=1}^t \left(\mD^{(\ell)} \mA h^{(\ell)} + \beta^{(\ell)}w^{(\ell)}\right)
    $$
into
$$
    \sum_{\ell=1}^t \mD^{(\ell)} \mA h^{(\ell)} \text{  and   }
    \sum_{\ell=1}^t\beta^{(\ell)}w^{(\ell)}
$$
The first sum can be maintained implicitly via the following \Cref{lem:vector:implicit:sumofproduct}.
The second sum is maintained via \Cref{lem:vector:implicit:sumofvector}.

\begin{lemma}\label{lem:vector:implicit:sumofproduct}
    There exists a deterministic data structure with the following operations
    \begin{itemize}
        \item \textsc{Initialize}$(\mA\in\R^{m\times n},d\in\R^m)$
        Initialize on matrix $\mA$ and vector $d$ in $O(\nnz(\mA))$ time.%
        For all $i\in[m]$, let $\nnz(a_i)$ be the number of nonzero entries in the $i\th$ row of $\mA$.
        \item \textsc{Update}$(i\in[m], c\in\R)$
        Set $d_i \leftarrow c$ in $O(\nnz(a_i))$ time.
        \item \textsc{Add}$(h\in\R^n)$
        Store vector $h$ in $O(n)$ time.
        \item \textsc{Query}$(i\in[m])$
        Let $h^{(\ell)}$ be the vector $h$ given during the $\ell\th$ call to \textsc{Add}.
        Let $d^{(\ell)}$ be the state of $d$ during the $\ell\th$ call to \textsc{Add}.
        Let 
        $$
        s^{(t)} := \sum_{\ell=1}^t \mD^{(\ell)} \mA h^{(\ell)}$$
        Return $s^{(t)}_i$ in $O(\nnz(a_i))$ time, where $t$ is the number of calls to \textsc{Add} so far.
    \end{itemize}
\end{lemma}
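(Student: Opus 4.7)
The plan is to maintain $s^{(t)}$ \emph{implicitly} via a lazy scheme that exploits the fact that between consecutive calls to \textsc{Update}$(i, \cdot)$ the scalar $d_i$ stays constant, so that coordinate's contribution to $s^{(t)}_i$ during such an interval depends on the history of $h$ only through $\sum_\ell h^{(\ell)}$. Concretely, I would keep (i) a running sum $H := \sum_{\ell=1}^{t} h^{(\ell)} \in \R^n$, updated in $O(n)$ per \textsc{Add} just by $H \gets H + h$, and (ii) for each row $i$ two scalars: $q_i$, storing the value $a_i^\top H$ at the moment $d_i$ was last modified (initially $0$), and $S_i$, storing the accumulated contribution to $s^{(t)}_i$ from all already-closed intervals (initially $0$). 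Here and below, $a_i$ denotes the $i$-th row of $\mA$.

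The remaining operations are then direct. On \textsc{Update}$(i,c)$, I would compute $p := a_i^\top H$ in $O(\nnz(a_i))$ time, set $S_i \gets S_i + d_i(p - q_i)$, then write $q_i \gets p$ and $d_i \gets c$; on \textsc{Query}$(i)$ I would return $S_i + d_i\bigl(a_i^\top H - q_i\bigr)$, again in $O(\nnz(a_i))$ time. \textsc{Initialize} simply stores $\mA$, $d$ and zeroes $H$ together with all pairs $(q_i, S_i)$ in $O(\nnz(\mA))$ time (using lazy initialization for $(q_i, S_i)$ if $\mA$ has empty rows).

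For correctness, let $0 = \tau_0 < \tau_1 < \dots < \tau_k \le t$ enumerate the calls at which $d_i$ was modified (with the initial value counted as $c_0$), and let $c_0, c_1, \dots, c_k$ be the successive values of $d_i$. Writing $p_i^{(\ell)} := a_i^\top H^{(\ell)}$ and $\tau_{k+1} := t$, the definition of $s^{(t)}_i$ expands by linearity of $a_i^\top$ in $h$ as
\begin{equation*}
s^{(t)}_i \;=\; \sum_{\ell=1}^{t} d_i^{(\ell)}\, a_i^\top h^{(\ell)} \;=\; \sum_{j=0}^{k} c_j \bigl(p_i^{(\tau_{j+1})} - p_i^{(\tau_j)}\bigr).
\end{equation*}
A routine induction on the number of \textsc{Update}s to row $i$ shows that immediately after the $j$-th such update the stored data satisfy $S_i = \sum_{j' < j} c_{j'} \bigl(p_i^{(\tau_{j'+1})} - p_i^{(\tau_{j'})}\bigr)$ and $q_i = p_i^{(\tau_j)}$, so \textsc{Query} correctly adds on the contribution of the currently open interval $[\tau_k, t]$.

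I do not foresee a real obstacle: the whole construction is a standard lazy/telescoping trick, made possible by the fact that each \textsc{Update} modifies only one coordinate of $d$, while \textsc{Add} need only touch the aggregate $H$ and may defer all per-row work until the next time that row is touched by \textsc{Update} or \textsc{Query}. The only slightly delicate point is keeping \textsc{Initialize} at $O(\nnz(\mA))$ despite having $m$ scalar pairs $(q_i, S_i)$, which is handled by lazy initialization.
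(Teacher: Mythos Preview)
Your proposal is correct and is essentially the same lazy/telescoping scheme as the paper's proof: both exploit that between consecutive updates to $d_i$ the contribution to $s^{(t)}_i$ factors as $d_i\cdot a_i^\top(\sum_\ell h^{(\ell)})$, maintain a running prefix sum of the $h$'s, and on \textsc{Update}$(i,\cdot)$ freeze the closed interval's contribution into an accumulator before overwriting $d_i$. The only cosmetic difference is bookkeeping: the paper stores, for each row, the \emph{timestamp} $t_i$ of the last update and keeps all prefix-sum vectors $\tilde h^{(0)},\dots,\tilde h^{(T)}$ around so it can later compute $a_i^\top(\tilde h^{(T)}-\tilde h^{(t_i-1)})$, whereas you store the scalar $q_i=a_i^\top H$ at update time and thereby avoid retaining the full history of prefix sums; both yield the stated time bounds.
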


\begin{proof}
The data structure stores explicitly:
\begin{itemize}
    \item $\mA\in\mathbb{R}^{m\times n}$ as a sparse matrix
    \item $d\in\mathbb{R}^m$, the diagonal of $\mD$
    \item $T\in\mathbb{Z}_{\ge 0}$, the current iteration (number of $h$s)
    \item For each $t\in\{0,1,\ldots,T\}$, $\tilde{h}^{(t)}=\sum_{l=1}^t h^{(l)}\in\mathbb{R}^n$, the prefix sums of $h^{(t)}$
    \item $t\in\mathbb{R}^n$ where $t_i$ is the most recent iteration where $d_i$ was changed %
    \item $\hat{s}\in\mathbb{R}^m$ where $\hat{s}_i=\sum_{l=1}^{t_i-1}(\mD^{(l)}\mA h^{(l)})_i$
\end{itemize}

\begin{algorithm2e}[t]
\caption{Sum Of Product. \label{alg:vector:sumofproduct}}
\SetKwProg{Proc}{procedure}{}{}
\Proc{\textsc{Init}$(\mA\in\mathbb{R}^{m\times n},d\in\mathbb{R}^n)$}{
Initialize $\mA$ and $d$\\
$T\gets 0$\\
$t\gets 1$ (entrywise)\\
$\tilde{h}^{(0)}\gets 0$\\
$\hat{s}\gets 0$
}
\Proc{\textsc{Update}$(i\in[m],c\in\mathbb{R})$}{
$\hat{s}_i\gets \text{Query}(i)$\\
$t_i\gets T+1$\\
$d_i\gets c$
}
\Proc{\textsc{Add}$(h\in\mathbb{R}^n)$}{
$\tilde{h}^{(T+1)}\gets \tilde{h}^{(T)}+h$\\
$T\gets T+1$
}
\Proc{\textsc{Query}$(i\in[m])$}{
Return $\hat{s}_i+d_i (e_i^\top \mA)(\tilde{h}^{(T)}-\tilde{h}^{(t_i-1)})$
}
\end{algorithm2e}

The invariants hold when the data structure is initialized, and continue to hold after each operation.

$\textsc{Query}$ returns the right result because
\begin{align*}
    (s^{(T)})_i&=(\sum_{l=1}^{T}\mD^{(l)}\mA h^{(l)})_i\\
    &=(\sum_{t=1}^{t_i-1}\mD^{(l)}\mA h^{(l)})_i+(\sum_{t_i}^{T}\mD^{(l)}\mA h^{(l)})_i\\
    &=\hat{s}_i^{(T)}+(\sum_{t_i}^{T}\mD^{(T)}\mA h^{(l)})_i\\
    &=\hat{s}_i^{(T)}+\mD^{(T)}_{ii}(\mA \sum_{t_i}^{T}h^{(l)})_i\\
    &=\hat{s}_i^{(T)}+d_i^{(T)}(\mA(\tilde{h}^{(T)}-\tilde{h}^{(t_i-1)}))_i\\
    &=\hat{s}_i^{(T)}+d_i^{(T)}(e_i^\top \mA)(\tilde{h}^{(T)}-\tilde{h}^{(t_i-1)})
\end{align*}
The time complexity of each operation is evident from the pseudocode.
\end{proof}

\begin{lemma}\label{lem:vector:implicit:sumofvector}
    There exists a deterministic data structure with the following operations
    \begin{itemize}
        \item \textsc{Initialize}$(w \in \R^m)$
        Initialize on the vector $w$ in $O(m)$ time.
        \item \textsc{Update}$(i\in[m], c\in\R)$
        Set $w_i \leftarrow c$ in $O(1)$ time.
        \item \textsc{Add}$(\beta\in\R)$
        Store scalar $\beta$ in $O(1)$ time.
        \item \textsc{Query}$(i\in[m])$
        Let $\beta^{(\ell)}$ be the scalar $\beta$ given during the $\ell\th$ call to \textsc{Add}.
        Let $w^{(\ell)}$ be the state of $w$ during the $\ell\th$ call to \textsc{Add}.
        Let 
        $$
        s^{(t)} := \sum_{\ell=1}^t \beta^{(\ell)} w^{(\ell)}$$
        Return $s^{(t)}_i$ in $O(1)$ time, where $t$ is the number of calls to \textsc{Add} so far.
    \end{itemize}
\end{lemma}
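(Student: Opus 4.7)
The plan is to mirror the structure of the proof of Lemma \ref{lem:vector:implicit:sumofproduct}, exploiting the fact that the right-hand side is now a sum of scalar-vector products rather than diagonal-matrix-times-matrix-vector products, which makes everything strictly easier.

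The central observation is that for each coordinate $i$, the contribution of iterations during which $w_i$ was not changed can be collapsed to a single product. Concretely, if $w_i$ kept the same value $w_i$ from iteration $t_i$ up to the current iteration $T$, then
\[
\sum_{\ell=t_i}^{T} \beta^{(\ell)} w_i^{(\ell)} \;=\; w_i \cdot \sum_{\ell=t_i}^{T} \beta^{(\ell)} \;=\; w_i \cdot \bigl( B^{(T)} - B^{(t_i-1)} \bigr),
\]
where $B^{(t)} := \sum_{\ell=1}^{t} \beta^{(\ell)}$ is the running prefix sum of the scalars. So it suffices to maintain $B^{(t)}$ for every $t$ (a single scalar appended per \textsc{Add} call), and for each coordinate $i$ a partial sum $\hat{s}_i$ that accumulates the contribution of all iterations up to but not including $t_i$, where $t_i$ is the iteration in which $w_i$ most recently changed.

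The procedures are then: \textsc{Initialize} stores $w$, sets $T \gets 0$, $B^{(0)} \gets 0$, $t_i \gets 1$, $\hat{s}_i \gets 0$ for all $i$, in $O(m)$ time. \textsc{Add}$(\beta)$ appends $B^{(T+1)} \gets B^{(T)} + \beta$ and increments $T$, in $O(1)$ time. \textsc{Update}$(i, c)$ first ``flushes'' coordinate $i$ by $\hat{s}_i \gets \hat{s}_i + w_i \cdot (B^{(T)} - B^{(t_i-1)})$, then sets $t_i \gets T+1$ and $w_i \gets c$, in $O(1)$ time. \textsc{Query}$(i)$ returns $\hat{s}_i + w_i \cdot (B^{(T)} - B^{(t_i-1)})$, in $O(1)$ time.

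Correctness is a direct induction on the number of operations: after each operation the invariant $\hat{s}_i = \sum_{\ell=1}^{t_i-1} \beta^{(\ell)} w_i^{(\ell)}$ is preserved, so the value returned by \textsc{Query}$(i)$ equals the desired $s^{(T)}_i$ by the displayed identity above. There is no real obstacle here; the only care needed is to make sure \textsc{Update} flushes before overwriting $w_i$ and before advancing $t_i$, since otherwise the old value of $w_i$ would be lost and the invariant would break. Unlike Lemma \ref{lem:vector:implicit:sumofproduct}, we never need to touch the matrix $\mA$ and so do not incur the per-row $\nnz(a_i)$ cost, which is why all times collapse to $O(1)$.
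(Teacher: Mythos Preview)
Your proposal is correct and follows the same prefix-sum-plus-lazy-flush scheme as the paper's proof of \Cref{lem:vector:implicit:sumofvector}. In fact your formulation---storing scalar prefix sums $B^{(t)}=\sum_{\ell\le t}\beta^{(\ell)}$ and factoring out the constant $w_i$ over the interval $[t_i,T]$---is exactly what the paper's argument needs; the paper's pseudocode (\Cref{alg:vector:sumofvector}) appears to have copy-paste slips from \Cref{alg:vector:sumofproduct} that swap the roles of $w$ and $\beta$ (e.g.\ $\tilde w^{(T+1)}\gets\tilde w^{(T)}+w$ and the use of the latest $\beta$ in \textsc{Query}), which as written would neither run in $O(1)$ per \textsc{Add} nor be correct when the $\beta^{(\ell)}$ vary, so your version is the intended one.
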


\begin{proof}
The data structure stores explicitly:
\begin{itemize}
    \item $T\in\mathbb{Z}_{\ge 0}$, the current iteration (number of $w$s)
    \item For each $t\in\{0,1,\ldots,T\}$, $\tilde{w}^{(t)}=\sum_{l=1}^t w^{(l)}\in\mathbb{R}^m$, the prefix sums of $w^{(t)}$, 
    \item $t\in\mathbb{R}^m$ where $t_i$ is the most recent iteration where $w_i$ was changed. %
    \item $\hat{s}\in\mathbb{R}^m$ where $\hat{s}_i=\sum_{l=1}^{t_i-1}(\beta^{(l)}w^{(l)})_i$
\end{itemize}

\begin{algorithm2e}[t]
\caption{Sum Of Vector. \label{alg:vector:sumofvector}}
\SetKwProg{Proc}{procedure}{}{}
\Proc{\textsc{Init}$(w\in\mathbb{R}^{m})$}{
$T\gets 0$\\
$t\gets 1$ (entrywise)\\
$\tilde{w}^{(0)}\gets 0$\\
$\hat{s}\gets 0$
}
\Proc{\textsc{Update}$(i\in[m],c\in\mathbb{R})$}{
$\hat{s}_i\gets \text{Query}(i)$\\
$t_i\gets T+1$\\
$w_i\gets c$
}
\Proc{\textsc{Add}$(\beta\in\mathbb{R})$}{
$\tilde{w}^{(T+1)}\gets \tilde{w}^{(T)}+w$\\
$T\gets T+1$
}
\Proc{\textsc{Query}$(i\in[m])$}{
Return $\hat{s}_i+\beta (\tilde{w}^{(T)}_i-\tilde{w}^{(t_i-1)}_i)$
}
\end{algorithm2e}

The invariants hold when the data structure is initialized, and continue to hold after each operation.

$\textsc{Query}$ returns the right result because
\begin{align*}
    (s^{(T)})_i&=(\sum_{l=1}^{T}\beta^{(l)} w^{(l)})_i\\
    &=(\sum_{l=1}^{t_i-1}\beta^{(l)} w^{(l)})_i+(\sum_{l=t_i}^{T}\beta^{(l)} w^{(l)})_i\\
    &=\hat{s}^{(T)}_i+(\sum_{l=t_i}^{T}\beta^{(T)} w^{(l)})_i\\
    &=\hat{s}^{(T)}_i+\beta^{(T)} (\sum_{l=t_i}^{T}w^{(l)})_i\\
    &=\hat{s}^{(T)}_i+\beta^{(T)} (\tilde{w}^{(T)}-\tilde{w}^{(t_i-1)})_i
\end{align*}

The time complexity of each operation is evident from the pseudocode. %

\end{proof}

To detect which entries of $s$ change a lot from one iteration to the next, we use the following data structure by \cite{BrandLN+20}. This data structure is the only graph theoretic tool used in our algorithm and internally relies on the expander decomposition technique \cite{NanongkaiS17,NanongkaiSW17,WulffNilsen17,SaranurakW19,HuaKGW22,BrandGJLLPS22}.
In \cite{BrandLN+20} a randomized dynamic expander decomposition was used, but deterministic variants exist as well \cite{ChuzhoyGLNPS20}. The randomized variant would need $\tilde{O}(1)$ time per \textsc{Scale}, whereas the deterministic variant is only subpolynomial.
\begin{lemma}[\cite{BrandLN+20}]\label{lem:vector:heavyhitter}
    There exists a deterministic data structure with the following operations
    \begin{itemize}
        \item \textsc{Initialize}$(\mA\in\R^{m\times n},g\in\R^m_{\ge0})$
        Initialize on the given incidence matrix $\mA$, edge weights $g$ in $\hO(m)$ time.
        \item \textsc{Scale}$(e\in [m], \delta\in\R_{\ge0})$ Set $g_e \leftarrow \delta$ in $\hO(1)$ amortized time.
        \item \textsc{Query}$(h\in\R^n,\epsilon>0)$
        Return all indices $I\subset[m]$ with $|g_i (\mA h)_i| > \epsilon$ in $\tO(\|\mG \mA h\|_2^2/\epsilon^2 + n \log W)$ time, where $W$ is the ratio of largest to smallest non-zero entry in $g$.
        At most $\tO(\|\mG \mA h\|_2^2/\epsilon^2)$ indices are returned.
    \end{itemize}
\end{lemma}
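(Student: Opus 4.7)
The plan is to prove this as a combination of (a) a static heavy-hitter subroutine on an expander, and (b) a dynamic hierarchical expander decomposition that reduces the general case to the expander case while handling \textsc{Scale} operations.

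First I would handle the weights by bucketing. Partition the edges into $O(\log W)$ classes $E_c := \{ e : g_e \in [2^c, 2^{c+1}) \}$, plus one class for zero-weight edges which we ignore. Each \textsc{Scale} operation moves one edge from one class to another. For a fixed class, all weights are within a factor of $2$, so detecting $|g_e (\mA h)_e| > \epsilon$ reduces (up to constants) to detecting $|(\mA h)_e| > \epsilon/2^c$ on the unweighted subgraph $G_c$ induced by $E_c$. Maintain for each class a dynamic expander decomposition of $G_c$ (e.g.\ SaranurakW19 / ChuzhoyGLNPS20 in the deterministic variant), which supplies a hierarchical partition of $V(G_c)$ into expander clusters of conductance $\phi = 1/m^{o(1)}$ and depth $O(\log m)$, and absorbs the edge-insertion/deletion caused by \textsc{Scale} in $m^{o(1)}$ amortized time.

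For \textsc{Query}, I would process each weight class independently with threshold $\epsilon_c := \epsilon/2^{c+1}$. Within a single expander cluster $C$, the key fact (essentially Cheeger's inequality for $\mA_C$) is that for every $h|_C$ and every value $\alpha$, the number of vertices $v \in C$ with $|h_v - \alpha| > \tau$ is bounded by $O(\phi^{-2} \|\mA_C h\|_2^2 / \tau^2)$; and any edge $(u,v) \in C$ with $|h_u - h_v| > \epsilon_c$ must have at least one endpoint satisfying $|h_u - \alpha| > \epsilon_c/2$ with $\alpha$ chosen as (say) the median or cluster-average of $h$ restricted to $C$. So for each cluster, compute $\alpha_C$ via the precomputed cluster structure (which can be maintained as aggregated sums on a spanning tree of each cluster, updated per \textsc{Scale}), identify the $\tilde O(\|\mA_C h\|_2^2/\epsilon_c^2)$ candidate vertices, and for each candidate examine its incident edges inside $C$. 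Sum $\|\mA_C h\|_2^2$ across the clusters of level-$\ell$ in the hierarchy equals $\|\mA_{c,\ell} h\|_2^2$, and sum across levels gives $O(\log m)$-factor blow-up; the inter-cluster edges form only an $m^{o(1)}$ fraction and can be examined directly, since at each level the boundary has size $m^{o(1)} \cdot |C|$. The $n \log W$ additive cost absorbs the once-per-query cost of computing the aggregates $\alpha_C$ and of paying $O(\log W)$ per weight class across the $O(n)$ clusters that are nonempty at any moment.

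\textsc{Scale} is handled by removing $e$ from its old class and inserting into its new one, forwarding to the two dynamic expander decompositions; amortized this is $m^{o(1)}$, written as $\hO(1)$. \textsc{Initialize} builds the $O(\log W)$ decompositions from scratch in $\hO(m)$.

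The main obstacle is the candidate-enumeration step: naively reading all edges incident to candidate vertices of one cluster is too slow, since a candidate vertex may have large degree \emph{inside} $C$. This has to be controlled by charging each inspected incident edge $(u,v)$ either to an actual heavy hitter (of which there are at most $\|\mG\mA h\|_2^2/\epsilon^2$) or to the $\ell_2^2$ mass $\|\mA_C h\|_2^2$ via the expander inequality, which is what makes the total candidate + examination work $\tilde O(\|\mG \mA h\|_2^2/\epsilon^2)$ rather than proportional to candidate-degree. Making the charging argument tight across the full hierarchy, and verifying that dynamic maintenance of cluster aggregates under \textsc{Scale} respects the $\hO(1)$ amortized bound, is the delicate part; once that is in place, the Cheeger-based "few-heavy-vertices" lemma and the standard expander-decomposition toolkit yield the stated bounds.
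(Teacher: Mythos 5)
First, note that the paper does not prove this lemma at all: it is imported verbatim from \cite{BrandLN+20} (the only in-paper addition is the remark about incidence matrices with one column deleted). So the comparison here is against the construction in the cited work, and your proposal is in fact a reconstruction of that construction along the same lines: bucket the weights into $O(\log W)$ classes, maintain a dynamic (deterministic) expander-decomposition hierarchy per class so that \textsc{Scale} is an edge deletion plus insertion, and answer \textsc{Query} cluster-by-cluster via a Cheeger-type ``few far-from-average vertices'' bound. The bucketing, the per-class threshold $\epsilon_c$, the triangle-inequality reduction from heavy edges to deviating endpoints, the edge-disjointness of clusters within a level giving $\sum_C \|\mA_C h\|_2^2 \le \|\mA h\|_2^2$, and the $n\log W$ accounting for per-cluster aggregates are all consistent with how the cited data structure works.

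There is, however, a genuine gap at exactly the step you flag and then defer. The vertex-count bound you state --- at most $O(\phi^{-2}\|\mA_C h\|_2^2/\tau^2)$ vertices with $|h_v-\alpha_C|>\tau$ --- does not control the query time, because the cost of the enumeration step is the \emph{total degree} of the candidate vertices inside $C$, and a candidate may have degree $\Theta(|C|)$. Your proposed fix (``charge each inspected edge to an actual heavy hitter or to the $\ell_2^2$ mass'') does not work as stated: an inspected edge incident to a candidate vertex need not be heavy, and nothing in the unweighted vertex bound pays for it. The missing ingredient is the \emph{degree-weighted} form of the spectral inequality: taking $\alpha_C$ to be the degree-weighted mean of $h$ on $C$, a $\phi$-expander satisfies $\sum_{v\in C}\deg_C(v)\,(h_v-\alpha_C)^2 \le O(\phi^{-2})\,\|\mA_C h\|_2^2$, so the total degree of vertices with $|h_v-\alpha_C|>\epsilon_c/2$ is $O(\phi^{-2}\|\mA_C h\|_2^2/\epsilon_c^2)$, which is precisely what makes ``enumerate all incident edges of all candidates'' affordable. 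Without replacing your unweighted count by this degree-weighted one, the charging argument does not close. A second, smaller issue: you cannot ``examine the inter-cluster edges directly,'' since each level of the decomposition leaves $\approx \phi m\log m = m^{1-o(1)}$ such edges, which is far too many to touch per query; in the hierarchy these edges are instead recursed upon and become intra-cluster at a deeper level (every edge is settled at exactly one level, which is also what justifies your $O(\log m)$ blow-up across levels).
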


We remark that when we use these data structures, matrix $\mA$ is actually not an incidence matrix. 
Instead, $\mA$ is an incidence matrix \emph{of which one columns was deleted} (\Cref{lem:invertible}). 
The above data structure still works in that case, because after deleting a column of an incidence matrix, the remaining matrix is an incidence matrix where at most $O(n)$ rows contain just a single $\pm 1$ entry. 
The indices of large entries $|g_i (\mA h)_i| > \epsilon$ can be returned trivially for those rows by just checking $|g_i h_j| > \epsilon$.

\subsection{Proof of \Cref{lem:vector:maintenance}}
\label{sec:vector:maintain}

We now prove \Cref{lem:vector:maintenance} using the data structure from the previous \Cref{sec:vector:subroutines}.
The proof-idea is that if an entry $s_i$ changed by $\Omega(\epsilon_i)$ for some $i$, then we set $\os_i \leftarrow s_i$.
Since other entries $s_j$ did not change by some $\Omega(\epsilon_i)$, the old value of $\os_j$ is still a valid approximation.
This way we can maintain $\os$ in sublinear time per iteration, because we only change a few entries in each iteration.

Note that an entry $s_i$ might also change by some $\Omega(\epsilon_i)$ over a longer time interval but only a little in each iteration. 
To maintain $\os$ with $|\os_i - s_i| < \epsilon_i$ for these slowly changing entries, we do the following:
Every $2^\ell$ iterations, we detect all entries $i$ where $s_i$ changed by some $\Omega(\epsilon_i/\log m)$ over the past $2^\ell$ iterations.
This is done for all $0\le \ell \le \log \sqrt{m}$.
If an entry changed sufficiently, we set $\os_i \leftarrow s_i$.
This way we make sure that $\os_i$ is always at most $\epsilon_i$ away from $s_i$, because $s_i$ can never change too much without our data structure updating $\os_i$.

\begin{algorithm2e}[t]
\caption{Vector Maintenance. \label{alg:vector:maintenance}}
\SetKwProg{Proc}{procedure}{}{}
\SetKwProg{Params}{parameters}{}{}
\Params{and their initial values}{
	$t = 0$, $\os=s^{(0)}$, $F^k = \emptyset$ for $k=0,...,\log\sqrt{m}$.
}
\Proc{\textsc{Update}$(i\in[m],c,c',c'')$}{
    Pass the calls to the data structures \Cref{lem:vector:implicit:sumofproduct,lem:vector:implicit:sumofvector}.\\
    $F^k \leftarrow F^k \cup \{i\}$ for $k=0,...,\log m$ \\ %
    Update $d_i,d'_i,w_i\leftarrow c,c',c''$.
}
\Proc{\textsc{SetAccuracy}$(i\in[m],\delta)$}{
    $F^k \leftarrow F^k \cup \{i\}$ for $k=0,...,\log m$ \\ %
    $\epsilon_i \leftarrow \delta$
}
\Proc{$\textsc{Add}(h,h'\in\R^n, \beta>0)$}{
	$t \leftarrow t+1$\\
	Update implicit $s^{(t)}$ via \Cref{lem:vector:implicit:sumofproduct,lem:vector:implicit:sumofvector}.\\
	\For{$k \le \log \sqrt{m}$ with $2^k$ divides $t$\label{line:vector:k}}{
		Let $g_i = d_i/\epsilon_i$, $g'_i=d'_i/\epsilon_i$ for $i \notin F^k$, $g_i = g'_i = 0$ for $i \in F^k$ \label{line:Vector:update_g}\\
		\tcp{$i\in F^k$ are indices for which we called \textsc{SetAccuracy} or \textsc{Update} over the past $2^k-1$ iterations.}
		\tcp{We now search for indices $i\notin F^k$ where we must update $\os_i$.}
		$I_k \leftarrow$ set of $i$ with $|(\mG \mA (\sum_{\ell=2^k+1}^t h^{(\ell)})_i|>1/(30\log m)$ or $|(\mG' \mA (\sum_{\ell=2^k+1}^t h'^{(\ell)})_i|>1/(30\log m)$. \label{line:vector:heavyhitter}\\
		$I_k \leftarrow I_k \cup$ set of $i \subset [n]\setminus F^k$ with 
		$|w_i\sum_{\ell=t-2^k+1}^t\beta^{(\ell)}| > \epsilon/(30\log m)$. \label{line:vector:gradient}\\
        $J_k \leftarrow I_k \cup F^k$ \\
        Set $\os_i$ to $s_i^{(t)}$ for $i\in J_k$. \label{line:vector:update_os} \\
        $F^k \leftarrow \emptyset$ 
	}
	\Return $\ox$, $\os$
}
\end{algorithm2e}

The algorithm is given by \Cref{alg:vector:maintenance}.
We first prove correctness, i.e.~that $\os$ returned by the $t$th call to \textsc{Add} satisfies $|\os_i - s_i| < \epsilon_i$.

\begin{lemma}
The output $\os \in \R^n$ 
returned by the $t$th call to \textsc{Add}
satisfies $\os \approx s^{(t)}$.
\end{lemma}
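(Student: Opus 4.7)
The plan is to fix an arbitrary index $i \in [m]$ and analyze the drift of $\os_i$ between explicit resets. Let $t' \le t$ be the most recent iteration at which \Cref{line:vector:update_os} last set $\os_i \leftarrow s_i^{(t')}$ via the exact query from \Cref{lem:vector:implicit:sumofproduct,lem:vector:implicit:sumofvector}; if no such reset ever occurred take $t'=0$, so $\os_i = s_i^{(0)}$ by initialization. Because the reset reads $s_i^{(t')}$ exactly from the implicit data structures, it suffices to bound $|s_i^{(t)} - s_i^{(t')}| \le \epsilon_i$.

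First I would argue that $d_i$, $w_i$, and $\epsilon_i$ do not change anywhere in $(t', t]$. Indeed, every \textsc{Update} or \textsc{SetAccuracy} call on $i$ inserts $i$ into all of the $F^k$, and since $J_k \supseteq F^k$ the next call to \textsc{Add} in the batch (which always hits $k = 0$) would reset $\os_i$, contradicting the maximality of $t'$. With these parameters constant on $(t',t]$, the drift collapses to
\[ s_i^{(t)} - s_i^{(t')} \;=\; d_i \Bigl(\mA \!\!\sum_{\ell=t'+1}^{t} h^{(\ell)}\Bigr)_i \;+\; w_i \!\!\sum_{\ell=t'+1}^{t} \beta^{(\ell)}, \]
and any additional primed contribution is treated in exactly the same way.

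The heart of the argument is a dyadic decomposition. Using $T \le \sqrt{m}$, I would write $(t', t]$ as a disjoint union of $r = O(\log m)$ sub-intervals $(t_{j-1}, t_j]$ of length $2^{k_j}$ with $2^{k_j} \mid t_j$ and $k_j \le \log \sqrt{m}$, via the standard binary expansion of $t - t'$. At each such $t_j$ the algorithm executed \Cref{line:vector:heavyhitter,line:vector:gradient} at scale $k = k_j$; since $d_i$ and $\epsilon_i$ are unchanged on $(t_{j-1}, t_j] \subseteq (t', t]$ we have $i \notin F^{k_j}$, so the heavy-hitter actually uses $g_i = d_i/\epsilon_i$, and the fact that $i \notin J_{k_j} = I_{k_j} \cup F^{k_j}$ means both threshold tests failed, giving
\[ \Bigl| d_i \bigl(\mA \!\!\!\sum_{\ell=t_{j-1}+1}^{t_j} h^{(\ell)}\bigr)_i \Bigr| \le \tfrac{\epsilon_i}{30 \log m}, \qquad \Bigl| w_i \!\!\!\sum_{\ell=t_{j-1}+1}^{t_j} \beta^{(\ell)} \Bigr| \le \tfrac{\epsilon_i}{30 \log m}. \]
Summing by the triangle inequality across the $O(\log m)$ pieces and across the two contributions yields $|s_i^{(t)} - s_i^{(t')}| \le \epsilon_i / 15 < \epsilon_i$, as required.

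The main obstacle, and the design rationale behind \Cref{alg:vector:maintenance}, is making the dyadic decomposition line up with iterations at which the algorithm actually ran a heavy-hitter check at the appropriate scale. This is precisely why \Cref{line:vector:k} iterates over every $k$ with $2^k \mid t$ rather than just the maximal such $k$, and why capping the horizon at $T \le \sqrt{m}$ guarantees that every scale $k_j$ produced by the decomposition satisfies $k_j \le \log \sqrt{m}$ and is therefore actually examined.
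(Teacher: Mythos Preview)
Your proof is correct and follows essentially the same dyadic-decomposition argument as the paper: fix the last reset time $t'$, argue that $d_i,w_i,\epsilon_i$ are frozen on $(t',t]$ via the $F^k$ mechanism, cover $(t',t]$ by $O(\log m)$ aligned power-of-two blocks, and use the failed threshold tests on each block together with the triangle inequality. One small wording caveat: the required cover is the standard segment-tree canonical decomposition (intervals $(t_j-2^{k_j},t_j]$ with $2^{k_j}\mid t_j$), not literally the binary expansion of $t-t'$---the latter need not produce aligned endpoints and the piece count can be up to $2\log(t-t')$---but your $r=O(\log m)$ bound and the rest of the argument are unaffected.
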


\begin{proof}
We maintain $s^{(t)}$ implicitly via \Cref{lem:vector:implicit:sumofproduct,lem:vector:implicit:sumofvector}.
We now argue that we set $\os_i \leftarrow s^{(t)}_i$ 
before $|\os_i - s^t_i| > \epsilon$ 
can occur.

Consider a loop of \Cref{line:vector:k} for some $k$.
Note that set $I_k$ contains all $i\notin F^k$ where
\begin{align}
|s^{(t)}_i - s^{(t-2^k)}_i|
=
|\mG(\mA \sum_{\ell=2^k+1}^t h^{(\ell)})_i + \mG'(\mA \sum_{\ell=2^k+1}^t h'^{(\ell)})_i + \sum_{\ell=2^k+1}^t (w^{(\ell)} \beta^{\ell})_i| 
> 
\epsilon_i / (10\log n).
\label{eq:heavyhitter}
\end{align}
Here the first equality uses the fact that for $i\notin F^k$ the value of $d^{(\ell)}_i,d'^{(\ell)}_i,w^{(\ell)}_i$ stayed the same for $\ell=t-2^k+1,...,t$.

As $J_k = I_k \cup F^k$, the set $J_k$ contains all $i\in[n]$ that satisfy \eqref{eq:heavyhitter}.
The algorithms sets $\os_i \leftarrow s^{(t)}_i$ for all $i\in J_k$, so for these indices $i$ the approximation guarantee $|\os_i - s^{(t)}_i| < \epsilon_i$ holds.

Now consider $i\notin J_k$ and let $t'<t$ be the last time we set $\os_i \leftarrow s^{(t')}_i$.
Then there is a sequence of length at most $\log(t'-t)$ many $t_1<t_2<...<t_p$
where $t_j - t_{j-1}$ is a power of two, and $t_1 = t'$, $t_p = t$.
More accurately, these $t_j$ are the time steps during which we previously had that
$|s^{(t_j)}_i-s^{(t_{j-1})}_i| \le \epsilon_i$
(as otherwise we would have set $\os_i \leftarrow s^{(t_j)}_i$).
Thus by triangle inequality we have 
$$|s^{(t)}_i - \os_i| = |s^{(t)} - s^{(t')}_i| < \log(t-t') \epsilon_i / (10 \log m).$$
Note that here we used that $\epsilon_i$ stayed the same as otherwise $i$ was added to $F^1$ and $\os_i$ would have been updated.
For $t < \sqrt{m}$, this implies $|\os_i - s^{(t)}_i| < \epsilon_i$.

\end{proof}

\begin{lemma}\label{lem:vector:entrybound}
Consider an execution of \Cref{line:vector:k} in \Cref{alg:vector:maintenance}
for some $k$.
Let $I_k \subset [n]$ be the sets after line \ref{line:vector:heavyhitter} and \ref{line:vector:gradient}.
Then
$
|I_k|
=
\tO(
2^{2k}
).
$
\end{lemma}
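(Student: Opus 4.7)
The plan is to bound $|I_k|$ by treating the two populations of indices in Lines \ref{line:vector:heavyhitter} and \ref{line:vector:gradient} separately, and exploiting one key structural observation: because Lines \ref{line:Vector:update_g}--\ref{line:vector:gradient} exclude $i\in F^k$ (by zeroing out $g_i,g'_i$ on $F^k$ and by explicit set subtraction for the $w\beta$ branch), every index contributing to $I_k$ satisfies $i\notin F^k$. Since $F^k$ accumulates exactly the indices touched by \textsc{Update} or \textsc{SetAccuracy} in the window $\ell=t-2^k+1,\ldots,t$, for such $i$ the values $d_i,d'_i,w_i,\epsilon_i$ are constant across the whole window, and consequently the ``current'' diagonal used in the query agrees with $(\epsilon^{(\ell)})^{-1}\mG^{(\ell)}$ (resp.\ $(\epsilon^{(\ell)})^{-1}\mG'^{(\ell)}$, resp.\ $(\epsilon^{(\ell)})^{-1}w^{(\ell)}$) on coordinate $i$ for each $\ell$ in the window.

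For Line \ref{line:vector:heavyhitter} I would invoke \Cref{lem:vector:heavyhitter} with threshold $1/(30\log m)$, giving the bound
$$|I_k^{\text{hh}}|\le \tilde{O}\bigl(\|\mG\mA\sum_{\ell=t-2^k+1}^t h^{(\ell)}\|_2^2 + \|\mG'\mA\sum_{\ell=t-2^k+1}^t h'^{(\ell)}\|_2^2\bigr).$$
Using the constancy observation to rewrite each coordinate as a sum $\sum_\ell ((\epsilon^{(\ell)})^{-1}\mG^{(\ell)}\mA h^{(\ell)})_i$, Cauchy--Schwarz in the form $(\sum_{\ell=1}^{2^k}a_\ell)^2\le 2^k\sum_\ell a_\ell^2$ yields
$$\|\mG\mA\sum_{\ell}h^{(\ell)}\|_2^2\le 2^k\sum_{\ell=t-2^k+1}^t\|(\epsilon^{(\ell)})^{-1}\mG^{(\ell)}\mA h^{(\ell)}\|_2^2,$$
and symmetrically for $h'$. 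For Line \ref{line:vector:gradient}, a Markov-type count bounds the number of indices satisfying the threshold by $\tilde{O}(\|\sum_\ell (\epsilon^{(\ell)})^{-1}w^{(\ell)}\beta^{(\ell)}\|_2^2)$ (again thanks to constancy on $[m]\setminus F^k$), which Cauchy--Schwarz bounds by $\tilde{O}(2^k\sum_\ell\|(\epsilon^{(\ell)})^{-1}w^{(\ell)}\beta^{(\ell)}\|_2^2)$.

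Combining these, the final step is to plug in the per-iteration norm bound $\|(\epsilon^{(\ell)})^{-1}\mG^{(\ell)}\mA h^{(\ell)}\|_2^2+\|(\epsilon^{(\ell)})^{-1}\mG'^{(\ell)}\mA h'^{(\ell)}\|_2^2+\|(\epsilon^{(\ell)})^{-1}w^{(\ell)}\beta^{(\ell)}\|_2^2=O(1)$, which is the standing assumption in the IPM-driven application (cf.\ \Cref{thm:primaldual:maintenance,thm:commodity_ipm}, where the step-size guarantees $\|\mS^{-1}\delta_s\|_2,\|\mX^{-1}\delta_x\|_2=O(1)$). Each windowed sum is then $O(2^k)$, and the outer factor of $2^k$ from Cauchy--Schwarz combines to give $|I_k|=\tilde{O}(2^{2k})$.

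The main obstacle is the bookkeeping around the restriction $i\notin F^k$: without zeroing out $g_i$ on $F^k$ in Line \ref{line:Vector:update_g} the identity $\mG\mA h^{(\ell)}|_i = ((\epsilon^{(\ell)})^{-1}\mG^{(\ell)}\mA h^{(\ell)})_i$ would fail, the telescoping Cauchy--Schwarz step would break, and indices with many changes to $d_i$ could inflate the count. Once this is handled, the argument is essentially two applications of Cauchy--Schwarz against a per-iteration $O(1)$ norm bound, yielding one factor $2^k$ from summing over the window and a second from the heavy-hitter/Markov bound with threshold $\Theta(1/\log m)$.
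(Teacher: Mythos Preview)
Your proposal is correct and matches the paper's proof essentially line for line: both use that $I_k\cap F^k=\emptyset$ so the diagonal weights are constant across the window, bound $|I_k|$ by the squared norms via a Markov/indicator argument (you phrase the heavy-hitter part through \Cref{lem:vector:heavyhitter}, which gives the same bound), apply Cauchy--Schwarz over the $2^k$-length window to extract one $2^k$ factor, and then invoke the per-iteration $O(1)$ norm bound to extract the second. The only cosmetic difference is that the paper runs all three branches ($\mG\mA h$, $\mG'\mA h'$, $w\beta$) through a single chain of inequalities rather than splitting into heavy-hitter and gradient cases.
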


\begin{proof}
Note that $I_k \cap F^k = \emptyset$ and for $i\notin F^k$, the value of $d^{(\ell)}_i,d'^{(\ell)}_i,\epsilon^{(\ell)}_i$ stayed the same for $\ell=t-2^k+1,...,t$, so we can write
\begin{align*}
|I_k| =&~ \sum_{i\notin F^k} \mathbf{1}_{i \in I_k} \\
\le&~
\sum_{i\notin F^k} 
\left(
(\mG\mA \sum_{\ell=t-2^k+1}^t h^{(\ell)})_i^2
+
(\mG'\mA \sum_{\ell=t-2^k+1}^t h'^{(\ell)})_i^2
+
(\epsilon_i^{-1}w_i\cdot\sum_{\ell=t-2^k}^t \beta^{(\ell)})^2\right) / (30 \log m)
\\
=&~
\tO\left(
\sum_{i\not\in F^k} 
\left(
\frac{(\sum_{\ell=t-2^k+1}^t \mD^{(\ell)}\mA h^{(\ell)})_i^2}{(\epsilon_i^{(\ell)})^2}
+
\frac{(\sum_{\ell=t-2^k+1}^t \mD'^{(\ell)}\mA h'^{(\ell)})_i^2}{(\epsilon_i^{(\ell)})^2}
+
\frac{(\sum_{\ell=t-2^k+1}^t w^{(\ell)}\beta^{(\ell)})_i^2}{(\epsilon_i^{(\ell)})^2}
\right)
\right)
\\
\le&~
\tO\left(
2^k
\sum_{\ell=t-2^k+1}^{t} (
\|(\epsilon^{(\ell)})^{-1}\mD\mA h^{(\ell)}\|^2_2
+
\|(\epsilon^{(\ell)})^{-1}\mD'\mA h'^{(\ell)}\|^2_2
+
\|(\epsilon^{(\ell)})^{-1}w^{(\ell)}\beta^{(\ell)}\|^2_2
)
\right)
\\
\le&~
\tO(2^{2k}).
\end{align*}
\end{proof}

\begin{lemma}\label{lem:vector:change_bound}
Every $2^k$ iterations, at most $\tO(2^{2k}+C)$ entries change in $\os$ where $C$ is the number of calls to \textsc{SetAccuracy} and \textsc{Update} over the past $2^k-1$ iterations.

\end{lemma}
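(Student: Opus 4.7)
The plan is to observe that entries of $\os$ only change in \Cref{line:vector:update_os}, where for a level $k'$ with $2^{k'}\mid t$ we set $\os_i\leftarrow s_i^{(t)}$ for every $i\in J_{k'}=I_{k'}\cup F^{k'}$. It therefore suffices to sum $|J_{k'}|$ over all triggerings of the inner loop that occur within a block of $2^k$ consecutive iterations and then split the bound into an $I$-part (giving the $2^{2k}$ term) and an $F$-part (giving the $C$ term).

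First I would argue that it is enough to consider a block aligned with a multiple of $2^k$, since an arbitrary window of $2^k$ iterations is covered by at most two such aligned blocks and this only costs a factor $2$. Inside an aligned block of length $2^k$, the level-$k'$ loop fires $2^{k-k'}$ times for $k'\le k$ and $0$ times for $k'>k$ (because any iteration index $t\in\{1,\ldots,2^k\}$ divisible by $2^{k'}>2^k$ would have to equal $0$). So the total contribution of the $I_{k'}$ sets across the block is
\begin{align*}
\sum_{k'=0}^{k} 2^{k-k'}\cdot |I_{k'}|
\;\le\;
\sum_{k'=0}^{k} 2^{k-k'}\cdot \tilde{O}(2^{2k'})
\;=\;
\tilde{O}\Bigl(\sum_{k'=0}^{k} 2^{k+k'}\Bigr)
\;=\;
\tilde{O}(2^{2k}),
\end{align*}
where I invoke \Cref{lem:vector:entrybound} for the bound $|I_{k'}|=\tilde{O}(2^{2k'})$.

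Next I would handle the $F$-contribution. Every call to \textsc{Update} or \textsc{SetAccuracy} adds the affected index $i$ to $F^{k'}$ for \emph{every} level $k'\in\{0,\ldots,\log\sqrt{m}\}$, and each set $F^{k'}$ is cleared after it is processed in its level-$k'$ loop. Consequently, inside our block of $2^k$ aligned iterations, a single call causes $i$ to be processed at most once per level $k'\in\{0,\ldots,k\}$, i.e.\ at most $O(\log m)$ times. Summing over the $C$ calls that took place in this window gives a total $F$-contribution of $\tilde{O}(C)$. Combining this with the $I$-part yields $\tilde{O}(2^{2k}+C)$ for an aligned block, and hence the same asymptotic bound for an arbitrary window of $2^k$ iterations.

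There is no real obstacle beyond keeping the bookkeeping straight; the mildly delicate point is the alignment argument (why it suffices to analyse windows of length $2^k$ whose starting index is a multiple of $2^k$) and the observation that higher levels $k'>k$ contribute nothing inside such an aligned block, since otherwise the $|I_{k'}|=\tilde{O}(2^{2k'})$ bound would blow past the claimed $\tilde{O}(2^{2k})$.
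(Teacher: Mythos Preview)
Your reading of the statement is stronger than what the paper intends and what its proof establishes. In the paper, ``every $2^k$ iterations'' refers to the single level-$k$ pass of the inner loop (which fires precisely when $2^k\mid t$): that pass touches exactly the indices in $J_k=I_k\cup F^k$, so the proof is one line---$|I_k|=\tO(2^{2k})$ by \Cref{lem:vector:entrybound} and $|F^k|=C$ by definition. There is no summation over levels and no windowing.

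Your version---bounding the \emph{total} number of changes to $\os$ over an arbitrary window of $2^k$ consecutive iterations---is in fact false, and the place where your argument breaks is the claim that levels $k'>k$ never fire inside an aligned block. That holds only for the very first block $\{1,\ldots,2^k\}$. For a general aligned block ending at $t=a\cdot 2^k$, level $k'>k$ fires whenever $2^{k'-k}\mid a$; e.g.\ the block ending at $t=\sqrt{m}$ triggers \emph{every} level up to $\log\sqrt m$, and the single level-$\log\sqrt m$ pass alone can change $\tO(m)$ entries, swamping $\tO(2^{2k})$ for small $k$. So the alignment trick of covering an arbitrary window by two aligned blocks does not rescue the argument: the property you need for aligned blocks simply does not hold beyond the first one. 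The fix is to interpret the lemma as the paper does---a per-level, per-firing bound---after which the proof is immediate.
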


\begin{proof}
After $2^k$ iterations we change the entries $\os_i$ for $i \in F^k$ and those $i$ detected as large in \Cref{line:vector:heavyhitter,line:vector:gradient} (i.e. $i \in I_k$).
For any such $k$, the number of entries in $I_k$
is bounded by $\tO(2^{2k})$ by \Cref{lem:vector:entrybound}.
The indices in $F^k$ are those for which we called \textsc{SetAccuracy} or \textsc{Update} between the past $2^k-1$ calls to \textsc{Add}.
Hence we obtain the bound $\tO(2^{2k}+C)$ on the number of changed entries in $\os$.

\end{proof}

\begin{lemma}
    The amortized cost of a call to \textsc{SetAccuracy} and \textsc{Update} is $\hO(1)$.
\end{lemma}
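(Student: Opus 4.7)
The plan is to isolate the direct (immediate) work performed inside \textsc{SetAccuracy} and \textsc{Update}, and then amortize the deferred work that these calls create for a future \textsc{Add} against the call that triggered it. Both operations should end up costing $\hO(1)$ amortized because $\mA$ is an incidence matrix, so every row has $\nnz(a_i)=O(1)$, and all the expensive subroutines we invoke charge $\hO(1)$ amortized per update.

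First I would bound the direct cost. A call to \textsc{SetAccuracy}$(i,\delta)$ sets $\epsilon_i\leftarrow\delta$ in $O(1)$ time and inserts $i$ into each of the $F^0,F^1,\ldots,F^{\log\sqrt{m}}$ sets, giving $O(\log m)$ work. A call to \textsc{Update}$(i,c,c',c'')$ does the same insertion into the $F^k$ sets, updates the three scalar entries, and forwards the update to the implicit-sum data structures of \Cref{lem:vector:implicit:sumofproduct,lem:vector:implicit:sumofvector}, each of which uses $O(\nnz(a_i))+O(1)=O(1)$ time since $\mA$ is an incidence matrix. So the direct cost is $\tO(1)$ for both operations.

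Next, I would account for the deferred work. The only subsequent consequence of inserting $i$ into $F^k$ is that, the next time the loop on \Cref{line:vector:k} of \textsc{Add} processes this level, the index $i$ is placed in $J_k$ and contributes to three sub-tasks: (a) the weight $g_i$ (and $g'_i$) in the heavy-hitter data structure \Cref{lem:vector:heavyhitter} must be zeroed and then later restored via two \textsc{Scale} calls, each $\hO(1)$ amortized; (b) the entry $\os_i$ must be overwritten with $s^{(t)}_i$, which requires one query to each implicit-sum data structure at $O(\nnz(a_i))+O(1)=O(1)$ time; (c) the set $F^k$ is cleared, absorbing the $O(1)$ cost of removing $i$ from $F^k$. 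The index $i$ appears in at most $O(\log m)$ different $F^k$ sets as a result of one triggering call, and it is removed from each $F^k$ after a single processing. Therefore the total deferred work charged to one call to \textsc{SetAccuracy} or \textsc{Update} is $O(\log m)\cdot\hO(1)=\hO(1)$.

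Summing direct and deferred costs gives $\hO(1)$ amortized per call, proving the lemma. The only subtle point, and the main place where one should be careful, is that the heavy-hitter data structure's \textsc{Scale} has an \emph{amortized} $\hO(1)$ bound (from \Cref{lem:vector:heavyhitter}); one must confirm that stacking this amortization on top of our own does not create a circular potential argument. Since the potentials are independent (\Cref{lem:vector:heavyhitter} amortizes across its own \textsc{Scale} operations, whereas we amortize across $O(\log m)$ levels of $F^k$), the two arguments compose cleanly.
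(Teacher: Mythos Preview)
Your argument is correct and follows essentially the same approach as the paper: you separate the immediate work from the deferred work created by inserting $i$ into each $F^k$, charge the resulting \textsc{Scale} calls to \Cref{lem:vector:heavyhitter} and \textsc{Query} calls to \Cref{lem:vector:implicit:sumofproduct,lem:vector:implicit:sumofvector} back to the triggering call, and multiply by $O(\log m)$ for the number of levels. Your write-up is somewhat more detailed (e.g., the remark on composing the two amortizations), but the underlying accounting is identical to the paper's.
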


\begin{proof}
    A call to \textsc{Update} of \Cref{lem:vector:implicit:sumofproduct,lem:vector:implicit:sumofvector} has $\hO(1)$ amortized cost each.
    Adding the index to $F^k$ later causes a call to \textsc{Scale} of \Cref{lem:vector:heavyhitter} in \Cref{line:Vector:update_g} because we must update the vector $g$.
    and a call to \textsc{Query} of \Cref{lem:vector:implicit:sumofproduct,lem:vector:implicit:sumofvector} in \Cref{line:vector:update_os}.
    As these methods also have cost $\hO(1)$ and there are only $O(\log n)$ different $F^k$, in total we have $\hO(1)$ amortized cost.
\end{proof}

\begin{lemma}
For the first $\sqrt{m}$ calls to \textsc{Add}, the amortized cost per call is
$$\tO\left(
m^{1/2+o(1)} + n \log W
\right)$$
where $W$ is a bound on the ratios of largest to smallest entry in $d/\epsilon$.
\end{lemma}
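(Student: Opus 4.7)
The plan is to account for the cost of one call to \textsc{Add} by splitting it into (i) a constant amount of work that updates the implicit representations of \Cref{lem:vector:implicit:sumofproduct,lem:vector:implicit:sumofvector} and increments $t$, and (ii) the work triggered by each level $k \in \{0,1,\dots,\log\sqrt{m}\}$ for which $2^k \mid t$. Since over $T \le \sqrt{m}$ calls the outer loop at level $k$ is entered only $T/2^k \le \sqrt{m}/2^k$ times, an amortization over levels will give the claimed bound.

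First, I would bound the per-level work. Each entry into the level-$k$ block does three things: updates the weight vector $g$ in \Cref{lem:vector:heavyhitter} for the $i \in F^k$ indices (at $\hO(1)$ each, already amortized against the \textsc{SetAccuracy}/\textsc{Update} call that placed $i$ in $F^k$), performs two heavy-hitter queries (\Cref{line:vector:heavyhitter} in \Cref{alg:vector:maintenance}) and one threshold check (\Cref{line:vector:gradient}), and then refreshes $\os_i$ for $i \in J_k = I_k \cup F^k$ via calls to \textsc{Query} of \Cref{lem:vector:implicit:sumofproduct,lem:vector:implicit:sumofvector}. By \Cref{lem:vector:heavyhitter}, the heavy-hitter query costs $\tO(\|\mG\mA \sum_\ell h^{(\ell)}\|_2^2/\epsilon^2 + n\log W)$; the number of returned indices is bounded by \Cref{lem:vector:entrybound} to be $|I_k| = \tO(2^{2k})$, and this same bound controls the cost term since heavy-hitter time is essentially (output size) $+ \, n \log W$. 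For incidence matrices every row of $\mA$ has $\nnz(a_i) = O(1)$, so each of the $|J_k|$ \textsc{Query} calls costs $O(1)$, giving a refresh cost of $\tO(|J_k|)$ per level-$k$ execution.

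Next, I would sum over all levels and all $T \le \sqrt{m}$ iterations:
\begin{align*}
\text{Total cost} \;&\le\; \sum_{k=0}^{\log\sqrt{m}} \frac{\sqrt{m}}{2^k}\cdot \hO\bigl(2^{2k} + n\log W\bigr) \\
\;&=\; \hO\!\left(\sqrt{m}\sum_{k} 2^{k}\right) + \hO\!\left(n\log W \cdot \sum_{k} \frac{\sqrt{m}}{2^k}\right) \\
\;&=\; \hO(m) + \hO(\sqrt{m}\, n \log W).
\end{align*}
Adding the $\hO(m)$ initialization cost and the $\hO(1)$ amortized cost already charged to each \textsc{SetAccuracy}/\textsc{Update}, and dividing by $T = \sqrt{m}$, yields amortized cost $\hO(\sqrt{m} + n\log W) = \tO(m^{1/2+o(1)} + n\log W)$ per \textsc{Add}.

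\textbf{Main obstacle.} The delicate point is justifying that the geometric decay $\sqrt{m}/2^k$ of level-$k$ invocations exactly cancels the geometric growth $2^{2k}$ of the bound on $|I_k|$; this is where \Cref{lem:vector:entrybound} is crucial, as it converts the second-moment norm bound from the hypotheses of \Cref{lem:vector:maintenance} into a uniform bound on the number of indices whose $\os_i$ must be refreshed at level $k$. One must also be careful that the extra indices in $F^k$ (i.e.\ those placed there by \textsc{SetAccuracy}/\textsc{Update}) do not ruin the bound: each such index contributes to at most $O(\log m)$ levels, but is charged only $\hO(1)$ per level by \Cref{lem:vector:heavyhitter}'s \textsc{Scale} and \Cref{lem:vector:implicit:sumofproduct}'s \textsc{Query}, so the total charge per \textsc{SetAccuracy}/\textsc{Update} is $\hO(1)$, absorbed into that operation's amortized cost and not into the \textsc{Add} cost.
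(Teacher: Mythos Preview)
Your proposal is correct and follows essentially the same approach as the paper's proof: bound the per-level-$k$ work by $\tO(2^{2k} + n\log W)$ (the paper redoes the Cauchy--Schwarz step inline, while you invoke \Cref{lem:vector:entrybound}, which is equivalent), then amortize via the geometric sum $\sum_k (\sqrt{m}/2^k)\cdot 2^{2k} = O(m)$, and charge the $F^k$ work back to \textsc{SetAccuracy}/\textsc{Update}. One minor inaccuracy in both your write-up and the paper: the \textsc{Add} call to \Cref{lem:vector:implicit:sumofproduct} costs $O(n)$, not $O(1)$, but this is absorbed into the $n\log W$ term anyway.
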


\begin{proof}
A call to \textsc{Add} also causes a call to \textsc{Add} of \Cref{lem:vector:implicit:sumofproduct,lem:vector:implicit:sumofvector} which has $O(1)$ complexity.

Now consider a loop of \Cref{line:vector:k}  for some $k$.
Detecting indices $i$ in \Cref{line:vector:heavyhitter} takes $$\tO(\|\mG\mA\sum_{\ell=2^k+1}^th^{(\ell)}\|_2^2)$$
by \Cref{lem:vector:heavyhitter}.
This can be bounded by
$$
\tO(2^{2k} (\sum_{\ell=t-2^k+1}^t \|(\epsilon^{(\ell)})^{-1}\mD^{(\ell)}h^{(\ell)})\|_2^2 + n \log W)
$$
via Cauchy-Schwarz and the fact that $\epsilon^{(\ell)}_i,d^{(\ell)}_i$ stayed the same for $\ell=t-2^k+1,...,t$ and $i\notin F^k$.
As we consider a $k$ at most once every $2^k$ iterations and we are promised that the norm above is bounded by a constant,
and further $0\le k\le \log \sqrt{m}$
this leads to amortized cost of $\tO(\sqrt{m} + n \log W)$ per call to \textsc{Add}.

Note that this assumes we previously called \textsc{Scale} of \Cref{lem:vector:heavyhitter} to insert the right diagonal matrix $\mG$.
This cost was already charged to \textsc{SetAccuracy} and \textsc{Update}.

The cost of \Cref{line:vector:gradient} is just the number of returned indices.
This is because we can implement the task of finding $i\notin F^k$ with $|w_i\sum_{\ell=t-2^k+1}^t \beta^{(\ell)}$ via a priority queue.
Simply maintain an order of the $w_i$ for all $i\notin F^k$ throughout all calls to \textsc{Update} and \textsc{SetAccuracy}.
This adds only $\hO(1)$ amortized cost to \textsc{Update} and \textsc{SetAccuracy}.

Updating $\os_i$ in \Cref{line:vector:update_os} takes $O(1)$ per index and there are at most $\tO(2^{2k})$ such indices every $2^k$ iterations by \Cref{lem:vector:change_bound}. (Where the cost incurred by previous calls to \textsc{Update} and \textsc{SetAccuracy} is again charged as amortized cost to those functions.)
For $2^k\le\sqrt{m}$ this again implies $\tO(m^{1/2+o(1)}+n\log W)$ amortized cost per call to \textsc{Add}.
\end{proof}

\end{document}